\title{Near Optimal Coded Data Shuffling for Distributed Learning\vspace{10pt}
\author{Mohamed Adel Attia \qquad Ravi Tandon}
\affil{Department of Electrical and Computer Engineering\\
University of Arizona, Tucson, AZ, USA.\\
E-mail: {\{madel, tandonr\}}@email.arizona.edu}}
\newcommand{\A}{{\mathcal{A}}}
\newcommand{\Z}{{\mathbf{Z}}}
\newtheorem{definition}{Definition}
\newtheorem{theorem}{Theorem}
\newtheorem{lemma}{Lemma}
\newtheorem{claim}{Claim}
\newtheorem{remark}{Remark}
\newtheorem{example}{Example}
\newtheorem{corollary}[theorem]{Corollary}
\newenvironment{proof}[1]{\medskip\par\noindent
{\bf Proof:\,}\,#1}{{\mbox{\,$\blacksquare$}\par}}
\begin{document}
\maketitle
\newcommand\blfootnote[1]{%
  \begingroup
  \renewcommand\thefootnote{}\footnote{#1}%
  \addtocounter{footnote}{-1}%
  \endgroup
}

\blfootnote{This work was supported by the NSF Grant  CAREER-1651492.}

\thispagestyle{empty}

\begin{abstract}
Data shuffling between distributed cluster of nodes is one of the critical steps in implementing large-scale learning algorithms. Randomly shuffling the data-set among a cluster of workers allows different nodes to obtain fresh data assignments at each learning epoch. This process has been shown to provide
 improvements in the learning process. However, the statistical benefits of distributed data shuffling come at the cost of extra communication overhead from the master node to worker nodes, and can act as one of the major bottlenecks in the overall time for computation. 
 There has been significant recent interest in devising approaches to minimize this communication overhead. One approach is to provision for extra storage at the computing nodes. The other emerging approach is to leverage coded communication to minimize the overall communication overhead.

  The focus of this work is to understand the fundamental trade-off between the amount of storage and the communication overhead for distributed data shuffling. In this work, we first present an information theoretic formulation for the data shuffling problem, accounting for the underlying problem parameters (number of workers, $K$, number of data points, $N$, and the available storage, $S$ per node). We then present an information theoretic lower bound on the communication overhead for data shuffling as a function of these parameters. We next present a novel coded communication scheme and show that the resulting communication overhead of the proposed scheme is within a multiplicative factor of at most $\frac{K}{K-1}$ from the information-theoretic lower bound. Furthermore, we present the aligned coded shuffling scheme for some storage values, which achieves the optimal storage vs communication trade-off for $K<5$, and further reduces the maximum multiplicative gap down to $\frac{K-\frac{1}{3}}{K-1}$, for $K\geq 5$.
\end{abstract}

\newpage

%\begin{keywords}
%
%\end{keywords}

\section{Introduction}
\label{sec:Introduction}

Owing to the parallelized nature of the distributed computing, and the abundance of computational resources over a large cluster of workers, distributed computational frameworks can enable data-intensive learning tasks and big data applications in a timely manner. 
Distributed computing comes at the unavoidable communication cost due to data transfer to the distributed machines, and the data shuffling process among the distributed workers, which is a basic building block in machine learning paradigms.
%The initial data delivery to the workers is a one time cost which is usually relieved by prioritizing the processing where the data is stored, e.g., Hadoop Distributed File System (DFS) \cite{Hadoop2010}.
The data shuffling block can arise in many applications such as: a) random shuffling of the data-set across different points before each learning epoch so that each worker is assigned new training data, which is a common practice that provides statistical benefits, e.g., distributed gradient descent algorithm and its stochastic variations \cite{SGD-tricks,RandomReshuffling1,RandomReshuffling2}; b) shuffling the data-set across attributes to assign different features (or attributes) to each worker, e.g., in mobile cloud gaming systems \cite{MobileCloudGaming}; and c) shuffling the data between the mappers and the reducers in the MapReduce framework \cite{MapReduce2004}, where the reducers are interested in collecting the data with the assigned ``key(s)" from the mappers.

Another limiting byproduct of distributing the learning process over a large number of  machines is the latency caused by the stragglers, i.e., the workers slower than the average due to several factors such as resource contention, disk failure, power limits, and heterogeneous processing capabilities \cite{Stragglers1,Stragglers2}.
The straggler problem usually limits the completion time by the slowest worker.
Several approaches to mitigate the stragglers effect include a) scheduling redundant computations in \cite{Stragglers/replicas1,Stragglers/replicas2,Stragglers/replicas2-2,Stragglers/replicas3}, such that any  unexpected tardiness or failure of a worker can be compensated by another worker doing the same computations; b) work stealing where the faster workers once they finish their tasks take over the remaining computations from the slower workers \cite{work-stealing}; and recently c) work exchange based on the work conservation principle, where coarse heterogenity knowledge/estimation can be used to reassign the work load according to the speed of the workers \cite{work-exchange2017}.

%\noindent \textbf{\underline{Review of Recent Progress on Distributed Coded Computation}}: 

A promising research has recently emerged in large scale distributed computing addressing both wired networks, where the computations are done over the cloud \cite{Hadoop2010,Spark2010}, and wireless networks, where the computations are done over small mobile machines removing the burden from the cloud \cite{wireless1,wireless2,benefit/wireless}.
Distributed computing platforms can also be classified according to the underlying network topology. In the master-worker setting, a centralized master node posses the whole data set and assigns different parts of the data to a set of distributed workers, which collaboratively learn a shared prediction model to be averaged out at the master node later; while in the worker-to-worker setting (also referred to as the MapReduce framework \cite{MapReduce2004}), the distributed workers are mapped to train different parts of the data to calculate some functions, then the reducers collects the data with the same ``key" to compute each function separately.

The application of coding theory to overcome the communication and latency bottlenecks in order to speed-up the learning process was first considered in \cite{CodedComp/Shuffle2015}.
In particular, the idea of using coded data shuffling was first proposed in \cite{CodedComp/Shuffle2015}, where excess storage at the workers was utilized to create coded broadcasting opportunities in order to reduce the communication overhead. In the same work, $(n,k)$ Maximum Distance Separable (MDS) codes were proposed for distributed matrix multiplication to mitigate the impact of stragglers.
%A fury of research followed that in order to provide coding schemes for different learning models, and to understand the fundamental limits.
Coded computation using MDS codes in presence of stragglers
was proposed in \cite{GradientCoding} for synchronous gradient descent, and \cite{Stragglers-MDSproduct,stragglers/wired/MDS1,stragglers/wired/MDS2} for linear computation tasks, e.g., matrix multiplication.
The use of Polynomial codes for high dimensional coded matrix multiplication was proposed in \cite{Stragglers-Polynomial}.
Coded computation over wireless networks was proposed in  \cite{stragglers/wirless}, where only one worker can transmit at a time. 
The use of codes to reduce the communication overhead due to data shuffling was considered in \cite{CodedMapReduce,CDC1,CDC2,CDC/Alternative,CDC/stragglers1,CDC/stragglers2,Shuffling/wired/master1,Shuffling/wired/master2,WDC1,WDC2}. 
In \cite{CodedMapReduce,CDC1,CDC2,CDC/Alternative}, the authors considered the MapReduce setting, where in order to reduce the communication between the mappers and the reducers, coding opportunities are created with more redundant computations at the mappers, leading to a trade-off between communication and computation.
\cite{CDC/stragglers1,CDC/stragglers2} provided a unified coding framework for distributed computing, where the communication load due to shuffling can be alleviated by trading the computational complexity in the presence of straggling servers. 
The information theoretic limits for data shuffling in the wired master-worker setting was considered in \cite{CodedComp/Shuffle2015,Shuffling/wired/master1,Shuffling/wired/master2}. Coded data shuffling in wireless setting was recently considered in \cite{benefit/wireless,WDC1,WDC2} for both centralized and decentralized approaches. 

\noindent \textbf{\underline{Related Works on Data Shuffling and Connections to Index Coding}}:
Using codes for random data shuffling over wired master-worker based distributed computational systems was first considered in \cite{CodedComp/Shuffle2015}. A probabilistic coding scheme was introduced showing how using excess storage can reduce the average communication overhead.
In our initial preliminary work \cite{Shuffling/wired/master1}, the optimal worst-case communication overhead was characterized as a function of the available storage for $K=2,3$ workers using a systematic storage placement, and data delivery schemes. In another work \cite{Shuffling/wired/master2}, the no-excess storage case was considered, where it was shown that even for minimum storage value coding opportunities still exist. A systematic coding scheme was developed for any number of workers, which was proven to be information theoretically optimal in the worst-case scenario.

The data shuffling problem can also be viewed as an index coding problem \cite{IndexCoding}, where the amount of data stored at the workers form the side information, and the new data assignments are the messages needed by each worker. The side information in the data shuffling problem is generally not static, where the storage of the workers can be adapted to reduce the communication overhead in the next shuffle. We propose in this work a structural invariant placement mechanism, where the storage of the workers is updated according to the latest shuffle to maintain the structure.
Furthermore, it was shown in \cite{IndexCoding} that the index coding is a NP-hard problem, and may require in the worst-case a rate of order $O(K)$, where $K$ is the number of workers. A pliable index coding approach for data shuffling was assumed in \cite{pliableIC/shuffling}, where a semi-random shuffles were considered and was shown to achieve a rate of order $O(\log^2(K))$. In this work however, we consider the worst-case rate over all possible shuffles and show that even for the minimum storage (side information at the workers), a rate of order $O(\frac{K-1}{K})$ can be achieved, which does not scale with the number of workers for large values of $K$.

\subsection{Main Contributions of this Paper}

In this paper, we focus on the coded data shuffling for the wired master-worker setting, where coding opportunities are created by exploiting the excess storage at the workers. Before each learning epoch, the data is shuffled at the master node for different training data assignment at each worker, which causes the communication overhead. On one extreme, when all the workers have enough storage to store the whole data set, then no communication is needed for any random shuffle. On the other hand, when the storage is just enough to store the assigned data, which we also refer to as the \textit{no excess storage case}, then the communication is expected to be maximal. Thus, we aim to characterize the fundamental information-theoretic trade-off between the communication overhead due to shuffling and the available storage at the distributed workers.
The contributions of this paper are summarized next:

\noindent$\bullet$ \hspace{5pt} We first derive an information theoretic lower bound on the worst-case communication overhead for the data shuffling problem. We start by obtaining a family of lower bounds on the rate of some chosen shuffles. Since the rate of any shuffle is at most as large as the worst-case shuffle, the obtained lower bounds serve as valid lower bounds for the worst-case rate as well. We then average out all the lower bounds we get using the chosen shuffles. The key step here is choosing the shuffles which lead to the best lower bound on the communication overhead as a function of the storage. In particular, we consider a set of cyclic shuffles where no overlap between the assigned data batch to any worker in the two subsequent shuffles.
Based on a novel bounding methodology similar to the recent results in the caching literature \cite{Optimality/UncodedCache,Optimality/UncodedCache2}, we are able to express the lower bound as a linear program (LP).
We then solve the LP to obtain the best lower bounds on the communication overhead for different regimes of storage.

\noindent$\bullet$ \hspace{5pt} Next, we introduce our achievable scheme based on a placement/update procedure that maintains the structure of the storage, which we refer to as \textit{``the structural invariant placement and update"}. 
The storage placement involves partitioning the data points across dimensions, which allows each worker to store at least some parts of each data point. Through a careful novel storage update, the structure of the storage can be maintained over time.
This allows for applying a data delivery mechanism similar to \cite{FundLimitsCaching2015}, which approaches the optimal worst-case communication-storage trade-off (based on the obtained lower bound) within a vanishing gap ratio of $\frac{K}{K-1}$ as the number of distributed workers $K$ increases.

\noindent$\bullet$ \hspace{5pt} Finally, we introduce new ideas on how to fully characterize the optimal worst-case communication overhead. We show that by considering more sophisticated interference alignment mechanisms, we can force the interference seen by each worker to occupy the minimum possible dimensions.
We refer to this procedure as the \textit{``Aligned Coded Shuffling"} scheme.  This scheme also involves a different structural invariant update mechanism of the storage, which is based one data partitioning and relabeling over time.
Following these ideas, we can close the gap between the obtained bounds for some storage values, which closes the gap for $K<5$, and brings the maximum gap ratio down to $\frac{K-\frac{1}{3}}{K-1}$, for $K\geq 5$.

\section{System Model}
\label{sec:System}

We assume a master node which has access to the entire data-set $\A=\{D_1,D_2,\ldots,D_N\}$ of size $Nd$ bits, i.e., $\A$ is a set containing $N$ data points, denoted by $D_1,D_2,\ldots,D_N$, where $d$ is the dimensionality of each data point. Treating the data points $D_n$ as i.i.d. random variables, we therefore have the entropies of these random variables as
\begin{align}
\label{eq:data-set}
&H(\A)=N\times H(D_n)=Nd, \quad \forall n\in \{1, 2, \ldots, N\}.
\end{align}

At each iteration, indexed by $t$, the master node divides the data-set $\A$ into $K$ data batches given as $\A^{t}(1), \A^{t}(2), \ldots, \A^t(K)$, where $\A^t(k)$ denotes the data partition designated to be processed by worker $w_k$ at time $t$, and these batches correspond to the random permutation of the data-set, $\pi^t:\A\rightarrow(\A^{t}(1), \ldots, \A^t(K))$.  Note that these data batches are disjoint, and span the whole data-set, i.e.,
\begin{subequations}
\label{eq:data-batches}
\begin{align}
&\A^t(i) \cap \A^t(j) = \phi, \quad \forall i\neq j,\\
&\A^t(1) \cup \A^t(2)\cup \ldots \cup \A^t(K) =\A, \quad \forall t.\label{eq:data-partitions}
\end{align}
\end{subequations}
Hence, the entropy of any batch $\A^t(k)$ is given as
\begin{align}
\label{eq:data-batches2}
H(\A^t(k))= \frac{1}{K} H(\A)= \frac{N}{K}d,\quad \forall k\in\{1,\ldots,K\}.
\end{align}

After getting the data batch, each worker locally computes a function (as an example, this function could correspond to the gradient or sub-gradients of the data points assigned to the worker). The local functions from the $K$ workers are processed subsequently at the master node. We assume that each worker $w_k$ has a storage $Z^t_k$ of size $Sd$ bits, for a real number $S$, which is used to store some function of the data-set. Therefore, if we consider $Z^t_k$ as a random variable then,
\begin{align}
\label{eq:cache-content}
H(Z^t_k| \A) = 0, \quad \forall k\in[1:K].
\end{align} 
For processing purposes, the assigned data blocks are needed to be stored by the workers, therefore, each worker $w_k$ must at least store the data block $\A^t(k)$ at time $t$, which gives the storage constraint as
\begin{align}
\label{eq:cache-storage}
H(Z^t_k)=Sd \geq H(\A^t(k)),\qquad \forall k\in\{1,\ldots,K\}.
\end{align}

According to (\ref{eq:data-batches2}) and (\ref{eq:cache-storage}), we get the \textit{minimum storage per worker} $S \geq \frac{N}{K}$.
We also have the \textit{processing constraint} as
\begin{equation}
\label{eq:cache-min-desired}
H(\A^{t}(k)|Z^{t}_k)=0,\quad \forall k\in\{1,\ldots,K\},
\end{equation}
which means $\A^{t}(k)$ is a deterministic function of the storage $Z^{t}_k$.

In the next epoch $t+1$, the data-set is randomly reshuffled at the master node according to a random permutation $\pi^{t+1}: \A\rightarrow(\A^{t+1}(1), \A^{t+1}(2), \ldots, \A^{t+1}(K))$, which also satisfies the properties in \eqref{eq:data-batches}. The main communication bottleneck occurs during  {{\textit{Data Delivery}}} since the master node needs to communicate the new data batches to the workers.
Trivially, if the storage (per worker) exceeds $Nd$ bits, i.e., $S\geq N$, then each worker can store the whole data-set, and no communication has to be done between the master node and the workers for any shuffle. Therefore from the constraint on minimum storage per worker, we can write the possible range for  storage  as $\frac{N}{K}\leq S \leq N$.

We next proceed to describe the data delivery mechanism, and the associated encoding and decoding functions. 
The main process can be divided into two phases, namely the data delivery phase and the storage update phase as described next: 

\subsection{Data Delivery Phase}
At time $t+1$, the master node sends a function of the data batches for the subsequent shuffles $(\pi_t,\pi_{t+1})$, $X_{\pi_t,\pi_{t+1}} = \phi(\A^{t}(1), \ldots, \A^{t}(K),\A^{t+1}(1),\ldots, \A^{t+1}(K))=\phi_{\pi_t,\pi_{t+1}}(\A)$ over the shared link, where $\phi$ is the data delivery encoding function,
\begin{equation}
\phi: \left[2^{\frac{N}{K}d}\right]^{2K} \rightarrow [2^{R_{\pi_t,\pi_{t+1}}d}],
\end{equation}
where $R_{\pi_t,\pi_{t+1}}$ is the rate of the shared link based on the shuffles $(\pi_t,\pi_{t+1})$.
Therefore, we have
 \begin{align}
\label{eq:transmit-load}
 H\left(X_{\pi_t,\pi_{t+1}}|\A\right)=0,\quad H\left(X_{\pi_t,\pi_{t+1}}\right) =R_{\pi_t,\pi_{t+1}}d,
 \end{align}
which means that $X_{\pi_t,\pi_{t+1}}$ is a deterministic function of the whole data-set $\A$.
 
Each worker $w_k$ should reliably decode the desired batch $\A^{t+1}(k)$ out of the transmitted function $X_{\pi_t,\pi_{t+1}}$, as well as the data stored in the previous time slot $Z^{t}_k$, i.e., ${\A}^{t+1}(k) =\psi(X_{\pi_t,\pi_{t+1}}, Z^{t}_k)$, where $\psi$ is the decoding function at the workers,
\begin{equation}
\psi: [2^{R_{\pi_t,\pi_{t+1}}d}]\times [2^{Sd}]\rightarrow [2^{\frac{N}{K}d}].
\end{equation} 
Therefore, for reliable decoding, we have the following \textit{decodability constraint} at each worker:
\begin{equation}
\label{eq:decoding-const}
H\left(\A^{t+1}(k)|Z^{t}_k, X_{\pi_t,\pi_{t+1}}\right)=0, \quad \forall k\in\{1,\ldots,K\}.
\end{equation}

\subsection{Storage Update Phase}
\label{sec:sys-model-2}
At the next iteration $t+1$, every worker updates its stored content according to the placement strategy, where the new storage content for worker $w_k$ is given by  $Z^{t+1}_k$, which is a function of the old storage content $Z^{t}_k$ as well as transmitted function $X_{\pi_t,\pi_{t+1}}$, i.e., ${Z}^{t+1}_k=\mu(X_{\pi_t,\pi_{t+1}},Z^{t}_k)$, where $\mu$ is the update function
\begin{equation}
\mu: [2^{R_{\pi_t,\pi_{t+1}}d}]\times [2^{Sd}]\rightarrow [2^{Sd}],
\end{equation}
Therefore, we have the following \textit{storage-update constraint}:
\begin{equation}
\label{eq:cache-update}
H(Z^{t+1}_k|Z^{t}_k, X_{\pi_t,\pi_{t+1}})=0,\quad \forall k\in\{1,\ldots,K\}.
\end{equation}

The excess storage after storing ${\A}^{t+1}(k)$ in ${Z}^{t+1}_k$, given by $\left(S-\frac{N}{K}\right)d$ bits, can be used to store opportunistically a function of the remaining $K-1$ data batches. 
For the scope of this work, we assume that the placement of the excess storage is uncoded, which means that the excess storage is dedicated to store uncoded functions of the remaining $K-1$ batches.
We give the notation $\A^{t+1}(i,k)$, where $i\neq k$, as the part of data that worker $w_k$ stores about $\A^{t+1}(i)$ in the excess storage at  time $t+1$.
As a result, we can write the content of $Z_k^{t+1}$ for uncoded storage placement as
\begin{align}
\label{eq:cache-content2}
Z_k^{t+1} = \left\{\A^{t+1}(k), \underset{j\in [1:K]\setminus k}{\cup}\A^{t+1}(j,k)\right\}.
\end{align}
Furthermore, we assume a generic placement strategy for the excess storage as follows: the batch $\A^{t+1}(i)$ consists of  $2^{K-1}$ partitions, denoted as $\A^{t+1}_{\mathcal{W}}(i)$, $\mathcal{W}\in 2^{[1:K]\setminus i}$, where $2^{[1:K]\setminus i}$ is the power set of all possible subsets of the set $[1:K]\setminus i$ including the empty set. The worker $w_{k}$, for $k\neq i$, stores the partition $\A^{t+1}_{\mathcal{W}}(i)$ in the excess storage, only if $k \in \mathcal{W}$. Therefore, the sub-batches $\A^{t+1}(i)$, and $\A^{t+1}(i,k)$ can be expressed as
\begin{align}
\label{eq:def_A_k_W}
&\A^{t+1}(i) = \underset{\mathcal{W} \subseteq [1:K]\setminus i}{\cup} \A^{t+1}_{\mathcal{W}}(i),\qquad \A^{t+1}(i,k) = \underset{\mathcal{W} \subseteq [1:K]\setminus i :\: k \in \mathcal{W}}{\cup} \A^{t+1}_{\mathcal{W}}(i).
\end{align}

Let us consider $\A^{t+1}_{\mathcal{W}}(i)$ as a random variable with entropy $H(\A^{t+1}_{\mathcal{W}}(i))= \vert \A^{t+1}_{\mathcal{W}}(i) \vert d$, and size $\vert \A^{t+1}_{\mathcal{W}}(i) \vert$  normalized by the data point size $d$. Therefore, the following two constraints are obtained:

\noindent$\bullet$\hspace{5pt}\textbf{Data size constraint:} The first constraint is related to the total size of the data given by $Nd$ bits,
\begin{align}
N &= \frac{1}{d} H(\A) = \frac{1}{d}\sum_{i=1}^K H(\A^{t+1}(i)) \overset{(a)}{=}\frac{1}{d} \sum_{i=1}^K \sum_{\mathcal{W} \subseteq [1:K]\setminus i} H(\A^{t+1}_{\mathcal{W}}(i))\nonumber \\
&=  \sum_{\ell=1}^K\sum_{i=1}^K \sum_{\mathcal{W} \subseteq [1:K]\setminus i : \:\vert\mathcal{W} \vert = \ell} \vert \A^{t+1}_{\mathcal{W}} (i)\vert = \sum_{\ell =1}^K x_{\ell},\label{eq:size-const}
\end{align}
where $(a)$ follows from \eqref{eq:def_A_k_W}, and $x_{\ell}\geq 0$ is defined as
\begin{align}
 x_{\ell}\: \overset{\Delta}{=}\: \sum_{i=1}^K \sum_{\substack{\mathcal{W}\subseteq [1:K]\setminus i:\: \vert\mathcal{W}\vert =\ell}}\vert \A^t_{\mathcal{W}}(i)\vert, \quad \ell\in[0:K-1].\label{eq:x_ell}
\end{align}

\noindent$\bullet$\hspace{5pt}\textbf{Excess storage size constraint:} The second constraint is related to the total excess storage of all the workers, which cannot exceed $K\left(S-\frac{N}{K}\right)d$ bits,
\begin{align}
K\left(S-\frac{N}{K}\right) &\geq \frac{1}{d} \sum_{i=1}^K \sum_{k\in [1:K]\setminus i} H\left(\A^{t+1}(i,k)\right) \overset{(a)}{=} \sum_{i=1}^K \sum_{k\in [1:K]\setminus i}\: \sum_{\mathcal{W} \subseteq [1:K]\setminus i :\: k \in \mathcal{W}} \vert \A^{t+1}_{\mathcal{W}}(i)\vert \nonumber \\
&\overset{(b)}{=}\sum_{i=1}^K \sum_{\mathcal{W} \subseteq [1:K]\setminus i} \vert \mathcal{W}\vert \: \vert \A^{t+1}_{\mathcal{W}}(i)\vert = \sum_{\ell=1}^K \ell \sum_{i=1}^K \sum_{\mathcal{W} \subseteq [1:K]\setminus i:\: \vert \mathcal{W}\vert =t\vert}  \vert \A^{t+1}_{\mathcal{W}}(i)\vert \overset{(c)}{=} \sum_{\ell=1}^K \ell x_{\ell},\label{eq:constraint_cache}
\end{align}
where $(a)$ follows from \eqref{eq:def_A_k_W}, $(b)$ is true because when we sum up the contents of the excess storage at all the workers, the chunk $\A^{t+1}_{\mathcal{W}}(i)$ is counted $\vert \mathcal{W}\vert$ number of times, which is the number of workers storing this chunk, and $(c)$ follows from \eqref{eq:x_ell}.

We next define the worst-case communication as follows: 
\begin{definition}[\textbf{Worst-Case Communication}] For any achievable scheme characterized by the encoding, decoding, and cache update functions $(\phi,\psi,\mu)$, the worst-case communication overhead over all possible consecutive data shuffles $(\pi_{t}, \pi_{t+1})$ is defined as
\begin{equation}
R_{\text{worst-case}}^{(\phi,\psi,\mu)}(S)\: \overset{\Delta}{=}\:\underset{(\pi_{t},\pi_{t+1})}{\max}\; R_{(\pi_{t}, \pi_{t+1})}^{(\phi,\psi,\mu)}(S).
\end{equation}
\end{definition}

Our goal in this work is to characterize the optimal worst-case communication $R_{\text{worst-case}}^*(K,N,S)$ defined as
\begin{equation}
R_{\text{worst-case}}^*(S)\: \overset{\Delta}{=}\:\underset{(\phi,\psi,\mu)}{\min} \;R_{\text{worst-case}}^{(\phi,\psi,\mu)}(S).
\end{equation}
We next present a claim which shows that the optimal worst-case communication $R_{\text{worst-case}}^*(S)$  is a convex function of the storage $S$:
\begin{claim}
\label{cl:1}
$R_{\text{worst-case}}^*(S)$ is a convex function of $S$, where $S$ is the available storage at each worker.
\end{claim}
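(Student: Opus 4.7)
The plan is to establish convexity via a standard data-splitting (memory-sharing) argument: given achievable schemes at two operating points $S_1$ and $S_2$, I will construct a hybrid scheme at any convex combination $S_\lambda = \lambda S_1 + (1-\lambda) S_2$ whose worst-case rate is at most the corresponding convex combination of the two worst-case rates.

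First I would fix $S_1, S_2 \in [N/K, N]$ and $\lambda \in [0,1]$, and let $(\phi_j, \psi_j, \mu_j)$ for $j\in\{1,2\}$ be achievable schemes whose worst-case rates approach $R^*_{\text{worst-case}}(S_1)$ and $R^*_{\text{worst-case}}(S_2)$, respectively. The key construction is to partition each data point $D_n$ into two disjoint sub-points $D_n^{(1)}, D_n^{(2)}$ of sizes $\lambda d$ and $(1-\lambda) d$ bits, yielding two sub-datasets $\A^{(1)}, \A^{(2)}$. Any permutation $\pi^t$ of $\A$ naturally restricts to permutations on each $\A^{(j)}$, since both sub-parts of $D_n$ are assigned to the same worker. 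I then run the two schemes in parallel: scheme $j$ operates on $\A^{(j)}$ with a per-worker storage budget of $\lambda S_1 d$ bits for $j=1$ and $(1-\lambda) S_2 d$ bits for $j=2$. The total per-worker storage is exactly $S_\lambda d$, satisfying \eqref{eq:cache-storage}, and the processing constraint \eqref{eq:cache-min-desired} follows because each sub-scheme independently recovers the correct portion of $\A^{t+1}(k)$.

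For the delivery phase, the master would broadcast the concatenation of the two sub-scheme messages, so the total rate for any shuffle pair $(\pi^t, \pi^{t+1})$ is the sum of the two sub-rates, each bounded above by the respective worst-case rate. Consequently, the hybrid rate satisfies $R^{\text{hybrid}}_{(\pi^t,\pi^{t+1})}(S_\lambda) \leq \lambda R^*_{\text{worst-case}}(S_1) + (1-\lambda) R^*_{\text{worst-case}}(S_2)$ uniformly in $(\pi^t, \pi^{t+1})$, so taking the maximum over shuffle pairs and the infimum over achievable schemes on the left-hand side yields the claimed convexity.

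The subtle point --- and the step I would be most careful about --- is the storage-update constraint \eqref{eq:cache-update}, which requires $Z^{t+1}_k$ to be a deterministic function only of $Z^t_k$ and the single broadcast $X_{\pi^t,\pi^{t+1}}$. This is satisfied because the hybrid storage and broadcast both decompose as concatenations of their two components, so each update $\mu_j$ has access to exactly its own sub-storage and sub-broadcast and no cross-coupling between the two parallel tracks is required. Letting the $\epsilon$-suboptimality of the chosen schemes tend to zero (or replacing minima by infima throughout) completes the argument.
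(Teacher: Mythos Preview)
Your proposal is correct and follows essentially the same memory-sharing argument as the paper: split each data point along its $d$-dimensional representation into two parts of sizes $\lambda d$ and $(1-\lambda)d$, run the two schemes in parallel on the respective sub-datasets with storage $\lambda S_1 d$ and $(1-\lambda)S_2 d$, and conclude the convexity inequality. If anything, your write-up is more careful than the paper's, explicitly verifying the storage-update constraint \eqref{eq:cache-update} and handling the infimum over schemes.
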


\begin{proof}
Claim~\ref{cl:1} follows from a simple memory sharing argument which shows that for any two available storage values $S_1$ and $S_2$, if $(S_1,R_{\text{worst-case}}^*(S_1))$, and $(S_2,R_{\text{worst-case}}^*(S_2))$ are achievable optimal schemes, then for any storage $\bar{S}=\alpha S_1 +(1-\alpha) S_2$, $0\leq\alpha\leq1$, there is a scheme which achieves a communication overhead of $\bar{R}(\bar{S})=\alpha R_{\text{worst-case}}^*(S_1) +(1-\alpha) R_{\text{worst-case}}^*(S_2)$. 

This is done as follows: first, we divide the data-set $\A$ across $d$ dimensions into 2 batches namely; $\A^{(\alpha)}$, and $\A^{(1-\alpha)}$ of dimensions $\alpha d$, and $(1-\alpha) d$, for each point respectively. Then, we divide the storage for every worker $w_k$ into 2 parts namely; $Z_k^{(\alpha)}$, and $Z_k^{(1-\alpha)}$ of size $S_1\alpha d$, and $S_2(1-\alpha)d$, respectively. The former batch $\A^{(\alpha)}$ will be shuffled among the former part of the storage $Z_k^{(\alpha)}$ to achieve the point $(S_1,R_{\text{worst-case}}^*(S_1))$, while the latter batch $\A^{(1-\alpha)}$ will be shuffled among the latter part of the storage $Z_k^{(1-\alpha)}$ to achieve the point $(S_2,R_{\text{worst-case}}^*(S_2))$. Therefore, the total achievable load is given by
\begin{equation}
\label{eq:memory-sharing}
H(X)=R_{\text{worst-case}}^*(S_1) \alpha d+ R_{\text{worst-case}}^*(S_2) (1-\alpha) d= \bar{R}(\bar{S})d.
\end{equation}
We next note that the optimal communication rate $R_{\text{worst-case}}^*(\bar{S})$ is upper bounded by $\bar{R}(\bar{S})$, the rate of the memory sharing scheme, i.e.,
\begin{align}
R_{\text{worst-case}}^*(\alpha S_1 +(1-\alpha) S_2) \leq \alpha R_{\text{worst-case}}^*(S_1) +(1-\alpha) R_{\text{worst-case}}^*(S_2),
\end{align}
which shows that $R_{\text{worst-case}}^*(S)$ is a convex function of $S$.
\end{proof}

\subsection{Notation}
The notation ${\left[n_1:n_2\right]}$ for $n_1<n_2$, and $n_1,n_2\in \mathbb{N}$ represents the set of all integers between $n_1$, and $n_2$, i.e., $\left[n_1:n_2\right]={\{n_1,n_1+1,\ldots, n_2\}}$.
The combination coefficient ${{{n}\choose {k}}=\frac{n!}{(n-k)! k!}}$ equals zero for ${k>n}$, or $k<0$.
In order to describe subsets of ordered sets, we use the subscript to give the indexes of the elements being chosen from the set, e.g., for the ordered set $\pi=(\pi_1\,\ldots,\pi_n)$, $\pi_{[1:4]} = (\pi_1,\pi_2,\pi_3,\pi_4)$.
We denote Random Variables (RVs) by capital letters, ordered sets of RVs by capital bold letters, and sets of data points/sub-points by calligraphy letters.
The set in the subscript of a set of ordered RVs  is used for short notation of a subset of the set of RVs, e.g.,  for a set of RVs $\Z =\{Z_1,\ldots,Z_n\}$, we use $\Z_{\mathcal{W}}$ to denote the set $\{Z_i\}_{i\in\mathcal{W}}$.

For a data-set $\mathcal{A}$, we use the notation $\mathcal{A}^t(i)$ to denote the data partition assigned to the worker $w_i$ at iteration $t$, $\mathcal{A}^t(i,j)$, for $i\neq j$, to denote the part of $\mathcal{A}^t(i)$ which is stored in the excess storage of the worker $w_j$ at iteration $t$, while $\mathcal{A}^t_j(i)$, for $i\neq j$, to denote  the part of $\mathcal{A}^t(i)$ which is \textit{only} stored in excess storage of the worker $w_j$ at iteration $t$, i.e., $\mathcal{A}^t_j(i)=\mathcal{A}^t(i,j)\setminus \cup_{k\not\in\{i,j\}}\mathcal{A}^t(i,k)$.
The notation $\mathcal{A}^t(\mathcal{W})$ is used to denote the union of the data partitions assigned to the workers at iteration $t$ whose indexes are in the set $\mathcal{W}$, i.e., 
$\mathcal{A}^t(\mathcal{W}) = \cup_{i\in\mathcal{W}} \mathcal{A}^t(i)$. Similarly we have, $\mathcal{A}^t(\mathcal{W},j) = \cup_{i\in\mathcal{W}} \mathcal{A}^t(i,j)$, where $j\not\in\mathcal{W}$, and $\mathcal{A}^t(i,\mathcal{W})=\cup_{j\in \mathcal{W}}\mathcal{A}^t(i,j)$, where $i\not\in\mathcal{W}$.
The notation $\mathcal{A}^t_{\mathcal{W}}(i)$, where $i \not\in \mathcal{W}$, denotes a subset of $\mathcal{A}^t(i)$ which is \textit{exclusively and jointly} stored at iteration $t$ in the excess storage  of all the workers whose indexes are in the set $\mathcal{W}$, i.e., $\mathcal{A}^t_{\mathcal{W}}(i)= \cap_{j\in\mathcal{W}} \mathcal{A}^t(i,j) \setminus \cup_{j \not\in(\mathcal{W}\cup i)} \mathcal{A}^t(i,j)$.
The following table summarizes the notation used to denote the subsets of the data-set $\A$:
\begin{center}
 \begin{tabular}{|c | p{7.2cm}| c |} 
 \hline
 Notation & Description & Representation \\ [0.5ex] 
 \hline\hline
 $\mathcal{A}^t(i)$  &  The data partition assigned to $w_i$ at iteration $t$. & -\\ 
 \hline
 $\mathcal{A}^t(i,j)$, $i\neq j$  &  A subset of $\mathcal{A}^t(i)$ stored in the excess storage of $w_j$ at iteration $t$. &  -\\
 \hline
  $\mathcal{A}_j^t(i)$, $i\neq j$  &  A subset of $\mathcal{A}^t(i)$ stored \textit{only} in the excess storage of $w_j$ at iteration $t$. & $\mathcal{A}^t(i,j)\setminus \cup_{k\not\in\{i,j\}}\mathcal{A}^t(i,k)$\\ 
 \hline
$\mathcal{A}^t(\mathcal{W})$  &  The union of the data partitions assigned to every $w_i$ at iteration $t$, where $i\in\mathcal{W}$. & $ \cup_{i\in\mathcal{W}} \mathcal{A}^t(i)$\\
 \hline
 $\mathcal{A}^t(\mathcal{W},j)$,  &  The union of the sets $\mathcal{A}^t(i,j)$ for $i\in\mathcal{W}$. $j\not\in \mathcal{W}$ &  $\cup_{i\in\mathcal{W}} \mathcal{A}^t(i,j)$\\
 \hline
 $\mathcal{A}^t(i,\mathcal{W})$, $i\not\in \mathcal{W}$  &  The union of the sets $\mathcal{A}^t(i,j)$  for $j\in\mathcal{W}$.& $\cup_{j\in\mathcal{W}} \mathcal{A}^t(i,j)$\\
 \hline
  $\mathcal{A}^t_{\mathcal{W}}(i)$, $i\not\in \mathcal{W}$  &  A subset of $\mathcal{A}^t(i)$ which is \textit{exclusively and jointly} stored in the excess storage at iteration $t$ of all the workers whose indexes are in the set $\mathcal{W}$. & $\cap_{j\in\mathcal{W}} \mathcal{A}^t(i,j) \setminus \cup_{j \not\in(\mathcal{W}\cup i)} \mathcal{A}^t(i,j)$\\ [1ex] 
 \hline
\end{tabular}
\end{center}

\section{Main Results and Discussions}
\label{sec:results}
The first theorem presents an achievable worst-case rate $R_{\text{worst-case}}$, which also yields an upper bound on the optimal storage-rate trade-off $R_{\text{worst-case}}^*$.

\begin{theorem}
\label{thm1}
For a data-set containing $N\in\mathbb{N}$ data points, and a set of $K\in\mathbb{N}$ distributed workers, the lower convex envelope of the following $K+1$ storage-rate pairs is achievable:
\begin{align}
\left(S=\left(1+i\frac{K-1}{K}\right)\frac{N}{K},\:R_{\text{worst-case}}^{\text{upper}}= \frac{N(K-i)}{K(i+1)}\right), \quad \forall i\in[0:K].
\end{align}
\end{theorem}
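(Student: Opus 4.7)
The plan is to establish achievability of each of the $K+1$ storage-rate corner points individually via an explicit coded-shuffling scheme built around a Maddah-Ali--Niesen style structural invariant \cite{FundLimitsCaching2015}, and then invoke Claim~\ref{cl:1} (memory sharing) to obtain the lower convex envelope. Fix $i\in[0:K]$. The scheme has three ingredients: a structural invariant on the placement, a coded delivery round that respects it, and an update step that re-establishes it for the next shuffle.

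For the placement at time $t$, I would partition each batch $\A^{t}(j)$ into $\binom{K}{i}$ equal pieces $\A^{t}_{\mathcal{W}}(j)$ indexed by $\mathcal{W}\subseteq[1:K]$ with $|\mathcal{W}|=i$, each of size $(N/K)/\binom{K}{i}$ data points, and enforce the invariant that $\A^{t}_{\mathcal{W}}(j)$ is present in the storage of exactly the workers in $\mathcal{W}\cup\{j\}$. (In the paper's convention where the subscript is drawn from $[1:K]\setminus\{j\}$, this corresponds to partitions of $\A^{t}(j)$ of the two sizes $|\mathcal{W}'|\in\{i-1,i\}$ but all of equal length.) A direct count gives that each worker holds one full batch plus $\binom{K-1}{i-1}$ pieces of every other batch, for a total of
\begin{equation*}
\frac{N}{K} + (K-1)\,\frac{i}{K}\,\frac{N}{K} \;=\; \left(1+i\,\frac{K-1}{K}\right)\frac{N}{K} \;=\; S,
\end{equation*}
matching the theorem.

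For the delivery at time $t+1$, let $\sigma$ denote the relabeling permutation so that $\A^{t+1}(j)=\A^{t}(\sigma(j))$, and broadcast for every $\mathcal{S}\subseteq[1:K]$ of size $i+1$ the coded symbol
\begin{equation*}
X_{\mathcal{S}} \;=\; \bigoplus_{k\in\mathcal{S}}\,\A^{t+1}_{\mathcal{S}\setminus\{k\}}(k).
\end{equation*}
Decodability at worker $k\in\mathcal{S}$ follows because every side-information term $\A^{t+1}_{\mathcal{S}\setminus\{j\}}(j)=\A^{t}_{\mathcal{S}\setminus\{j\}}(\sigma(j))$ is, by the time-$t$ invariant, stored at every worker in $(\mathcal{S}\setminus\{j\})\cup\{\sigma(j)\}$, and $k\in\mathcal{S}\setminus\{j\}$ whenever $k\neq j$; thus worker $k$ cancels all $i$ side-information terms and recovers its unique unknown $\A^{t+1}_{\mathcal{S}\setminus\{k\}}(k)$. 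Summing over the $\binom{K}{i+1}$ subsets supplies each worker with exactly the $\binom{K-1}{i}$ missing pieces of $\A^{t+1}(k)$, at total rate
\begin{equation*}
\binom{K}{i+1}\,\frac{N/K}{\binom{K}{i}} \;=\; \frac{N(K-i)}{K(i+1)}.
\end{equation*}

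Finally, the storage update is nearly mechanical: worker $k$ retains every old piece $\A^{t}_{\mathcal{W}}(\sigma(j))$ with $k\in\mathcal{W}$ (which now plays the role of $\A^{t+1}_{\mathcal{W}}(j)$), adjoins the freshly decoded pieces of $\A^{t+1}(k)$, and discards the legacy pieces it held only by virtue of the coincidence $k=\sigma(j)$. A brief accounting shows that the storage returns to $S$ and the invariant is satisfied for $\A^{t+1}$, so the scheme is self-consistent across arbitrary shuffles. Achievability of the lower convex envelope of the $K+1$ corner points then follows by memory sharing via Claim~\ref{cl:1}. The main conceptual hurdle is arguing that the same piece-index $\mathcal{W}$ can be recycled through an arbitrary permutation $\sigma$ without disturbing either the storage size or the alignment of the delivery XORs; this is exactly what makes the achievable rate $N(K-i)/(K(i+1))$ permutation-agnostic and hence worst-case.
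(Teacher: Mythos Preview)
Your proposal is correct and mirrors the paper's proof in Appendix~\ref{sec:upper-bound}: the same $\binom{K}{i}$-piece structural-invariant placement, the same Maddah--Ali--Niesen coded multicast $\bigoplus_{k\in\mathcal{S}}\A^{t+1}_{\mathcal{S}\setminus\{k\}}(k)$ over $(i{+}1)$-subsets, and the same storage-update rule that re-establishes the invariant. One small caveat worth tightening: you restrict to shuffles that permute whole batches, $\A^{t+1}(j)=\A^{t}(\sigma(j))$, whereas the model allows arbitrary data-point reassignments across batches; the paper handles this by attaching the label $\mathcal{W}$ to each data point (split across the $d$ dimensions) rather than to a batch, after which your decodability and update arguments carry over verbatim for every shuffle and the worst-case rate computation is unchanged.
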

The proof of Theorem~\ref{thm1} is presented in Appendix~\ref{sec:upper-bound}.
We present an encoding, decoding, and cache update scheme, which achieves the above rate-storage pairs.
One of the crucial steps in the proof is the \textit{structural invariant placement and update} of the storage of the workers over time. The storage placement involves partitioning the data points across dimensions, which allows each worker to store  at least some parts of each data point, which in turns introduces a local storage gain for any potential data assignment.  In order to increase the global gain through increasing the coding opportunities, we minimize the overlap between the parts stored by each worker of each data point. Through a careful novel update of storage across time, the structure can be maintained for any random data assignment, which allows applying a coded data delivery mechanism to reduce the communication overhead. 
Now, we give the following illustrative example for $K=N=4$ to introduce the main elements of the achievability proof.

\begin{example} \normalfont
\label{ex1}
Consider the case of $K=4$ workers, and $N=4$ i.i.d. data points, labeled as $\{D_1,D_2,D_3,D_4\}$.
According to Theorem~\ref{thm1}, the achievable worst-case storage-rate trade-off is given by the lower convex envelope of the 5 storage-rate pairs ($S=3i/4+1$, $R =(4-i)/(i+1)$) for $i\in[0:4]$, which is also shown by the red curve in Figure~\ref{fig:ex_gap}.
\begin{figure}[t]
  \begin{center}
  \includegraphics[width=0.5\columnwidth]{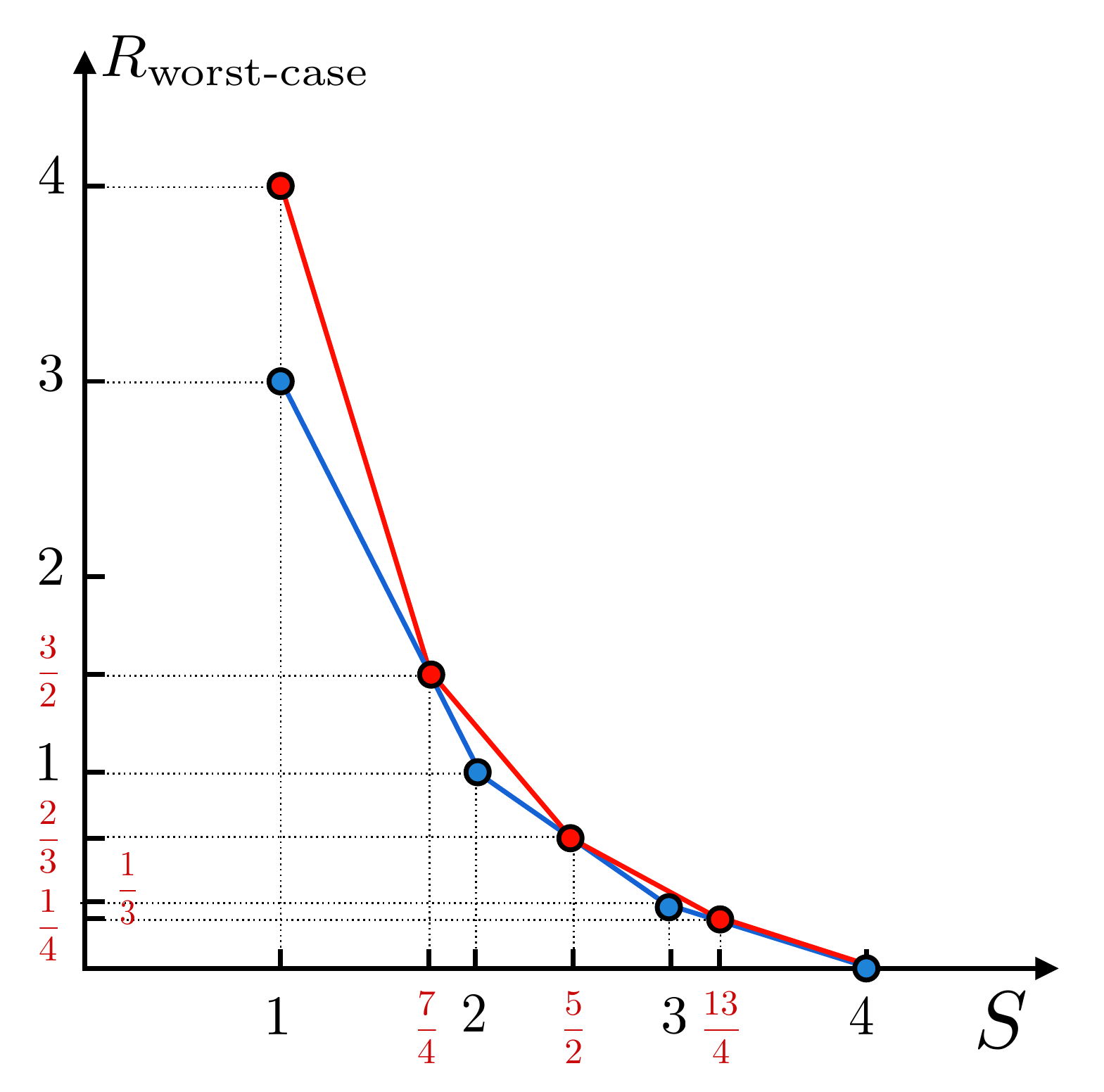}
\caption{The lower bound and the upper bound on the worst-case rate $R_{\text{worst-case}}^*$ for $N=4$, and $K=4$ versus the amount of storage $S$. The maximum gap appears to be when $S=1$, which is given as a ratio $4/3$. \label{fig:ex_gap}}
  \end{center}
\end{figure}
From Claim~\ref{cl:1}, once we achieve these pairs, the lower convex envelope is also achievable by memory sharing.
At time $t$, we consider the data is assigned according to the shuffle $\pi_t = (1,2,3,4)$, e.g., $w_1$ is assigned the data point $D_1$, i.e., $\A^t(1) = D_1$. 
At time $t+1$, we consider the cyclic shuffle $\pi_{t+1} = (2,3,4,1)$, e.g., $w_1$ is assigned the data point $D_2$ at time $t+1$, i.e., $\A^{t+1}(1) = D_2$.
Once we achieve the rate for the shuffle $\pi_{t+1} = (2,3,4,1)$, a similar data delivery mechanism can be used for any $\pi_{t+1}\in [4!]$, where $[4!]$ is the set containing all the $4!$ possible permutations of the set $[1:4]$.
The achievability, according to  $(\pi_t,\pi_{t+1})$, for the storage value $S = 3i/4+1$ and $i\in[0:4]$ follows next.

\noindent$\bullet$\hspace{5pt} \textbf{Case} $\mathbf{i=0}$ ($\mathbf{S=1}$): 

This storage value represents the no-excess storage case, where every worker only stores the assigned data point under processing. To satisfy the new assignment at time $t+1$, we choose now to send the $4$ data points, which satisfies any shuffle at time $t+1$, achieving the pair $(S=1,R=4)$. Later in Section~\ref{sec:close-gap}, we will show how to improve this rate and prove that in fact $(S=1,R=3)$ is optimal. The storage update is trivial in this case, where every worker keeps the new assigned data point and discard the remaining three points.

\begin{figure}[t]
  \begin{center}
  \includegraphics[width=0.99\columnwidth]{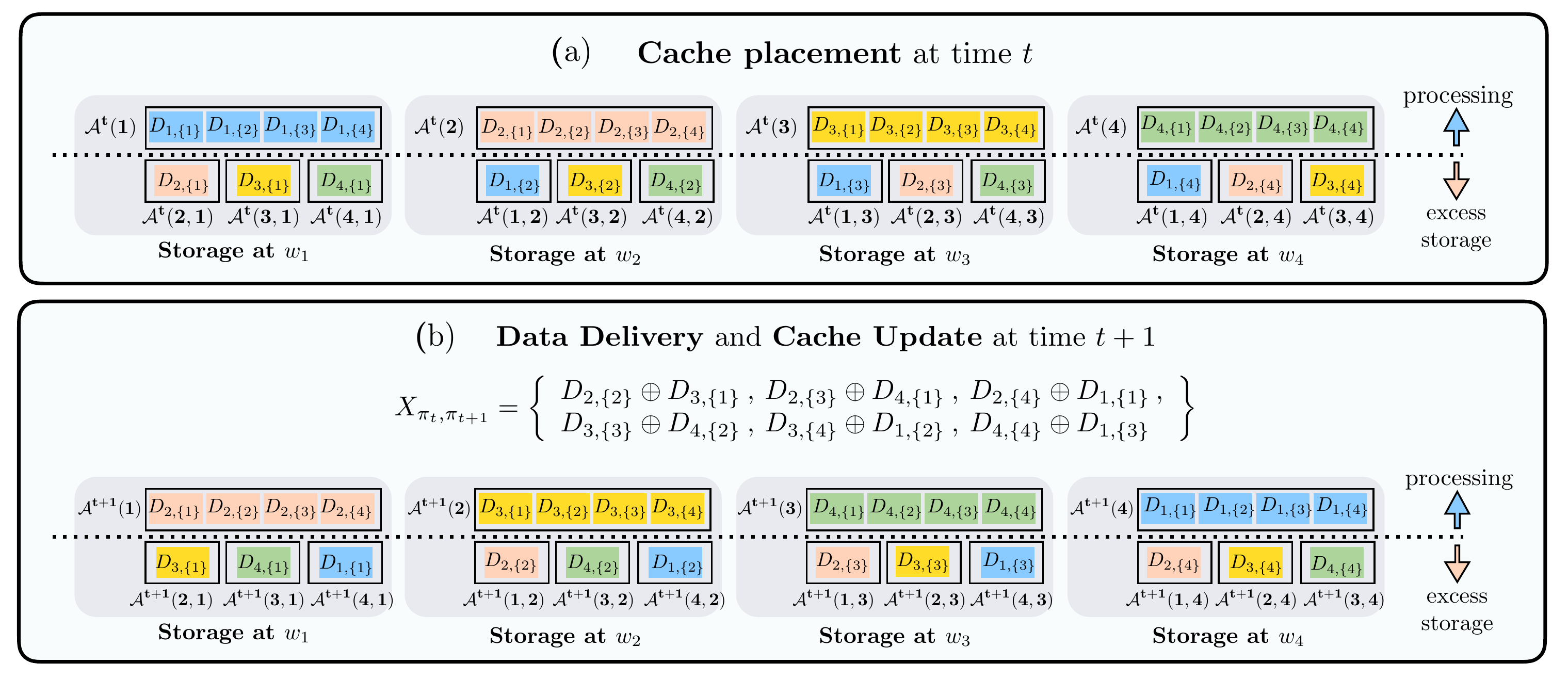}
\caption{Structural Invariant Storage placement, (a), and update, (b), for $K=4$ workers, $N=4$ data points, and $i=1$ ($S=7/4$).
Every data point is partitioned into $4$ sub-points each labeled by a unique subset of the set $[1:4]$ of length $1$. Above the dotted line is the data point fully stored for processing, and below the dotted line is the excess storage used to store the sub-points containing the worker's index.  \label{fig:ex-i-1}}
  \end{center}
\end{figure}

\noindent$\bullet$\hspace{5pt} \textbf{Case} $\mathbf{i=1}$ ($\mathbf{S=7/4}$):

\noindent \underline{Storage Placement:} The storage placement for $i=1$ is shown in Figure~\ref{fig:ex-i-1}a.
First, every data point is partitioned into $4$ sub-points of size $d/4$ bits each, where every sub-point is labeled by a unique subset $\mathcal{W}\subseteq [1:4]$ of size $\vert \mathcal{W}\vert =1$. For instance, the data point $D_1$ is partitioned as follows:
\begin{align}
D_1= \{D_{1,\{1\}},D_{1,\{2\}},D_{1,\{3\}},D_{1,\{4\}}\}.
\end{align}
Every worker first fully stores the assigned data point.
For the excess storage, every worker $w_k$ stores from the remaining points, not being processed, the sub-points where $k\in\mathcal{W}$. For instance, $w_1$ stores $1$ sub-point of $D_2$, labeled as $\A^t({2},1) =\{D_{2,\{1\}}\}$.
To summarize, each worker stores the assigned data point of size $d$, and for each one of the remaining $3$ data points, it stores $1$ sub-point of size $d/4$. That is, the storage requirement is given by
$S = 1 + 3\times 1/4= 7/4$,
which satisfies the storage constraint for $i=1$ ($S=7/4$).

\noindent \underline{Data Delivery:} According to the storage placement at time $t$ in Figure~\ref{fig:ex-i-1}a, at time $t+1$ every worker needs $3$ sub-points of the assigned data point, and every sub-point is available at least in one of the remaining workers, e.g., $w_1$ needs the sub-points $\{D_{2,\{2\}},D_{2,\{3\}},D_{2,\{4\}}\}$. Now, if we pick any $2$ out of the $4$ workers, then each one of the $2$ workers needs a sub-point available at the other worker. Therefore, we can send an \textit{``order $2$"} symbol, of size $d/4$ bits, useful for these chosen two workers in the same time, and for all possible choices of $2$ out of the $4$ workers we send the following $\binom{4}{2}=6$ coded symbols which satisfies the required $3$ needed sub-points for the $4$ workers:
\begin{align}
X_{\pi_t,\pi_{t+1}} = \left.\begin{cases}
D_{2,\{2\}}\oplus D_{3,\{1\}},  & \text{useful for $w_1,w_2$},\\
D_{2,\{3\}}\oplus D_{4,\{1\}},  &\text{useful for $w_1,w_3$},\\
D_{2,\{4\}}\oplus D_{1,\{1\}},  &\text{useful for $w_1,w_4$},\\
D_{3,\{3\}}\oplus D_{4,\{2\}},  &\text{useful for $w_2,w_3$},\\
D_{3,\{4\}}\oplus D_{1,\{2\}},  &\text{useful for $w_2,w_4$},\\
D_{4,\{4\}}\oplus D_{1,\{3\}},  &\text{useful for $w_3,w_4$}\\
\end{cases}\right\}.
\end{align}
The rate of this transmission is $\binom{4}{2}/4 = 3/2$, and the pair $(S=7/4, R = 3/2)$ is achieved.

\noindent \underline{Storage Update:} At time $t+1$, the storage update follows from  Figure~\ref{fig:ex-i-1}b. In order to maintain the structure of the storage, the workers first store the data points newly assigned and acquired from the delivery phase. For the excess storage update, each worker $w_k$ keeps  from the data point previously assigned at time $t$ the sub-points which are labeled by a set $\mathcal{W}$ where $k\in\mathcal{W}$. For example, $w_1$ keeps from $\A^t(1)=\A^{t+1}(4)=D_1$ the sub-point $\A^{t+1}(4,1)= \{D_{1,\{1\}}\}$.

\begin{figure}[t]
  \begin{center}
  \includegraphics[width=0.99\columnwidth]{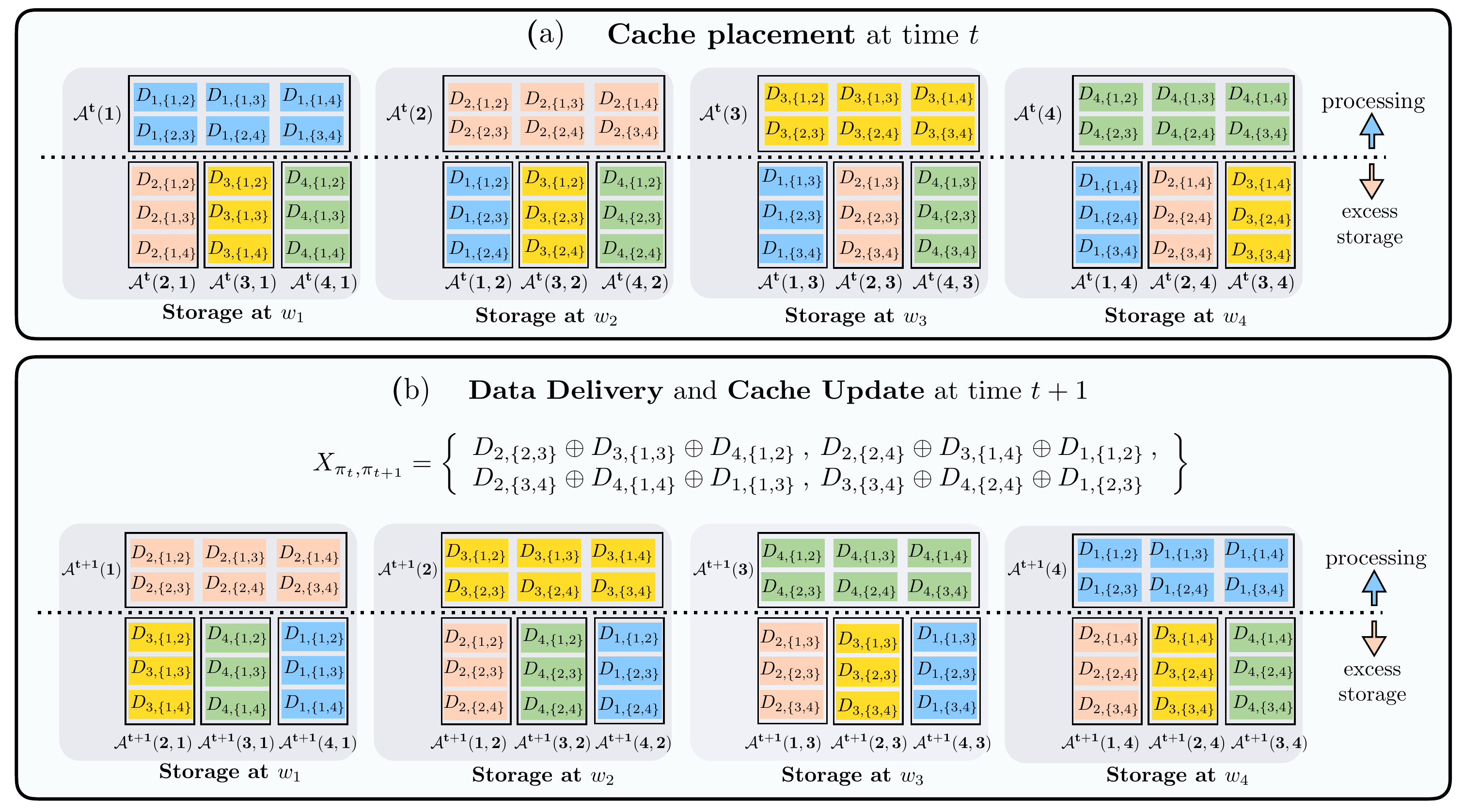}
\caption{Structural Invariant Storage placement, (a), and update, (b), for $K=4$ workers, $N=4$ data points, and $i=2$ ($S=5/2$).
Every data point is partitioned into $6$ sub-points each labeled by a unique subset of the set $[1:4]$ of length $2$. Above the dotted line is the data point fully stored for processing, and below the dotted line is the excess storage used to store the sub-points containing the worker's index.  \label{fig:ex-i-2}}
  \end{center}
\end{figure}

\noindent$\bullet$\hspace{5pt} \textbf{Case} $\mathbf{i=2}$ ($\mathbf{S=5/2}$): 

\noindent \underline{Storage Placement:} The storage placement for $i=2$ is shown in Figure~\ref{fig:ex-i-2}a.
First, every data point is partitioned into $6$ sub-points of size $d/6$ bits each, where every sub-point is labeled by a unique subset $\mathcal{W}\subseteq [1:4]$ of size $\vert \mathcal{W}\vert =2$. For instance, the data point $D_1$ is partitioned as follows:
\begin{align}
D_1= \{D_{1,\{1,2\}},D_{1,\{1,3\}},D_{1,\{1,4\}},D_{1,\{2,3\}},D_{1,\{2,4\}},D_{1,\{3,4\}}\}.
\end{align}
Every worker first fully stores the assigned data point.
For the excess storage, every worker $w_k$ stores from the remaining points, not being processed, the sub-points where $k\in\mathcal{W}$. For instance, $w_1$ stores $3$ sub-point of $D_2$, labeled as $\A^t({2},1) =\{D_{2,\{1,2\}},D_{2,\{1,3\}},D_{2,\{1,4\}}\}$.
To summarize, each worker stores the assigned data point of size $d$, and for each one of the remaining $3$ data points, it stores $3$ sub-point of size $d/6$ each. That is, the storage requirement is given by
$S = 1 + 3\times 3\times 1/6= 5/2$,
which satisfies the storage constraint for $i=2$ ($S=5/2$).

\noindent \underline{Data Delivery:} According to the storage placement at time $t$ in Figure~\ref{fig:ex-i-2}a, at time $t+1$ every worker needs $3$ sub-points of the assigned data point, and every sub-point is available at least in two of the remaining workers, e.g., $w_1$ needs the sub-points $\{D_{2,\{2,3\}},D_{2,\{2,4\}},D_{2,\{3,4\}}\}$. Now, if we pick any $3$ out of the $4$ workers, then every one of the $3$ workers needs a sub-point available at the other $2$ workers. Therefore, we can send an order $3$ symbol, of size $d/6$ bits, useful for these chosen workers in the same time, and for all possible choices of $3$ out of the $4$ workers we send the following $\binom{4}{3}=4$ coded symbols which satisfies the required $3$ needed sub-points for the $4$ workers:
\begin{align}
X_{\pi_t,\pi_{t+1}} = \left.\begin{cases}
D_{2,\{2,3\}}\oplus D_{3,\{1,3\}}\oplus D_{4,\{1,2\}},  & \text{useful for $w_1,w_2,w_3$},\\
D_{2,\{2,4\}}\oplus D_{3,\{1,4\}}\oplus D_{1,\{1,2\}},  &\text{useful for $w_1,w_2,w_4$},\\
D_{2,\{3,4\}}\oplus D_{4,\{1,4\}}\oplus D_{1,\{1,3\}},  &\text{useful for $w_1,w_3,w_4$},\\
D_{3,\{3,4\}}\oplus D_{4,\{2,4\}}\oplus D_{1,\{2,3\}},  &\text{useful for $w_2,w_3,w_4$}
\end{cases}\right\}.
\end{align}
The rate of this transmission is $\binom{4}{3}/\binom{4}{2} = 2/3$, and the pair $(S=5/2, R = 2/3)$ is achieved.

\noindent \underline{Storage Update:} At time $t+1$, the storage update follows from  Figure~\ref{fig:ex-i-2}b. In order to maintain the structure of the storage, the workers first store the data points newly assigned and acquired from the delivery phase. For the excess storage update, each worker $w_k$ keeps  from the data point previously assigned at time $t$ the sub-points which are labeled by a set $\mathcal{W}$ where $k\in\mathcal{W}$. For example, $w_1$ keeps from $\A^t(1)=\A^{t+1}(4)=D_1$ the sub-point $\A^{t+1}(4,1)= \{D_{1,\{1,2\}},D_{1,\{1,3\}},D_{1,\{1,4\}}\}$.

\begin{figure}[t]
  \begin{center}
  \includegraphics[width=0.99\columnwidth]{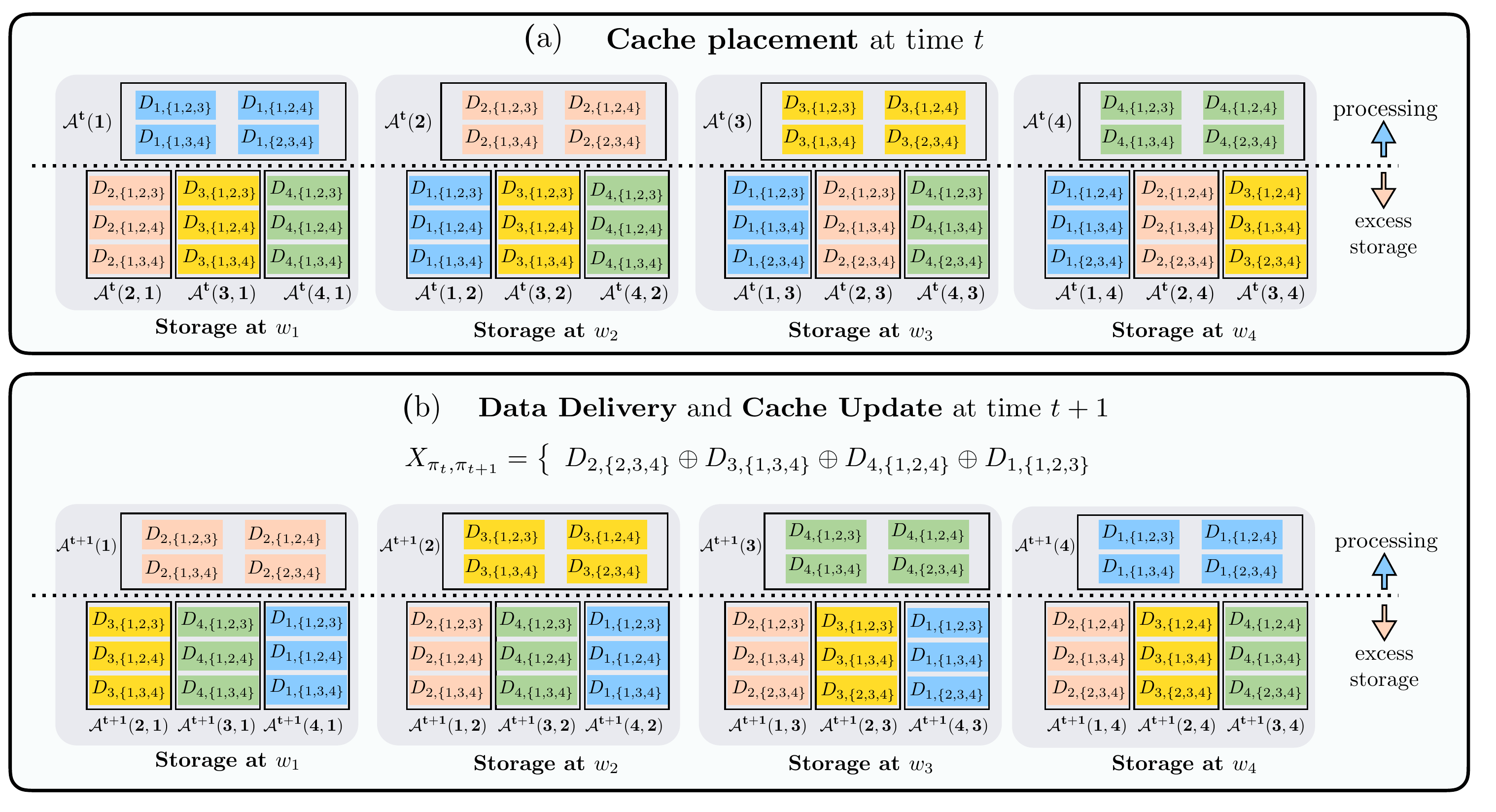}
\caption{Structural Invariant Storage placement, (a), and update, (b), for $K=4$ workers, $N=4$ data points, and $i=3$ ($S=13/4$).
Every data point is partitioned into $4$ sub-points each labeled by a unique subset of the set $[1:4]$ of length $3$. Above the dotted line is the data point fully stored for processing, and below the dotted line is the excess storage used to store the sub-points containing the worker's index.  \label{fig:ex-i-3}}
  \end{center}
\end{figure}

\noindent$\bullet$\hspace{5pt} \textbf{Case} $\mathbf{i=3}$ ($\mathbf{S=13/4}$): 

\noindent \underline{Storage Placement:} The storage placement for $i=3$ is shown in Figure~\ref{fig:ex-i-3}a.
First, every data point is partitioned into $4$ sub-points of size $d/4$ bits each, where every sub-point is labeled by a unique subset $\mathcal{W}\subseteq [1:4]$ of size $\vert \mathcal{W}\vert =3$. For instance, the data point $D_1$ is partitioned as follows:
\begin{align}
D_1= D_1= \{D_{1,\{1,2,3\}},D_{1,\{1,2,4\}},D_{1,\{1,3,4\}},D_{1,\{2,3,4\}}\}.
\end{align}
Every worker first fully stores the assigned data point.
For the excess storage, every worker $w_k$ stores from the remaining points, not being processed, the sub-points where $k\in\mathcal{W}$. For instance, $w_1$ stores $3$ sub-point of $D_2$, labeled as $\A^t({2},1) = \{D_{2,\{1,2,3\}},D_{2,\{2,2,4\}},D_{2,\{1,3,4\}}\}$.
To summarize, each worker stores the assigned data point of size $d$, and for each one of the remaining $3$ data points, it stores $3$ sub-point of size $d/4$ each. That is, the storage requirement is given by
$S = 1 + 3\times 3\times 1/4= 13/4$,
which satisfies the storage constraint for $i=3$ ($S=13/4$).

\noindent \underline{Data Delivery:} According to the storage placement at time $t$ in Figure~\ref{fig:ex-i-3}a, at time $t+1$ every worker only needs one sub-point of the assigned data point which is available at the three remaining workers, e.g., $w_1$ needs $D_{2,\{2,3,4\}}$ which is available at the workers $w_2$, $w_3$, and $w_4$. Therefore, we can send the following order $4$ symbol useful for all the $4$ workers at the same time:
\begin{align}
X_{\pi_t,\pi_{t+1}} = \{D_{2,\{2,3,4\}}\oplus D_{3,\{1,3,4\}}\oplus D_{4,\{1,2,4\}}\oplus D_{1,\{1,2,3\}}\}.
\end{align}
Since the the size of each sub-point is $d/4$, the rate of the transmission is $\binom{4}{4}/4 = 1/4$ and hence the pair $(S=13/4, R = 1/4)$ is achieved.

\noindent \underline{Storage Update:} At time $t+1$, the storage update follows from  Figure~\ref{fig:ex-i-3}b. In order to maintain the structure of the storage, the workers first store the data points newly assigned and acquired from the delivery phase. For the excess storage update, each worker $w_k$ keeps  from the data point previously assigned at time $t$ the sub-points which are labeled by a set $\mathcal{W}$ where $k\in\mathcal{W}$. For example, $w_1$ keeps from $\A^t(1)=\A^{t+1}(4)=D_1$ the sub-point $\A^{t+1}(4,1)= \{D_{1,\{1,2,3\}},D_{1,\{2,2,4\}},D_{1,\{1,3,4\}}\}$.

\noindent$\bullet$\hspace{5pt} \textbf{Case} $\mathbf{i=4}$ ($\mathbf{S=4}$):
This case is trivial where every worker can store all the $4$ data points and hence no communication is needed for any shuffle. Therefore, the pair $(S=4, R = 0)$ is achieved.

\vspace{10pt}

\hrule

\end{example}

We next present our second main result in Theorem \ref{thm2}, which gives an information theoretic lower bound on the optimal worst-case rate.

\begin{theorem}
\label{thm2}
For a data-set containing $N\in\mathbb{N}$ data points, and a set of $K\in\mathbb{N}$ distributed workers, a lower bound on  $R_{\text{worst-case}}^*$ is given by the lower convex envelope of the following $K$ storage-rate pairs:
\begin{align}
\left(S=m\frac{N}{K},\:R_{\text{worst-case}}^{\text{lower}}= \frac{N(K-m)}{Km}\right), \quad \forall m\in[1:K].
\end{align}
\end{theorem}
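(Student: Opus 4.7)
The plan is to derive a family of cut-set style lower bounds on the rate for carefully chosen shuffle pairs, symmetrize them by averaging over workers and shuffles, and then extract the bound on $R_{\text{worst-case}}^*$ as the optimum of a linear program in the partition variables $\{x_\ell\}$ introduced in \eqref{eq:x_ell}. Fix $m \in [1:K]$, pick any $m$-subset $\mathcal{S}$ of workers, and choose the pair $(\pi_t, \pi_{t+1})$ to be a cyclic shuffle satisfying $\A^{t+1}(\mathcal{S}) \cap \A^t(\mathcal{S}) = \emptyset$, so that the batches newly assigned to workers in $\mathcal{S}$ are disjoint from the batches they held at time $t$. The decoding constraint \eqref{eq:decoding-const} gives $H(\A^{t+1}(\mathcal{S}) \mid Z^t_{\mathcal{S}}, X_{\pi_t,\pi_{t+1}}) = 0$, and hence
\[
R_{\pi_t,\pi_{t+1}}\, d \;\geq\; H(X_{\pi_t,\pi_{t+1}} \mid Z^t_{\mathcal{S}}) \;\geq\; H(\A^{t+1}(\mathcal{S}) \mid Z^t_{\mathcal{S}}).
\]
Under the uncoded placement in \eqref{eq:def_A_k_W}, the right-hand side equals $d$ times the aggregate size of those sub-batches $\A^t_{\mathcal{W}}(i)$ for which $\A^t(i) \subseteq \A^{t+1}(\mathcal{S})$ (i.e., $i$ indexes a newly desired batch for $\mathcal{S}$) and $\mathcal{W} \cap \mathcal{S} = \emptyset$ (the sub-batch is absent from every worker in $\mathcal{S}$).

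Following the symmetric bounding strategy of \cite{Optimality/UncodedCache,Optimality/UncodedCache2}, I would then average this inequality uniformly over all $m$-subsets $\mathcal{S}$ and over all admissible cyclic shuffles respecting the disjointness. Because $R_{\text{worst-case}}^*$ upper bounds each $R_{\pi_t,\pi_{t+1}}$ appearing in the average, the averaged inequality descends to a lower bound on $R_{\text{worst-case}}^*$. A combinatorial count shows that a sub-batch at replication level $\ell = |\mathcal{W}|$ contributes with a coefficient $\gamma_{\ell,m}$ that depends only on $(\ell,m,K)$, so the averaged bound takes the form $R_{\text{worst-case}}^* \geq \sum_{\ell=0}^{K-1} \gamma_{\ell,m}\, x_\ell$ for every $m \in [1:K]$. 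Combining these $K$ inequalities with the data size identity $\sum_\ell x_\ell = N$ from \eqref{eq:size-const} and the storage budget $\sum_\ell \ell\, x_\ell \leq KS - N$ from \eqref{eq:constraint_cache} yields a linear program whose minimum over nonnegative $\{x_\ell\}$ is the tightest information-theoretic lower bound obtainable by this method. Standard LP analysis, together with the convexity of $R_{\text{worst-case}}^*(S)$ established in Claim~\ref{cl:1}, localizes the optimum at corner points where all the storage mass concentrates on a single replication level $\ell = m - 1$, yielding exactly the pair $(S = mN/K,\: R = N(K-m)/(Km))$; the full bound is then the lower convex envelope of these $K$ pairs.

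The two main obstacles are combinatorial and LP-theoretic. First, one must identify the exact family of pairs (subset, cyclic shuffle) over which to average: the family must preserve disjointness $\A^{t+1}(\mathcal{S}) \cap \A^t(\mathcal{S}) = \emptyset$ on every member while making the induced coefficients $\gamma_{\ell,m}$ as large as possible, since a poorly balanced symmetrization either discards useful cuts or unevenly counts sub-batches and weakens the bound. Second, one must verify, for instance via complementary slackness, that for each regime of $S$ the LP's minimizing distribution over $\{x_\ell\}$ is supported on a single index $\ell$, so that the $K$ corner points in the theorem statement exhaustively describe the extreme optima and the lower convex envelope is indeed the resulting piecewise-linear curve.
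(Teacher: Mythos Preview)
Your cut-set inequality $R_{\pi_t,\pi_{t+1}}d \geq H(\A^{t+1}(\mathcal{S})\mid Z^t_{\mathcal{S}})$ is correct, but it is too weak to reach the stated bound, and the gap is structural rather than a matter of bookkeeping. Two concrete failures:

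First, the disjointness requirement $\A^{t+1}(\mathcal{S})\cap\A^{t}(\mathcal{S})=\emptyset$ cannot be met for $|\mathcal{S}|>\lfloor K/2\rfloor$: both sets have size $|\mathcal{S}|\cdot N/K$, so their union would exceed $N$. Hence your family of inequalities indexed by $m$ simply does not exist for $m>\lfloor K/2\rfloor$.

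Second, even where it exists, your bound is capped at $H(\A^{t+1}(\mathcal{S}))=|\mathcal{S}|\cdot (N/K)$, and no averaging over subsets or shuffles can exceed this. Take $S=N/K$ (so $x_0=N$): the theorem requires $R^*_{\text{worst-case}}\geq N(K-1)/K$, but the best your cut gives is $\lfloor K/2\rfloor\cdot N/K\approx N/2$. The LP step cannot repair this, because the objective coefficients $\gamma_{\ell,m}$ you would extract are already too small.

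The paper's argument is not a cut-set bound on $H(\A^{t+1}(\mathcal{S})\mid Z^t_{\mathcal{S}})$. It fixes a \emph{full} $K$-cycle $\sigma$, observes that $H(\A\mid \Z^t_{\sigma_{[2:K]}},X_{\pi_t,\pi_{t+1}})=0$ (the whole data set is recoverable from any $K-1$ caches plus the transmission, via the cyclic decoding chain $\A^{t+1}(\sigma_k)=\A^t(\sigma_{k-1})$), and therefore $Nd\leq H(\Z^t_{\sigma_{[2:K]}},X_{\pi_t,\pi_{t+1}})$. It then expands the right side by the chain rule and, crucially, tightens each conditional term $H(Z^t_{\sigma_i}\mid \Z^t_{\sigma_{[i+1:K]}},X)$ by inserting the decodable batches $\A^t(\sigma_{[i:K]})$ into the conditioning. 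Averaging this over all $K!$ permutations $\sigma$ produces the \emph{single} inequality
\[
R^*_{\text{worst-case}}\;\geq\;\sum_{\ell=0}^{K-1}\frac{x_\ell}{\ell+1}-\frac{N}{K},
\]
and the $K-1$ linear facets in the theorem are then obtained by eliminating one pair $(x_{j-1},x_j)$ at a time using \eqref{eq:size-const} and \eqref{eq:constraint_cache}. The chain-rule expansion with cyclic decoding is the missing idea; without it you cannot bring the full $N$ into the inequality, which is what forces the coefficient $1/(\ell+1)$ rather than the much smaller coefficients your cut delivers.
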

The complete proof of Theorem~\ref{thm2} can be found in Appendix~\ref{sec:lower-bound}.

\begin{remark}[\textbf{Basic idea for the converse}]\normalfont
\label{re:basic-idea-converse} A lower bound over the optimal rate $R_{\pi_t,\pi_{t+1}}^*$ of a shuffle $(\pi_t,\pi_{t+1})$ serves also as a lower bound on the worst-case since the optimal worst-case rate is larger than the optimal rate for any shuffle, i.e., $R_{\text{worst-case}}^*\geq R_{\pi_t,\pi_{t+1}}^*$. Therefore, we get lower bounds over $R_{\text{worst-case}}^*$ by focusing on a sequence of shuffles, and then average out all the lower bounds.
The novel part in our proof is to carefully choose the right shuffles which lead to the best lower bound.
\end{remark}
 In our converse proof, we apply a novel bounding methodology
similar to the recent result in \cite{Optimality/UncodedCache,Optimality/UncodedCache2}, where the optimal uncoded cache placement problem for a file delivery system is considered. 
In this paper however, we consider the data delivery based on subsequent assignments according to random shuffles of the data. Our problem also requires storing the data under processing, and allows for storage update over time as opposed to \cite{Optimality/UncodedCache,Optimality/UncodedCache2}.
At the end, we arrive at a linear program subject to the  problem constraints (data size and storage constraints), which can be solved to obtain the best lower bounds over different regimes of the available storage.
In the following example, we show how to obtain the lower bounds on the worst-case rate  for the case of $N=K=4$.

\begin{example}\normalfont
\label{ex2}
Consider the case of $K=4$ workers, and $N=4$ i.i.d. data points, labeled as $\{D_1,D_2,D_3,D_4\}$.
Assume the 4 data points are assigned at time $t$ according to $\pi_t=(1,2,3,4)$, i.e., $\A^t(k) = D_k$ for $k\in[1:4]$.
Therefore, at time $t$, the data point $D_k$ is fully stored at the cache of the worker $w_k$, and partially stored at the remaining workers, which gives the storage content of the worker $w_k$ as follows:
\begin{align}
\label{eq:ex:cache-content}
Z_k^{t+1} = \left\{D_k, \underset{j\in [1:4]\setminus k}{\cup}D_j(k)\right\},
\end{align}
where $D_j(k)$ is the part of $D_j$ stored in the excess storage of worker $w_k$ at time $t$.

Following Remark~\ref{re:basic-idea-converse}, we start by considering the following shuffle $(\pi_t,\pi_{t+1})$: for a permutation $\sigma:(1,2,3,4)\rightarrow (\sigma_1,\sigma_2,\sigma_3,\sigma_4)$, the worker $w_{\sigma_k}$ is assigned at time $t+1$ the data point
 that was assigned to the worker $w_{\sigma_{k-1}}$ at time $t$, i.e., $\A^{t+1}({\sigma_{k}}) =\A^{t}({\sigma_{k-1}})= D_{\sigma_{k-1}}$. Using the decodability constraint in (\ref{eq:decoding-const}), worker $w_{\sigma_{k}}$ must be able to decode $\A^{t+1}({\sigma_{k}}) = D_{\sigma_{k-1}}$ using its own cache $Z_{\sigma_{k}}^t$ as well as the transmission $X_{(\pi_t,\pi_{t+1})}$ which gives the following condition:
\begin{align}
\label{eq:ex:deocdability}
 H(\A^{t+1}({\sigma_{k}})| Z^{t}_{\sigma_{k}},X_{(\pi_t,\pi_{t+1})})= H(D_{\sigma_{k-1}}| Z^{t}_{\sigma_{k}},X_{(\pi_t,\pi_{t+1})})=0,\quad \forall k\in[1:4].
\end{align}
Furthermore, from (\ref{eq:cache-min-desired}), each worker should store the assigned data point at time $t$, therefore,
\begin{align}
H(\A^t(\sigma_{k})|Z_{\sigma_{k}}^t) = H(D_{\sigma_{k}}|Z_{\sigma_{k}}^t) =0, \quad \forall k\in[1:4].
\label{eq:ex:cache-min-desired}
\end{align}
Consequently, the transmission $X_{\pi_t,\pi_{t+1}}$ as well as the cache of any three workers can decode the $4$ data points, which can be shown  as follows:
\begin{align}
H(\A|Z_{\sigma_{[2:4]}}^t,X_{(\pi_t,\pi_{t+1})}) &= H(D_1,D_2,D_3,D_4|Z_{\sigma_{[2:4]}}^t,X_{(\pi_t,\pi_{t+1})}) \nonumber\\
& \hspace{-15pt}= H(D_{\sigma_1},D_{\sigma_2},D_{\sigma_3},D_{\sigma_4}|Z_{\sigma_{[2:4]}}^t,X_{(\pi_t,\pi_{t+1})}) \nonumber\\
&\hspace{-30pt}\overset{(a)}{\leq} 
H(D_{\sigma_1}|Z_{\sigma_2}^t,X_{(\pi_t,\pi_{t+1})})+H(D_{\sigma_2}|Z_{\sigma_2}^t)+H(D_{\sigma_3}|Z_{\sigma_3}^t)+H(D_{\sigma_4}|Z_{\sigma_4}^t)\overset{(b)}{=}0, \label{eq:ex:bound1}
\end{align}
where $(a)$ follows from the fact that $H(A,B)\leq H(A)+H(B)$ and that conditioning reduces entropy, and $(b)$ follows directly using \eqref{eq:ex:deocdability} and \eqref{eq:ex:cache-min-desired}.
Next, we obtain the following bound using \eqref{eq:ex:bound1}:
\begin{align}
4d &= H(\A) =I(A;\Z_{\sigma_{[2:4]}}^t,X_{\pi_t,\pi_{t+1}}) + H(\A|\Z_{\sigma_{[2:4]}}^t,X_{\pi_t,\pi_{t+1}})\nonumber\\
&\overset{(a)}{\leq} H(\Z_{\sigma_{[2:4]}}^t,X_{\pi_t,\pi_{t+1}})  \overset{(b)}{=} H(Z_{\sigma_4}^t,X_{\pi_t,\pi_{t+1}}) +H(Z_{\sigma_2}^t,Z_{\sigma_3}^t|Z_{\sigma_4}^t,X_{\pi_t,\pi_{t+1}})\nonumber\\
& \leq H(X_{\pi_t,\pi_{t+1}}) + H(Z_{\sigma_4}^t) +H(Z_{\sigma_3}^t|Z_{\sigma_4}^t,X_{\pi_t,\pi_{t+1}})+H(Z_{\sigma_2}^t|Z_{\sigma_3}^t,Z_{\sigma_4}^t,X_{\pi_t,\pi_{t+1}})\nonumber\\
& \overset{(c)}{\leq} R^*_{\pi_t,\pi_{t+1}}d +  H(Z_{\sigma_4}^t) + H(Z_{\sigma_3}^t|Z_{\sigma_4}^t,D_{\sigma_3},D_{\sigma_4})+H(Z_{\sigma_2}^t|Z_{\sigma_3}^t,Z_{\sigma_4}^t,D_{\sigma_2},D_{\sigma_3},D_{\sigma_4})\nonumber\\
&\overset{(d)}{=} R^*_{\pi_t,\pi_{t+1}}d +H(D_{\sigma_4},D_{\sigma_1}({\sigma_4}),D_{\sigma_2}({\sigma_4}),D_{\sigma_3}({\sigma_4})) + H(D_{\sigma_1}({\sigma_3}),D_{\sigma_2}({\sigma_3})|Z_{\sigma_4}^t)\nonumber\\
&\qquad  +H(D_{\sigma_1}({\sigma_2})|Z_{\sigma_3}^t,Z_{\sigma_4}^t)\nonumber\\
&\overset{(e)}{=} R^*_{\pi_t,\pi_{t+1}}d +H(D_{\sigma_4},D_{\sigma_1}({\sigma_4}),D_{\sigma_2}({\sigma_4}),D_{\sigma_3}({\sigma_4})) + H(D_{\sigma_1}({\sigma_3}),D_{\sigma_2}(\sigma_3)|D_{\sigma_1}({\sigma_4}),D_{\sigma_2}({\sigma_4})) \nonumber\\
&\qquad +H(D_{\sigma_1}({\sigma_2})|D_{\sigma_1}({\sigma_3}),D_{\sigma_1}({\sigma_4}))\nonumber\\
& = R^*_{\pi_t,\pi_{t+1}}d +H(D_{\sigma_4}) +[H(D_{\sigma_1}({\sigma_4}))+H(D_{\sigma_1}({\sigma_3})|D_{\sigma_1}({\sigma_4}))+H(D_{\sigma_1}({\sigma_2})|D_{\sigma_1}({\sigma_3}),D_{\sigma_1}({\sigma_4}))] \nonumber\\
&\qquad+[H(D_{\sigma_2}({\sigma_4}))+H(D_{\sigma_2}({\sigma_3})|D_{\sigma_2}({\sigma_4}))] +H(D_{\sigma_3}({\sigma_4}))\nonumber\\
%& = R^*_{\pi_t,\pi_{t+1}}d +H(D_4) +H(D_1(4),D_1(3),D_1(2)) +H(D_2(4),D_2(3)) +H(D_3(4))\nonumber\\
& \overset{(f)}{=} R^*_{\pi_t,\pi_{t+1}}d +d +H(D_{\sigma_1}({\sigma_2},{\sigma_3},{\sigma_4})) +H(D_{\sigma_2}({\sigma_3},{\sigma_4})) +H(D_{\sigma_3}({\sigma_4}))\nonumber\\
&\overset{(g)}{\leq} R^*_{\text{worst-case}}d +d +H(D_{\sigma_1}({\sigma_2},{\sigma_3},{\sigma_4})) +H(D_{\sigma_2}({\sigma_3},{\sigma_4})) +H(D_{\sigma_3}({\sigma_4})),
\end{align}
where $(a)$ follows from \eqref{eq:ex:bound1}, \eqref{eq:cache-content}, and \eqref{eq:transmit-load}, where $\Z^t_{\sigma_{[2:4]}}$, and $X_{\pi_t,\pi_{t+1}}$ are deterministic functions of the data-set  $\A$, $(b)$ from the chain rule of entropy, $(c)$ follows from  \eqref{eq:ex:deocdability}, (\ref{eq:ex:cache-min-desired}), and because conditioning reduces entropy, $(d)$ follows from the storage content at time $t$ given in \eqref{eq:ex:cache-content}, where after knowing $\{D_{\sigma_3},D_{\sigma_4}\}$ (or similarly $\{D_{\sigma_2},D_{\sigma_3},D_{\sigma_4}\}$), the only parts left in $Z_{\sigma_3}^t$ (or $Z_{\sigma_2}^t$) are $\{D_{\sigma_1}({\sigma_3}),D_{\sigma_2}({\sigma_3})\}$ ($\{D_{\sigma_1}({\sigma_2})\}$), $(e)$ follows since out of the cache contents $Z_j^t$, the data sub-point $D_k(i)$ only depends on the sub-point $D_k(j)$, for any $i\neq j$, $(f)$ follows from the chain rule of entropy where $D_i(\mathcal{W})$ is the part of $D_i$ stored in the excess storage of the workers with the index $w_j$ where $j\in\mathcal{W}$ at time t, and finally $(g)$ follows from Remark~\ref{re:basic-idea-converse}.

%By assuming $d\rightarrow \infty$, then $\epsilon_d/d \rightarrow 0$, which gives the bound,
%\begin{align}
%R^*_{\text{worst-case}} \geq 3 - \frac{1}{d}H(D_1(2,3,4)) -\frac{1}{d}H(D_2(3,4)) -\frac{1}{d}H(D_3(4)).
%\end{align}
%Starting from any permutation of the four data points $\sigma:\{1,2,3,4\} \rightarrow \{\sigma_1,\sigma_2,\sigma_3,\sigma_4\}$, i.e., by considering the data shuffle characterized by $\pi_t=\{\sigma_1,\sigma_2,\sigma_3,\sigma_4\}$ and $\pi_{t+1}=\{\sigma_4,\sigma_1,\sigma_2,\sigma_3\}$, we can arrive to a more general bound,
%\begin{align}
%R^*_{\text{worst-case}}d \geq 3d-H(D_{\sigma_1}({\sigma_2,\sigma_3,\sigma_4})) -H(D_{\sigma_2}({\sigma_3,\sigma_4})) -H(D_{\sigma_3}({\sigma_4})). 
%\end{align}
Summing up over all possible $4! =24$ permutations of the ordered set $(1,2,3,4)$, we arrive at the following bound,
\begin{align}
R^*_{\text{worst-case}}d \geq 3d-\frac{1}{24}\sum_{\mathbf{\sigma}\in[4!]}\left[H(D_{\sigma_1}({\sigma_2,\sigma_3,\sigma_4})) +H(D_{\sigma_2}({\sigma_3,\sigma_4})) +H(D_{\sigma_3}({\sigma_4}))\right]\nonumber\\
\overset{(a)}{=}3d- \frac{1}{24}\sum_{\mathbf{\sigma}\in[4!]}\left[H(D_{\sigma_1}({\sigma_2,\sigma_3,\sigma_4})) +H(D_{\sigma_1}({\sigma_2,\sigma_3})) +H(D_{\sigma_1}({\sigma_2}))\right],
\label{eq:ex:bound2}
\end{align}
where $[4!]$ is the set of all possible permutations of the ordered set $(1,2,3,4)$, and $(a)$ follows due to the symmetry in the summation by simple change of summation indexes.
Following  the definition in \eqref{eq:def_A_k_W}, we can define $D_{k,\mathcal{W}}$ as the part of $D_k$ stored exclusively in the excess storage of the workers whose labels are in the set $\mathcal{W}$. 
According to $\pi_t =(1,2,3,4)$, at time $t$, $D_{k,\mathcal{W}}$ is only defined for $k\not\in \mathcal{W}$ ($w_k$ does not store $D_k$ as excess storage).
Therefore, at time $t$, we can express the following entropies in terms of $D_{k,\mathcal{W}}$ as follows:
\begin{align}
H(D_k) = \sum_{\mathcal{W}\in 2^{[1:4]\setminus k}} \vert D_{k,\mathcal{W}}\vert d,\quad H(D_k(j)) = \sum_{\substack{\mathcal{W} \subseteq [1:K]\setminus k\\ j\in \mathcal{W}}} \vert D_{k,\mathcal{W}}\vert d,
\end{align}
where $ \vert D_{k,\mathcal{W}}\vert $ is entropy of $D_{k,\mathcal{W}}$ normalized by the data point size $d$.
In the summation term of \eqref{eq:ex:bound2}, we obtain the term $\vert D_{k,\mathcal{W}}\vert$ only for $\vert\mathcal{W}\vert\in\{1,2,3\}$. Next, we show how to find the coefficients of $\vert D_{k,\mathcal{W}}\vert$ for different sizes of $\mathcal{W}$.

\noindent$ \bullet$ \hspace{5pt}\textbf{Coefficient of $\vert D_{k,\mathcal{W}}\vert$ for $\vert \mathcal{W} \vert =1$:}
Due to symmetry, we notice that obtaining the coefficient of $\vert D_{k,\mathcal{W}}\vert$ in the summation in \eqref{eq:ex:bound2} for any $\vert \mathcal{W} \vert =1$; is equivalent to obtaining the coefficient of $\vert D_{1,\{2\}}\vert$. We get  $\vert D_{1,\{2\}}\vert$ in the first term of the summation, i.e., $H(D_{\sigma_1}({\sigma_2,\sigma_3,\sigma_4}))$ only if $\sigma_1 =1$ which is satisfied in $6$ out of the $24$ permutations. In the second term, i.e., $H(D_{\sigma_1}(\sigma_2,\sigma_3))$, we obtain $\vert D_{1,\{2\}}\vert$ only if  $\sigma_1=1$ and $\sigma_4\neq 2$ in total number of 4 permutations. In the third term,  i.e., $H(D_{\sigma_1}(\sigma_2))$, we obtain $\vert D_{1,\{2\}}\vert$ only if  $\sigma_1=1$ and $\sigma_2= 2$ in total number of $2$ permutations. Therefore, the coefficient of $\vert D_{1,\{2\}}\vert$, hence any $\vert D_{k,\mathcal{W}}\vert$ for $\vert \mathcal{W} \vert =1$, is $\frac{6+4+2}{24}=\frac{1}{2}$.

\noindent$ \bullet$ \hspace{5pt}\textbf{Coefficient of $\vert D_{k,\mathcal{W}}\vert$ for $\vert \mathcal{W} \vert =2$:}
Similarly, we obtain the coefficient of $\vert D_{k,\mathcal{W}}\vert$ for any $\vert \mathcal{W} \vert =2$ by obtaining the coefficient of $\vert D_{1,\{2,3\}}\vert$. We get  $\vert D_{1,\{2,3\}}\vert$ in the first two terms of the summation only if $\sigma_1 =1$ which is satisfied in $6$ out of the $24$ permutations. In the third term, we obtain  $\vert D_{1,\{2,3\}}\vert$ only if  $\sigma_1=1$ and $\sigma_2\in \{2,3\}$ in total number of $4$ permutations. Therefore, the coefficient of $\vert D_{1,\{2,3\}}\vert$, hence any $\vert D_{k,\mathcal{W}}\vert$ for $\vert \mathcal{W} \vert =2$, is $\frac{6+6+4}{24}=\frac{2}{3}$.

\noindent$ \bullet$ \hspace{5pt}\textbf{Coefficient of $\vert D_{k,\mathcal{W}}\vert$ for $\vert \mathcal{W} \vert =3$:}
We obtain the coefficient of $\vert D_{k,\mathcal{W}}\vert$ for any $\vert \mathcal{W} \vert =3$ by obtaining the coefficient of $\vert D_{1,\{2,3,4\}}\vert$. We get $\vert D_{1,\{2,3,4\}}\vert$ in the first three terms of the summation only if $\sigma_1 =1$ which is satisfied in $6$ out of the $24$ permutations.  Therefore, the coefficient of $\vert D_{1,\{2,3,4\}}\vert$, hence any $\vert D_{k,\mathcal{W}}\vert$ for $\vert \mathcal{W} \vert =3$, is $\frac{6+6+6}{24}=\frac{3}{4}$.

Therefore, we can simplify the bound in \eqref{eq:ex:bound2} as follows:
\begin{align}
R^*_{\text{worst-case}}d \geq 3d-\frac{1}{2} \sum_{k=1}^4 \sum_{\substack{\mathcal{W} \subseteq [1:K]\setminus k\\ \vert\mathcal{W}\vert =1}} &\vert D_{k,\mathcal{W}}\vert d -\frac{2}{3} \sum_{i=1}^4 \sum_{\substack{\mathcal{W} \subseteq [1:K]\setminus k\\ \vert\mathcal{W}\vert =2}} \vert D_{k,\mathcal{W}}\vert d-\frac{3}{4} \sum_{k=1}^4 \sum_{\substack{\mathcal{W} \subseteq [1:K]\setminus k\\ \vert\mathcal{W}\vert =3}} \vert D_{k,\mathcal{W}}\vert d\nonumber\\
&= 3d -\frac{x_1d}{2} - \frac{2x_2 d}{3} - \frac{3x_3d}{4},
\label{eq:ex:bound3}
\end{align}
where $x_{\ell}$ for $\ell\in[0:3]$ is defined similar to \eqref{eq:x_ell} as $
 x_{\ell}=\sum_{k=1}^K \sum_{\substack{\mathcal{W}\subseteq [1:4]\setminus k:\: \vert\mathcal{W}\vert =\ell}}\vert D_{k,\mathcal{W}}\vert.$
By dividing both sides by $d$, we get the following bound:
\begin{align}
R^*_{\text{worst-case}}\geq 3 -\frac{x_1}{2} - \frac{2x_2}{3}  - \frac{3x_3}{4}.
\label{eq:ex:bound4}
\end{align} 
Moreover, the data size  and the excess storage size constraints for this example follow \eqref{eq:size-const} and \eqref{eq:constraint_cache}, respectively. Hence, we obtain the following constraints:
\begin{align}
&x_0+x_1+x_2+x_3 =4,\label{ex:size-const}\\
&x_1+2x_2+3x_3 \leq 4(S-1).\label{ex:cache-const}
\end{align}

We get the first bound over $R^*_{\text{worst-case}}$  by eliminating $x_1$ from \eqref{eq:ex:bound4} using the bound in \eqref{ex:cache-const} as follows:
\begin{align}
R^*_{\text{worst-case}} &\geq 3 -\frac{x_1}{2} - \frac{2x_2}{3}  - \frac{3x_3 }{4}\geq
3-\frac{1}{2}\left(4(S-1) -2x_2-3x_3\right) - \frac{2x_2}{3}  - \frac{3x_3}{4}\nonumber\\
&=5 -2S+\frac{x_2 }{3}+\frac{3x_3}{4} \overset{(a)}{\geq} 5 -2S,\label{eq:ex:bound-fa}
\end{align}
where $(a)$ follows since $x_2,x_3\geq 0$.

We get the second bound over $R^*_{\text{worst-case}}$ in two steps. First, we eliminate $x_1$ from \eqref{eq:ex:bound4} and \eqref{ex:cache-const} using \eqref{ex:size-const} to get the following two bounds:
\begin{align}
R^*_{\text{worst-case}} &\geq 3 -\frac{1}{2}\left(4-x_0-x_2-x_3\right) - \frac{2x_2}{3}  - \frac{3x_3}{4}  = 1 +\frac{x_0 }{2}-\frac{x_2}{6}-\frac{x_3}{4},\label{eq:ex:bound6a}\\
4(S-1) &\geq \left(4-x_0-x_2-x_3\right) +2x_2 +3x_3 = 4-x_0 +x_2+2x_3 .\label{eq:ex:bound6b}
\end{align}
We eliminate $x_2$ from \eqref{eq:ex:bound6a} using the bound in \eqref{eq:ex:bound6b} to obtain
\begin{align}
R^*_{\text{worst-case}} &\geq  1 +\frac{x_0}{2} -\frac{x_2}{6}-\frac{x_3}{4} \geq 1 +\frac{x_0}{2} -\frac{1}{6}\left(4(S-1) -4 +x_0-2x_3\right)-\frac{x_3}{4}\nonumber\\
& = \frac{7}{3} - \frac{2S}{3}+\frac{x_0}{3}+\frac{x_3}{12}\overset{(a)}{\geq}\frac{7-2S}{3},\label{eq:ex:bound-fb}
\end{align}
where $(a)$ follows since $x_0,x_3\geq 0$.

Following similar steps, we get a third bound over $R^*_{\text{worst-case}}$ by first eliminating $x_2$ from \eqref{eq:ex:bound4} and \eqref{ex:cache-const} using \eqref{ex:size-const} to get the following two bounds:
\begin{align}
R^*_{\text{worst-case}} &\geq 3 -\frac{x_1}{2} - \frac{2}{3} \left(4-x_0-x_1-x_3\right) - \frac{3x_3}{4}  = \frac{1}{3} +\frac{2x_0 }{3}+\frac{x_1}{6}-\frac{x_3}{12},\label{eq:ex:bound7a}\\
4(S-1)&\geq x_1+2\left(4-x_0-x_1-x_3\right)  +3x_3 = 8-2x_0 -x_+x_3.\label{eq:ex:bound7b}
\end{align}
We eliminate $x_3$ from \eqref{eq:ex:bound7a} using the bound in \eqref{eq:ex:bound7b} and arrive to
\begin{align}
R^*_{\text{worst-case}} &\geq   \frac{1}{3} +\frac{2x_0}{3} +\frac{x_1}{6}-\frac{x_3}{12} \geq \frac{1}{3} +\frac{2x_0 }{3}+\frac{x_1}{6}-\frac{1}{12}\left( 4(S-1) -8 +2x_0+x_1 \right)\nonumber\\
& = \frac{4}{3} - \frac{S}{3}+\frac{5x_0}{6}+\frac{x_1}{12}\overset{(a)}{\geq}\frac{4-S}{3},\label{eq:ex:bound-fc}
\end{align}
where $(a)$ follows since $x_0,x_1\geq 0$.

In summary, we obtain in \eqref{eq:ex:bound-fa}, \eqref{eq:ex:bound-fb}, and \eqref{eq:ex:bound-fc} the following bounds on $R^*_{\text{worst-case}}$:
\begin{align}
R^*_{\text{worst-case}} \geq 5-2S, \quad
R^*_{\text{worst-case}} \geq\frac{7-2S}{3},\quad 
R^*_{\text{worst-case}} \geq\frac{4-S}{3}.
\end{align}
The intersection of the three bounds is the lower convex hull of the $4$ storage-rate pairs, $(S= m, R= \frac{4-m}{m})$ for $m\in[1:4]$, which is the obtained lower bound over $R^*_{\text{worst-case}}$ given by the blue curve in Figure~\ref{fig:ex_gap}, satisfying Theorem~\ref{thm2} for $K=N=4$.
\vspace{10pt}
\hrule

\end{example}

In our next result, we compare the upper and lower bounds in Theorems \ref{thm1} and \ref{thm2}, respectively, and show that they are within a constant multiplication gap of each other.

\begin{theorem}
\label{thm3}
For a data-set containing $N\in\mathbb{N}$ data points, and a set of $K\in\mathbb{N}$ distributed workers, the gap ratio between the upper and the lower bounds on $R_{\text{worst-case}}^*$ given by Theorems~\ref{thm1}, and \ref{thm2}, respectively, is bounded as follows:
\begin{align}
\frac{R_{\text{worst-case}}^{\text{upper}}}{R_{\text{worst-case}}^{\text{lower}}} \leq \frac{K}{K-1} \leq 2.
\label{eq:thm3}
\end{align}
\end{theorem}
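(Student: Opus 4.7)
The plan is to exploit the piecewise-linear convex structure of both bounds. As lower convex envelopes of finitely many decreasing points, $R_{\text{worst-case}}^{\text{upper}}(S)$ and $R_{\text{worst-case}}^{\text{lower}}(S)$ are convex, non-increasing, piecewise-linear functions on $[N/K,N]$. Let $U = \{S_i^u = N/K + iN(K-1)/K^2 : i\in[0:K]\}$ be the upper-bound breakpoints and $L = \{S_m^l = mN/K : m\in[1:K]\}$ the lower-bound breakpoints. On every sub-interval induced by the common refinement $U\cup L$, both functions are strictly positive linear functions of $S$, so the ratio $R^{\text{upper}}/R^{\text{lower}}$ is a ratio of linear functions whose derivative has constant sign, and therefore is monotone. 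Consequently, $\sup_S R^{\text{upper}}/R^{\text{lower}}$ is attained at one of the finitely many joint vertices in $U\cup L$, and it suffices to verify the bound at each such vertex.

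The crux of the proof is the identity
\[
R_{\text{worst-case}}^{\text{lower}}(S_i^u) \;=\; R_i^u \;=\; \frac{N(K-i)}{K(i+1)}, \qquad i\in\{1,2,\ldots,K-1\},
\]
which says that the two bounds meet at every interior vertex of the upper bound. To establish it, a direct calculation shows $S_i^l < S_i^u < S_{i+1}^l$ for $1\leq i \leq K-1$, so $S_i^u$ lies on the $i$-th linear segment of $R_{\text{worst-case}}^{\text{lower}}$, whose slope is $-K/(i(i+1))$. Substituting $S_i^u - S_i^l = N(K-i)/K^2$ into the linear interpolant yields $\frac{N(K-i)}{Ki} - \frac{K}{i(i+1)}\cdot\frac{N(K-i)}{K^2} = \frac{N(K-i)}{Ki}\cdot\frac{i}{i+1} = R_i^u$. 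Thus at every $S_i^u$ with $i\in[1:K-1]$ the two bounds coincide and the ratio equals $1$; the endpoint $S_K^u = N$ is trivial since both bounds vanish.

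It remains to handle the vertices in $L\setminus U$. At $S = N/K = S_0^u = S_1^l$ we have $R^{\text{upper}} = N$ and $R^{\text{lower}} = N(K-1)/K$, giving the ratio exactly $K/(K-1)$. For $m\in\{2,\ldots,K-1\}$, $S_m^l$ lies strictly inside the upper-bound segment $[S_{m-1}^u,S_m^u]$; evaluating the linear interpolant of $R^{\text{upper}}$ at $S_m^l$ and clearing denominators, the desired inequality $(K-1)R^{\text{upper}}(S_m^l) \leq K R_m^l$ reduces, after using the algebraic simplification $(K-1)(K-m+1) - K(K-m) = m-1$, to $(m+1)(m-1) \leq (K+1)(m-1)$, which holds since $m \leq K$. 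Combining all cases, $R^{\text{upper}}/R^{\text{lower}} \leq K/(K-1)$ at every joint vertex, hence everywhere on $[N/K,N)$ by the monotonicity argument; since $K\geq 2$, we also get $K/(K-1)\leq 2$.

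The main obstacle is the vertex identity $R^{\text{lower}}(S_i^u)=R_i^u$. Without it one is comparing the lower convex envelopes of two seemingly unrelated vertex sets, and there is no a priori reason the multiplicative gap should be bounded by a constant independent of $N$. The identity forces the upper and lower bounds to touch at every interior upper-bound vertex, collapsing the gap analysis to a single linear segment near $S=N/K$ on which the worst-case ratio of $K/(K-1)$ is realized.
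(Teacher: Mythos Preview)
Your proposal is correct and follows essentially the same approach as the paper: reduce to the joint breakpoints of the two piecewise-linear envelopes, establish the coincidence $R^{\text{lower}}(S_i^u)=R_i^u$ at the interior upper-bound vertices (the paper's Section~\ref{sec:gap-analysis}, equation~\eqref{eq:lower-bound-gap2}), and then bound the ratio at the lower-bound vertices $S=mN/K$, where the maximum $K/(K-1)$ is attained at $m=1$. The only cosmetic differences are that you justify the reduction to vertices via the monotonicity of a M\"obius ratio (the paper simply invokes ``properties of piecewise-linear functions''), and at the lower-bound vertices you verify the inequality $(m+1)(m-1)\le(K+1)(m-1)$ directly, whereas the paper computes the exact ratio $1+\tfrac{2}{(K-1)(j+1)}$ and observes it is decreasing in $j$.
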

This Theorem shows that there is  a vanishing gap between the bounds as the number of workers $K$ increases, i.e., $\lim_{K\rightarrow \infty} \big(\frac{K}{K-1}\big) =1$. We also show that for the discrete set of storage points considered in Theorem~\ref{thm1}, i.e., $S=\left(1+i\frac{K-1}{K}\right) \frac{N}{K}$ for $i\in[1:K]$, our achievable scheme in fact optimal, and that the gap only results in the values of storage in between, i.e., memory sharing is not optimal in this case. For example, consider the bounds on $R_{\text{worst-case}}^*$ for $K=N=4$ shown in Figure~\ref{fig:ex_gap}. We first notice that the achieved storage-rate pairs $(S=7/4,R=3/2)$, $(S=5/2,R=2/3)$, and $(S=13/4,R=1/4)$ are optimal. Furthermore, we can show that the maximum gap between the bounds is at $S=1$, which is given by $4/3$, satisfying the bound in \eqref{eq:thm3}. The formal proof for the maximum gap analysis for any value of $K$ and $N$ can be found in Appendix~\ref{sec:gap-analysis}.

The next Theorem provides an improved gap through a new achievable scheme, which we call as \textit{``aligned coded shuffling"}.
\begin{theorem}
\label{thm4}
For a data-set containing $N\in\mathbb{N}$ data points, and a set of $K\in\mathbb{N}$ distributed workers, the lower bound over $R_{\text{worst-case}}^*$ in Theorem~\ref{thm2} is in fact achievable for $K<5$ (hence gives the optimal rate), while for $K\geq 5$ is achievable within a gap ratio bounded as
\begin{align}
\frac{R_{\text{worst-case}}^{\text{upper}}}{R_{\text{worst-case}}^{\text{lower}}} \leq \frac{K-\frac{1}{3}}{K-1} \leq \frac{7}{6}.
\end{align}
\end{theorem}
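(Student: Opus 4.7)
The plan is to develop an \emph{aligned coded shuffling} scheme that achieves additional corner points on the lower bound of Theorem~\ref{thm2} beyond those attainable by the scheme of Theorem~\ref{thm1}. Combined via memory sharing (Claim~\ref{cl:1}), the resulting achievable envelope will be shown to match the lower bound for $K < 5$ and to approach it within the claimed ratio $\frac{K - 1/3}{K - 1}$ for $K \geq 5$.

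\textbf{Step 1: Warm-up at minimum storage.} I would begin by achieving the lower-bound corner $(S,R) = (N/K,\;N(K-1)/K)$, which is the case not handled by Theorem~\ref{thm1} (recall Example~\ref{ex1} with $i=0$ sends rate $N$). At $S = N/K$ each worker stores only the currently assigned batch, and the shuffle $\pi_{t+1}$ induces an assignment mapping on $[1:K]$ that decomposes into disjoint cycles. For each cycle of length $\ell$, $j_1 \to j_2 \to \cdots \to j_\ell \to j_1$, the master transmits the $\ell - 1$ coded symbols $\A^t(j_1) \oplus \A^t(j_2),\ \A^t(j_2) \oplus \A^t(j_3),\ \ldots,\ \A^t(j_{\ell-1}) \oplus \A^t(j_\ell)$. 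Worker $w_{j_r}$ then decodes its desired batch by iteratively XOR-ing with its stored batch. A single $K$-cycle is the worst case and yields rate $N(K-1)/K$, matching the lower bound at $S = N/K$.

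\textbf{Step 2: Reaching additional lower-bound corners.} For the corners $(S,R) = (mN/K,\;N(K-m)/(Km))$ with $m \in [2{:}K-1]$, I would partition each data batch across the $\binom{K-1}{m-1}$ dimensions indexed by subsets of size $m-1$, and maintain a structural invariant placement analogous to Appendix~\ref{sec:upper-bound}, but augmented with an \emph{alignment-aware relabeling} of sub-parts during the storage update. The relabeling after each shuffle is designed so that each coded broadcast serves a group of $m+1$ workers by aligning the interference they respectively cannot resolve onto a common subspace of minimum dimension. This compresses the broadcast compared to the basic scheme and attains the lower-bound corner at $S = mN/K$.

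\textbf{Step 3: Optimality for $K < 5$.} For $K \in \{2,3,4\}$ I would verify that Steps~1 and~2 cover every corner of the lower bound in Theorem~\ref{thm2}. For $K = 4$, these are $(1,3),\ (2,1),\ (3,1/3),\ (4,0)$ (illustrated in Figure~\ref{fig:ex_gap}); once all four are achievable, Claim~\ref{cl:1} interpolates between them and the lower convex envelope of the achievable points coincides with the lower bound on the entire range $S \in [N/K, N]$. For $K = 2, 3$ the verification is analogous and shorter since there are fewer interior corners.

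\textbf{Step 4: Gap analysis for $K \geq 5$.} For $K \geq 5$ the alignment construction no longer reaches every lower-bound corner; I would instead form the lower convex envelope of the union of the aligned-coded points with the Theorem~\ref{thm1} points, and compare pointwise against the lower bound in Theorem~\ref{thm2}. Following the methodology of Appendix~\ref{sec:gap-analysis}, the maximum pointwise ratio is attained at a specific intermediate storage value and evaluates to $\frac{K - 1/3}{K - 1}$, which is at most $7/6$ (attained at $K = 5$) and tends to $1$ as $K \to \infty$.

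The principal difficulty is Step~2: constructing coded transmissions whose interference at each worker aligns onto the minimum number of dimensions dictated by the lower bound, while simultaneously designing a structural-invariant storage update that preserves this alignment across \emph{arbitrary} future worst-case shuffles. Unlike Appendix~\ref{sec:upper-bound}, where each coded symbol serves a group of workers via disjoint side information, here the cached sub-parts must be relabeled so that overlapping interference patterns from different coded symbols collapse onto a common subspace, and proving that this relabeling is consistent under every permutation $\pi_{t+1}$ is the most delicate step of the argument.
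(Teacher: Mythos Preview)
Your outline has the right high-level shape, but Step~2 is both overclaiming and internally inconsistent with Step~4, and this is where the real content of the proof lives. In Step~2 you assert that the aligned scheme ``attains the lower-bound corner at $S=mN/K$'' for \emph{every} $m\in[2{:}K-1]$; if that were true, the lower bound would be achieved for all $K$, not just $K<5$, and Step~4 would be unnecessary. The paper does \emph{not} achieve every corner: it constructs aligned schemes only at the three specific values $m\in\{1,K-2,K-1\}$ (together with the trivial $m=K$). These three happen to exhaust $[1{:}K-1]$ precisely when $K\le 4$, which is why optimality holds for $K<5$; for $K\ge 5$ the corners $m\in[2{:}K-3]$ remain unachieved, the basic Theorem~\ref{thm1} scheme is used there, and the maximum gap now occurs at $j=2$ in the formula of \eqref{eq:gap-ratio}, giving $1+\tfrac{2}{3(K-1)}=\tfrac{K-1/3}{K-1}$.

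The substantive piece you are missing is the actual alignment construction at $m=K-2$. With the $\binom{K-1}{m-1}$-part placement you describe, each sub-point $\A_{[1:K]\setminus\{i,j,k\}}(i;j)$ needed by $w_j$ is interference to \emph{exactly one} other worker $w_k$. This ``single-interferer'' structure is what makes alignment possible: for each $k$ one collects the interference sub-batches $\mathcal{I}(j;k)$, partitions them into pieces $\mathcal{I}^{(i)}(j;k)$ of sizes $S_{k,i}$ dictated by the shuffling matrix, and XORs across $j$ to form coded symbols $\mathcal{C}^{(i)}(k)$ that are simultaneously aligned interference at $w_k$, available at $w_i$, and useful to all remaining workers. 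The $N$ resulting coded sub-points can then be transmitted in $(K-1)N/K$ independent linear combinations because each worker already holds $N/K$ of them, yielding rate $\tfrac{2N}{K(K-2)}$. For general $m$ in the middle of the range the interference-to-one-worker property fails, and no analogous construction is given; your Step~2 sketch (``align interference onto a common subspace of minimum dimension'') does not supply one. Your Step~1 cycle-XOR scheme for $m=1$ is fine and equivalent to the paper's ``$K-1$ independent linear combinations'' approach.
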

The above theorem is proved by closing the gap between the two bounds in Theorems~\ref{thm1} and \ref{thm2} for the storage values $S=m \frac{N}{K}$, and $m\in\{1,K-2,K-1\}$. This can be done with the use of sophisticated interference alignment mechanisms, which force the interference seen by each worker to occupy the minimum possible dimensions. In Appendix~\ref{sec:close-gap}, we present the complete proof of Theorem~\ref{thm4}.
To illustrate the new ideas introduced here, we revisit again Example~\ref{ex1} of $K=N=4$ to show how the gap between the lower and the upper bounds on $R^*_{\text{worst-case}}$ can be closed in this case.
 
 \begin{example}\normalfont
 \label{ex3}
 From Figure~\ref{fig:ex_gap}, we notice that if we close the gap for the storage points $S=m$, for $m\in[1:3]$, then we can fully characterize $R^*_{\text{worst-case}}$ using memory sharing between the achievable points (see Claim~\ref{cl:1}).  
In the achievability, we consider a different placement strategy, which is also invariant in the structure.  We also consider the aligned coded shuffling scheme for data delivery, which reduces the rate by forcing the interference to occupy the minimum possible dimensions.
 We consider the same subsequent shuffles $\pi_t =(1,2,3,4)$, and $\pi_{t+1} =(2,3,4,1)$. Furthermore, we define $\delta_t(i)$ as the index of the worker being assigned the data point $D_i$ at time $t$. Therefore, $\delta_t =(4,1,2,3)$, and $\delta_{t+1} =(1,2,3,4)$.
Next, we discuss the achievability for storage values $S=m$, and $m\in[1:3]$. 

\noindent$\bullet$\hspace{5pt} \textbf{Case} $\mathbf{m=1}$ ($\mathbf{S=1}$):

As mentioned before in Example~\ref{ex1}, the storage placement for the case $m=1$ (no excess storage) is trivial where every worker only stores the data point which needs to be processed.
We start by sending $3$ independent linear combinations of the $4$ data points as follows:
\begin{align}
X_{\pi_t,\pi_{t+1}} =\{L_1(D_1,D_2,D_3,D_4),\: L_2(D_1,D_2,D_3,D_4),\: L_3(D_1,D_2,D_3,D_4)\},
\end{align}
where $L_1$, $L_2$, and $L_3$ are three independent linear functions.
We notice that each worker already stores one data point, and then can decode the $3$ remaining data points and acquire the one needed at time $t+1$. For instance, worker $w_1$ has $D_1$ from the previous shuffle at time $t$, and then can get 3 independent linear functions enough to decode $D_2$, $D_3$, and $D_4$.
Therefore, the pair $(S=1,R=3)$ is achievable for $K=N=4$ closing the gap in Figure~\ref{fig:ex_gap} for $S=1$.
The storage update is also trivial, where every worker keeps the new assigned data point and discard the remaining three points. 

\begin{figure}[t]
  \begin{center}
  \includegraphics[width=0.99\columnwidth]{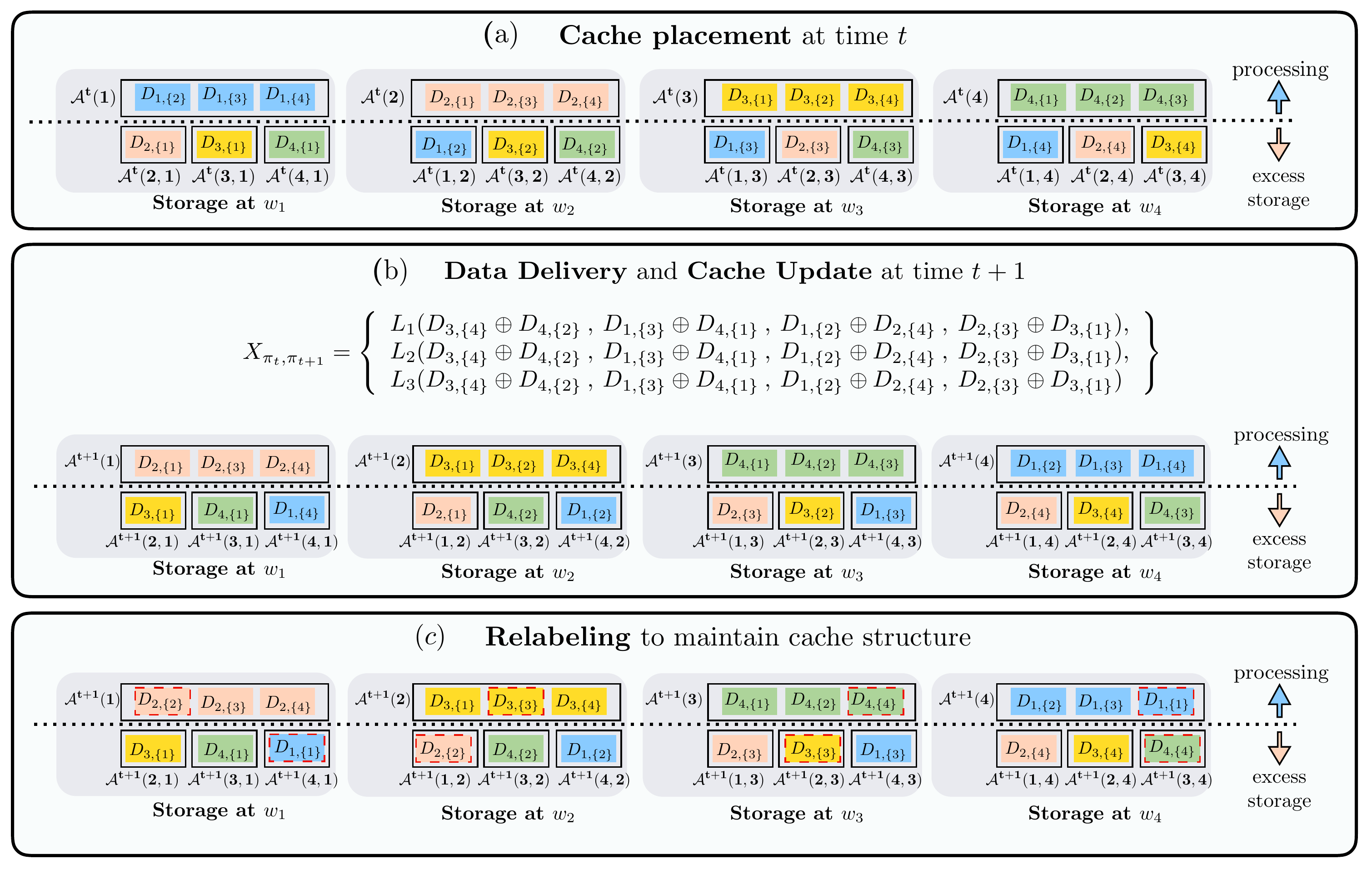}
\caption{
An example on closing the gap of $K=4$ workers, $N=4$ data points, and $m=2$ ($S=2$): (a) Structural Invariant Storage placement, (b) Data Delivery and storage update, and (c) Relabeling some sub-points in red dashed frames to maintain the storage structure.
At time $t$, every data point $D_i$ is partitioned into $3$ sub-points each labeled by a unique subset of length $1$ of the set $[1:4]\setminus \delta_t(i)$. 
At time $t+1$, for every data point $D_i$ the sub-point $D_{i,\{\delta_{t+1}(i)\}}$ is relabeled as $D_{i,\{\delta_{t}(i)\}}$.
Above the dotted line is the data point fully stored for processing, and below the dotted line is the excess storage used to store the sub-points containing the worker's index.\label{fig:ex-m-2}}
  \end{center}
\end{figure}

\noindent$\bullet$\hspace{5pt} \textbf{Case} $\mathbf{m=2}$ ($\mathbf{S=2}$):

\noindent \underline{Storage Placement} Every data point at time $t$ is partitioned into $3$ sub-points of size $d/3$ bits each, where every sub-point of the data point $D_i$  is labeled by a unique subset $\mathcal{W} \subseteq [1:4]\setminus \delta_t(i)$, where $\vert \mathcal{W} \vert =1$. For example, the data point $D_1$ at time $t$ is partitioned as $D_1= \{D_{1,\{2\}},D_{1,\{3\}},D_{1,\{4\}}\}$.
The storage placement at time $t$ follows from Figure~\ref{fig:ex-m-2}a. First, every worker stores the data point needed to be processed. Then, in the excess storage, every worker $w_k$ stores the sub-points labeled by the set $\mathcal{W}$, where $k\in \mathcal{W}$, e.g., $w_1$ stores the sub-point $ \A^t(2,1)= \{D_{2,\{1\}}\}$ from $D_2$.
To summarize, each worker stores the assigned data point of size $d$, and for each one of the remaining $3$ data points, it stores $1$ sub-point of size $d/3$. That is, the storage requirement is given by
$S = 1 + 3\times 1\times 1/3= 2$, which satisfies the storage constraint.

\noindent \underline{Aligned Coded Shuffling} According to the storage placement at time $t$ in Figure~\ref{fig:ex-m-2}a, at time $t+1$ every worker needs $2$ sub-points of the assigned data point, where every needed sub-point is available at exactly $2$ other workers. From an interference perspective, every one of the needed sub-points is an interference to only one worker, e.g., $D_{3,\{4\}}$ needed by $w_2$ at time $t+1$, is available at $w_3$ and $w_4$, and cause interference at $w_1$ (neither needed nor available). Therefore, $w_1$ can face interference from total $2$ sub-points: $D_{3,\{4\}}$ (needed by $w_2$), and $D_{4,\{2\}}$ (needed by $w_3$). By aligning these two sub-points and considering the coded symbol $D_{3,\{4\}}\oplus D_{4,\{2\}}$, we notice the following: 1) This coded symbol is available at the worker $w_4$; 2) It is useful for the two workers $w_2$, and $w_3$ at the same time; and 3) It is the only source of interference for $w_1$. Similarly, we can produce $3$ more aligned symbols to get in total $4$ aligned  coded symbols, of size $d/\binom{3}{1}$ bits each, summarized as follows:
\begin{align}
&D_{3,\{4\}}\oplus D_{4,\{2\}}: \text{Interference at $w_1$, available at $w_4$, and useful for $w_2$, and $w_3$};\nonumber\\
&D_{1,\{3\}}\oplus D_{4,\{1\}}: \text{Interference at $w_2$, available at $w_1$, and useful for $w_3$, and $w_4$};\nonumber\\
&D_{1,\{2\}}\oplus D_{2,\{4\}}: \text{Interference at $w_3$, available at $w_2$, and useful for $w_1$, and $w_4$};\nonumber\\
&D_{2,\{3\}}\oplus D_{3,\{1\}}: \text{Interference at $w_4$, available at $w_3$, and useful for $w_1$, and $w_2$}.
\end{align}
Therefore, these $4$ coded symbols provide every worker with the $2$ needed sub-points. Moreover, it suffices to send only three independent linear combinations of these $4$ coded symbols as shown in Figure~\ref{fig:ex-m-2}b, since every worker already has one of them available locally at its storage. The rate of this transmission is $R=3\times 1/3 =1$, and the pair $(S=2,R=1)$ is achievable which closes the gap in Figure~\ref{fig:ex_gap} for $S=2$.

\noindent \underline{Storage update and  sub-points relabeling}
The storage update at time $t+1$ is done in a way that preserves the structure of the storage at time $t$. 
As shown in Figure~\ref{fig:ex-m-2}b, for every data point $D_i$ (processed by the workers $w_{\delta_{t}(i)}$, and $w_{\delta_{t+1}(i)}$ at epochs $t$, and $t+1$, respectively), the worker $w_{\delta_{t}(i)}$, which already has $D_i$ completely, will only keep the part of $D_i$ stored at time $t$ within the excess storage of the worker $w_{\delta_{t+1}(i)}$, i.e., $\A^{t+1}(\delta_{t+1}(i),\delta_{t}(i))=\A^t(\delta_{t}(i),\delta_{t+1}(i))$. For example, $w_1$ will keep the sub-point $A^{t+1}(4,1) = A^{t}(1,4) = \{D_{1,\{4\}}\}$ of $D_1$ in the excess storage at time $t+1$.

The relabeling process is shown in Figure~\ref{fig:ex-m-2}c  for the sub-points in red dashed frames as follows: for the data point $D_i$, we relabel the sub-points in $\A^{t+1}(\delta_{t+1}(i),\delta_{t}(i))= \{D_{i,\mathcal{W}}\}$, where $\delta_{t+1}(i)\in\mathcal{W}$ by replacing $\delta_{t+1}(i)$ in $\mathcal{W}$ with $\delta_{t}(i)$. For example, the data point $D_1$ is processed by $w_1$, and $w_4$ in the epochs $t$, and $t+1$, respectively. Therefore the following relabeling is done to the sub-points of $D_1$:
\begin{align}
A^{t+1}(4,1) = \{D_{1,\{4\}} \rightarrow D_{1,\{1\}}\}.
\end{align}
which preserves the structure of the storage.

\begin{figure}[t]
  \begin{center}
  \includegraphics[width=0.99\columnwidth]{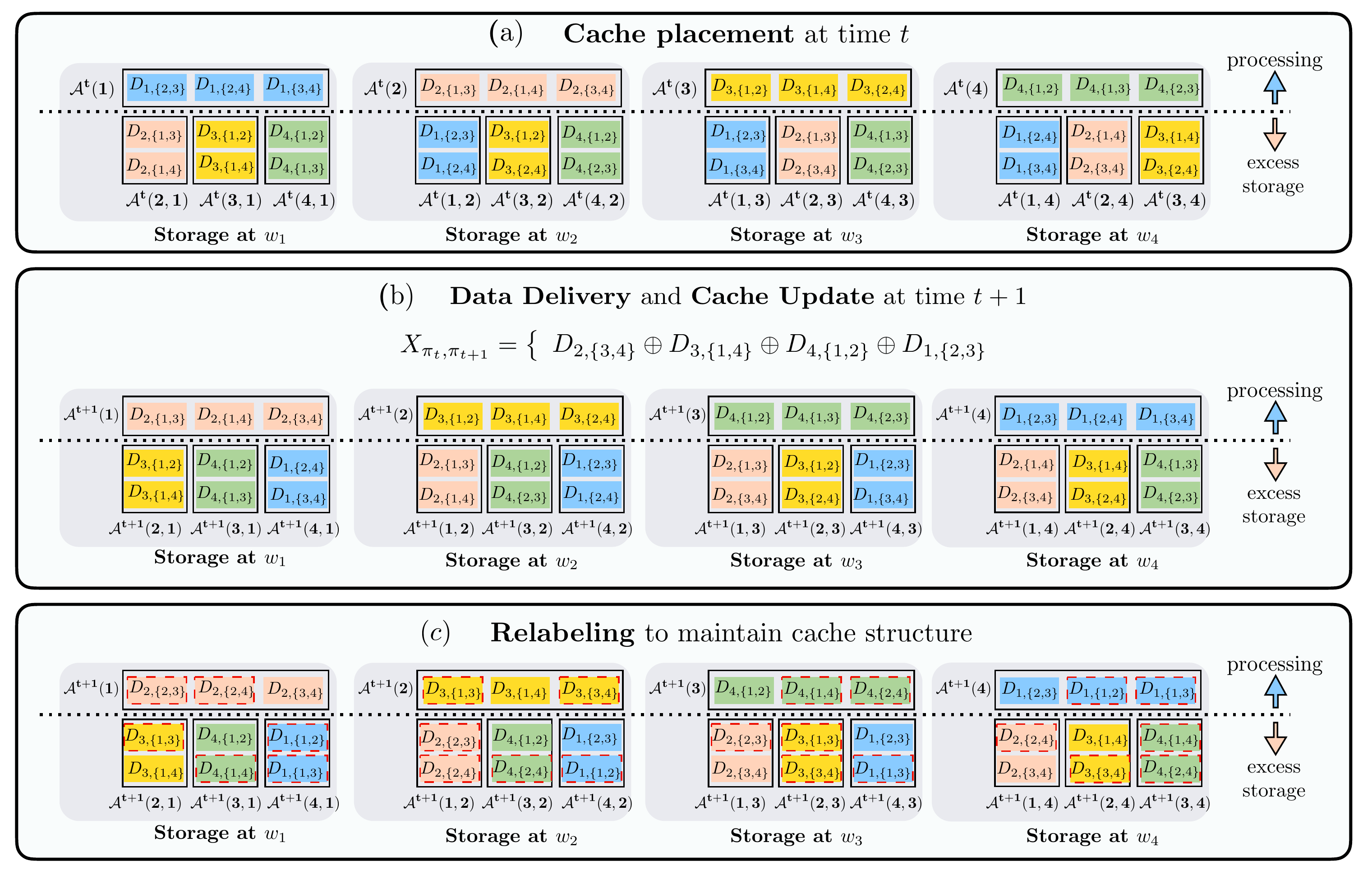}
\caption{An example on closing the gap of $K=4$ workers, $N=4$ data points, and $m=3$ ($S=3$): (a) Structural Invariant Storage placement, (b) Data Delivery and storage update, and (c) Relabeling some sub-points in red dashed frames to maintain the storage structure.
At time $t$, every data point $D_i$ is partitioned into $3$ sub-points each labeled by a unique subset of length $2$ of the set $[1:4]\setminus \delta_t(i)$. 
At time $t+1$, for every data point $D_i$ the sub-point $D_{i,\mathcal{W}}$ where $\delta_{t+1}(i)\in\mathcal{W}$ is relabeled by replacing $\delta_{t+1}(i)$ with $\delta_{t}(i)$.
Above the dotted line is the data point fully stored for processing, and below the dotted line is the excess storage used to store the sub-points containing the worker's index. \label{fig:ex-m-3}}
  \end{center}
\end{figure}

\noindent$\bullet$\hspace{5pt} \textbf{Case} $\mathbf{m=3}$ ($\mathbf{S=3}$):

\noindent \underline{Storage Placement} Every data point at time $t$ is partitioned into $3$ sub-points of size $d/3$ bits each, where every sub-point of the data point $D_i$  is labeled by a unique subset $\mathcal{W} \subseteq [1:4]\setminus \delta_t(i)$, where $\vert \mathcal{W} \vert =2$. For example, the data point $D_1$ at time $t$ is partitioned as $D_1= \{D_{1,\{2,3\}},D_{1,\{2,4\}},D_{1,\{3,4\}}\}$. 
The storage placement at time $t$ follows from Figure~\ref{fig:ex-m-3}a. First, every worker stores the data point needed to be processed. Then, in the excess storage, every worker $w_k$ stores the sub-points labeled by the set $\mathcal{W}$, where $k\in \mathcal{W}$, e.g., $w_1$ stores the two sub-points $\A^t(2,1)= \{D_{2,\{1,3\}},D_{1,\{1,4\}}\}$ from $D_2$.
To summarize, each worker stores the assigned data point of size $d$, and for each one of the remaining $3$ data points, it stores $2$ sub-points of size $d/3$ each. That is, the storage requirement is given by
$S = 1 + 3\times 2\times 1/3= 3$, which satisfies the storage constraint.

\noindent \underline{Aligned Coded Shuffling} According to the storage placement at time $t$ in Figure~\ref{fig:ex-m-3}a, at time $t+1$ every worker needs only one sub-point of the assigned data point, which is available at the $3$ other workers, e.g., $w_1$ needs $D_{2,\{3,4\}}$ which is available at the workers $w_2$, $w_3$, and $w_4$. Therefore, we can send an order $4$ symbol useful for the $4$ workers at the same time as follows:
\begin{align}
X_{\pi_t,\pi_{t+1}} = \left\{ D_{2,\{3,4\}}\oplus D_{3,\{1,4\}}\oplus D_{4,\{1,2\}} \oplus D_{1,\{2,3\}}\right\}.
\end{align}
The rate of this transmission is $R=1\times 1/3 =1/3$, and the pair $(S=3,R=1/3)$ is achievable which closes the gap in Figure~\ref{fig:ex_gap} for $S=3$.

\noindent \underline{Storage update and  sub-points relabeling}
Similar to the case $m=2$, the storage update for the case $m=3$ is shown in Figure~\ref{fig:ex-m-3}b.
For every data point $D_i$, the worker $w_{\delta_{t}(i)}$, which already has $D_i$ completely, will only keep the part of $D_i$ stored at time $t$ within the excess storage of the worker $w_{\delta_{t+1}(i)}$, i.e., $\A^{t+1}(\delta_{t+1}(i),\delta_{t}(i))=\A^t(\delta_{t}(i),\delta_{t+1}(i))$. For example, $w_1$ will keep the sub-points $A^{t+1}(4,1) = A^{t}(1,4) = \{D_{1,\{2,4\}},D_{1,\{3,4\}}\}$ of $D_1$ in the excess storage at time $t+1$.

The relabeling process is shown in Figure~\ref{fig:ex-m-3}c  for the sub-points in red dashed frames in a similar way to the case $m=2$
as follows: for the data point $D_i$, we relabel the sub-points in $\A^{t+1}(\delta_{t+1}(i),\delta_{t}(i))= \{D_{i,\mathcal{W}}\}$, where $\delta_{t+1}(i)\in\mathcal{W}$ by replacing $\delta_{t+1}(i)$ in $\mathcal{W}$ with $\delta_{t}(i)$. For example, the data point $D_1$ is processed by $w_1$, and $w_4$ in the epochs $t$, and $t+1$, respectively. Therefore the following relabeling is done to the sub-points of $D_1$:
\begin{align}
A^{t+1}(4,1)=\{D_{1,\{2,4\}} \rightarrow D_{1,\{1,2\}}, \textbf{ and }  D_{1,\{3,4\}}\rightarrow D_{1,\{1,3\}}\}.
\end{align}
which preserves the structure of the storage.

As a conclusion for the example $K=N=4$, the lower convex envelope of the achievable pairs $(S=m,R=\frac{4-m}{m})$,  for $m\in[1:4]$, is the optimal storage-rate trade-off.
\end{example}

\section{Conclusion}
\label{sec:conclusion}
We considered the worst-case trade-off between the amount of storage and communication overhead for the data shuffling problem. First, we presented an information theoretic formulation of the problem. Following that, we proposed a  novel uncoded-structural invariant storage placement and update strategy for different storage values at the workers. This placement strategy allowed for applying a similar coding scheme to the one in \cite{FundLimitsCaching2015}. Through a novel bounding methodology similar to \cite{Optimality/UncodedCache,Optimality/UncodedCache2}, we derived an information theoretic lower bound on the worst-case communication rate as a function of the storage, which showed that the resulting communication overhead of our scheme is within a maximum multiplicative factor of $\frac{K}{K-1}$, where $K$ is the number of workers.
Furthermore, we presented a new scheme inspired by the idea of interference alignment, which closes the gap and hence achieves the optimal worst-case rate-storage trade-off for $K<5$, and further reduces the maximum multiplicative factor to $\frac{K-\frac{1}{3}}{K-1}$ for $K\geq 5$.

\bibliographystyle{unsrt}
\bibliography{./paper-arxiv.bib}

\begin{thebibliography}{10}

\bibitem{SGD-tricks}
L.~Bottou.
\newblock Stochastic gradient descent tricks.
\newblock {\em Neural Networks: Tricks of the Trade}, pages 421--436, 2012.

\bibitem{RandomReshuffling1}
O.~Shamir.
\newblock Without-replacement sampling for stochastic gradient methods:
  Convergence results and application to distributed optimization.
\newblock {\em CoRR}, abs/1603.00570v2, 2016.

\bibitem{RandomReshuffling2}
M.~G\"{u}rb\"{u}zbalaban, A.~Ozdaglar, and P.~Parrilo.
\newblock Why random reshuffling beats stochastic gradient descent.
\newblock {\em CoRR}, abs/1510.08560, 2015.

\bibitem{MobileCloudGaming}
S.~Shi, G.~Hsu, K.~Nahrstedt, and R.~Campbell.
\newblock Using graphics rendering contexts to enhance the real-time video
  coding for mobile cloud gaming.
\newblock In {\em Proceedings of the 19th ACM international conference on
  Multimedia}, pages 103--112. ACM, 2011.

\bibitem{MapReduce2004}
J.~Dean and S.~Ghemawat.
\newblock Mapreduce: simplified data processing on large clusters.
\newblock {\em Communications of the ACM}, 51(1):107--113, 2008.

\bibitem{Stragglers1}
J.~Dean and L.~A. Barroso.
\newblock {The Tail at Scale}.
\newblock {\em Communications of the ACM}, 56(2):74--80, 2013.

\bibitem{Stragglers2}
G.~Ananthanarayanan, S.~Kandula, A.~G. Greenberg, I.~Stoica, Y.~Lu, B.~Saha,
  and E.~Harris.
\newblock {Reining in the Outliers in Map-Reduce Clusters using Mantri}.
\newblock In {\em Proceedings of USENIX OSDI}, 2010.

\bibitem{Stragglers/replicas1}
G.~Ananthanarayanan, A.~Ghodsi, S.~Shenker, and I.~Stoica.
\newblock {Effective Straggler Mitigation: Attack of the Clones}.
\newblock In {\em Proceedings of USENIX NSDI}, 2013.

\bibitem{Stragglers/replicas2}
N.~B. Shah, K.~Lee, and K.~Ramchandran.
\newblock {When do Redundant Requests Reduce Latency?}
\newblock In {\em Proceedings of the 51st Annual Allerton conference}, 2013.

\bibitem{Stragglers/replicas2-2}
N.~B. Shah, K.~Lee, and K.~Ramchandran.
\newblock When do redundant requests reduce latency?
\newblock {\em IEEE Transactions on Communications}, 64(2):715--722, 2016.

\bibitem{Stragglers/replicas3}
D.~Wang, G.~Joshi, and G.~W. Wornell.
\newblock {Efficient Task Replication for Fast Response Times in Parallel
  Computation}.
\newblock In {\em Proceedings of ACM SIGMETRICS}, 2014.

\bibitem{work-stealing}
R.~Blumofe and C.~Leiserson.
\newblock Scheduling multithreaded computations by work stealing.
\newblock {\em Journal of the ACM (JACM)}, 46(5):720--748, 1999.

\bibitem{work-exchange2017}
M.~Attia and R.~Tandon.
\newblock Combating computational heterogeneity in large-scale distributed
  computing via work exchange.
\newblock {\em arXiv preprint arXiv:1711.08452}, 2017.

\bibitem{Hadoop2010}
K.~Shvachko, H.~Kuang, S.~Radia, and R.~Chansler.
\newblock The {H}adoop {D}istributed {F}ile {S}ystem.
\newblock In {\em Proceedings of IEEE 26th Symposium on Mass Storage Systems
  and Technologies (MSST)}, pages 1--10, May 2010.

\bibitem{Spark2010}
M.~Zaharia, M.~Chowdhury, M.~J. Franklin, S.~Shenker, and I.~Stoica.
\newblock {Spark: Cluster Computing with Working Sets}.
\newblock In {\em Proceedings of USENIX HotCloud}, 2010.

\bibitem{wireless1}
D.~Datla, X.~Chen, T.~Tsou, Sahana S.~Raghunandan, S.~Hasan, J.~Reed,
  C.~Dietrich, T.~Bose, B.~Fette, and J.~Kim.
\newblock Wireless distributed computing: a survey of research challenges.
\newblock {\em IEEE Communications Magazine}, 50(1), 2012.

\bibitem{wireless2}
U.~Drolia, R.~Martins, J.~Tan, A.~Chheda, M.~Sanghavi, R.~Gandhi, and
  P.~Narasimhan.
\newblock The case for mobile edge-clouds.
\newblock In {\em Ubiquitous Intelligence and Computing, 2013 IEEE 10th
  International Conference on and 10th International Conference on Autonomic
  and Trusted Computing (UIC/ATC)}, pages 209--215. IEEE, 2013.

\bibitem{benefit/wireless}
L.~Song, S.~Srinivasavaradhan, and C.~Fragouli.
\newblock The benefit of being flexible in distributed computation.
\newblock {\em arXiv preprint arXiv:1705.08464}, 2017.

\bibitem{CodedComp/Shuffle2015}
K.~Lee, M.~Lam, R.~Pedarsani, D.~Papailiopoulos, and K.~Ramchandran.
\newblock Speeding up distributed machine learning using codes.
\newblock {\em IEEE Transactions on Information Theory}, 2017.

\bibitem{GradientCoding}
R.~Tandon, Q.~Lei, A.~Dimakis, and N.~Karampatziakis.
\newblock Gradient coding.
\newblock {\em arXiv preprint arXiv:1612.03301}, 2016.

\bibitem{Stragglers-MDSproduct}
Kangwook Lee, Changho Suh, and Kannan Ramchandran.
\newblock High-dimensional coded matrix multiplication.
\newblock In {\em Information Theory (ISIT), 2017 IEEE International Symposium
  on}, pages 2418--2422. IEEE, 2017.

\bibitem{stragglers/wired/MDS1}
S.~Dutta, V.~Cadambe, and P.~Grover.
\newblock Coded convolution for parallel and distributed computing within a
  deadline.
\newblock {\em arXiv preprint arXiv:1705.03875}, 2017.

\bibitem{stragglers/wired/MDS2}
W.~Halbawi, N.~Azizan-Ruhi, F.~Salehi, and B.~Hassibi.
\newblock Improving distributed gradient descent using {R}eed-{S}olomon codes.
\newblock {\em arXiv preprint arXiv:1706.05436}, 2017.

\bibitem{Stragglers-Polynomial}
Q.~Yu, M.~Maddah-Ali, and A.~Avestimehr.
\newblock Polynomial codes: an optimal design for high-dimensional coded matrix
  multiplication.
\newblock {\em arXiv preprint arXiv:1705.10464}, 2017.

\bibitem{stragglers/wirless}
A.~Reisizadeh and R.~Pedarsani.
\newblock Latency analysis of coded computation schemes over wireless networks.
\newblock {\em arXiv preprint arXiv:1707.00040}, 2017.

\bibitem{CodedMapReduce}
S.~Li, M.~A. Maddah-Ali, and S.~Avestimehr.
\newblock Coded {MapReduce}.
\newblock In {\em Proceedings of the 53rd Annual Allerton conference},
  September 2015.

\bibitem{CDC1}
S.~Li, M.~Maddah-Ali, Q.~Yu, and A.~Avestimehr.
\newblock A fundamental tradeoff between computation and communication in
  distributed computing.
\newblock {\em arXiv preprint arXiv:1604.07086}, 2016.

\bibitem{CDC2}
Q.~Yu, S.~Li, M.~Maddah-Ali, and A.~Avestimehr.
\newblock How to optimally allocate resources for coded distributed computing?
\newblock {\em arXiv preprint arXiv:1702.07297}, 2017.

\bibitem{CDC/Alternative}
Y.~Ezzeldin, M.~Karmoose, and C.~Fragouli.
\newblock Communication vs distributed computation: an alternative trade-off
  curve.
\newblock {\em arXiv preprint arXiv:1705.08966}, 2017.

\bibitem{CDC/stragglers1}
S.~Li, M.~Maddah-Ali, and A.~Avestimehr.
\newblock {C}oded distributed computing: Straggling servers and multistage
  dataflows.
\newblock In {\em Proceedings of the 54th Annual Allerton conference},
  September 2016.

\bibitem{CDC/stragglers2}
S.~Li, M.~Maddah-Ali, and A.~Avestimehr.
\newblock A unified coding framework for distributed computing with straggling
  servers.
\newblock In {\em Proceedings of the IEEE Globecom}, pages 1--6. IEEE, 2016.

\bibitem{Shuffling/wired/master1}
M.~Attia and R.~Tandon.
\newblock Information theoretic limits of data shuffling for distributed
  learning.
\newblock In {\em Global Communications Conference (GLOBECOM)}, pages 1--6.
  IEEE, 2016.

\bibitem{Shuffling/wired/master2}
M.~Attia and R.~Tandon.
\newblock On the worst-case communication overhead for distributed data
  shuffling.
\newblock In {\em Proceedings of the 55th Annual Allerton conference},
  September 2016.

\bibitem{WDC1}
S.~Li, Q.~Yu, M.~Maddah-Ali, and A.~Avestimehr.
\newblock Edge-facilitated wireless distributed computing.
\newblock In {\em Global Communications Conference (GLOBECOM)}, pages 1--7.
  IEEE, 2016.

\bibitem{WDC2}
S.~Li, Q.~Yu, M.~Maddah-Ali, and A.~Avestimehr.
\newblock A scalable framework for wireless distributed computing.
\newblock {\em IEEE/ACM Transactions on Networking}, 25(5):2643 -- 2654, 2017.

\bibitem{IndexCoding}
Z.~Bar-Yossef, Y.~Birk, T.~Jayram, and T.~Kol.
\newblock Index coding with side information.
\newblock {\em IEEE Transactions on Information Theory}, 57(3):1479--1494,
  2011.

\bibitem{pliableIC/shuffling}
L.~Song and C.~Fragouli.
\newblock A pliable index coding approach to data shuffling.
\newblock {\em arXiv preprint arXiv:1701.05540}, 2017.

\bibitem{Optimality/UncodedCache}
K.~Wan, D.~Tuninetti, and P.~Piantanida.
\newblock On the optimality of uncoded cache placement.
\newblock In {\em Information Theory Workshop (ITW), 2016 IEEE}, pages
  161--165. IEEE, Sept 2016.

\bibitem{Optimality/UncodedCache2}
Q.~Yu, M.~Maddah-Ali, and A.~Avestimehr.
\newblock The exact rate-memory tradeoff for caching with uncoded prefetching.
\newblock In {\em Information Theory (ISIT), 2017 IEEE International Symposium
  on}, pages 1613--1617. IEEE, 2017.

\bibitem{FundLimitsCaching2015}
M.~A. Maddah-Ali and U.~Niesen.
\newblock Fundamental limits of caching.
\newblock 60(5):2856--2867, February 2015.

\end{thebibliography}

\begin{appendices}

\section{Upper Bound on $R_{\text{worst-case}}^*$ (Proof of Theorem~\ref{thm1})}
\label{sec:upper-bound}

Following Example~\ref{ex1}, we present our general achievability for any number of workers $K$, any number of data points $N$, and any storage value $S$. Our scheme has two main phases: structural invariant storage placement/update phase; and data delivery phase. This scheme also proves the upper bound on the optimal worst-case rate, i.e.,  $R^{\text{upper}}_{\text{worst-case}}$ stated in Theorem~\ref{thm1}.

\subsection{Structural Invariant Placement}

We first propose a structural invariant placement, which allows applying a similar data delivery scheme to the one proposed in \cite{FundLimitsCaching2015}. The placement procedure involves updating the storage content for each worker after each shuffle in order to maintain the structure of the storage. 
 Since the shuffling process at each time is done randomly, all the data points not being processed by a worker $w_k$ are of equal importance to reduce the communication overhead in the next shuffle. Consequently, the amount of excess storage of size $\left(S-\frac{N}{K}\right)d$ is equally divided among these points, where we assume uncoded storage placement.

We focus on a discrete set of storage values given by $S = \left(1+i(\frac{K-1}{K})\right) \frac{N}{K}$, for $i\in[0:K]$. The values in between can then be achieved by memory sharing as stated in Claim~\ref{cl:1}.
At time $t$, the worker $w_k$ first stores the batch assigned for processing, $\A^t(k)$, in order to satisfy the processing constraint in \eqref{eq:cache-min-desired}, which requires $\frac{N}{K}d$ bits of the available storage. That is if a data point $D \in \A^t(k)$, then $D$ is fully stored in $Z_k^t$.
 The excess storage of size $(S-\frac{N}{K})d=i(\frac{K-1}{K})(\frac{N}{K})d$ is used as follows: every data point $D\in \A$ is divided across the dimension $d$ into $\binom{K}{i}$ non-overlapping parts of size $d/\binom{K}{i}$ bits each, and then each part is labeled by a unique set $\mathcal{W}\subseteq [1:K]$ of size $i$. 
The worker $w_k$ stores the sub-point $D_{\mathcal{W}}$ in the excess storage, where $D\not\in \A^t(k)$, only if $k \in \mathcal{W}$. 
Therefore, the number of sub-points a worker $w_k$ is storing from a point $D \not \in \A^t(k)$; is given by $\binom{K-1}{i-1}$ sub-points. The total number of these points  is $N-\frac{N}{K}=\frac{(K-1)N}{K}$ points. Then, the total size necessary for excess storage is
\begin{align}
\frac{(K-1)N}{K}\times \binom{K-1}{i-1} \times \frac{d}{\binom{K}{i}}=\frac{i(K-1)N}{K^2}d = \left(S-\frac{N}{K}\right)d,
\end{align}
which satisfies the memory constraint.

\subsection{Data Delivery Phase}

Next, we present our proposed delivery scheme to satisfy the  new data assignment characterized by the shuffles $(\pi_t,\pi_{t+1})$. 
%Consider a data point $D$, where $D\not\in \A^t(k)$, and $D\in \A^{t+1}(k)$. 
According to the adopted placement strategy, whenever a new data point is newly assigned to a worker, it already has $\binom{K-1}{i-1}$ out of the total $\binom{K}{i}$ partitions. Therefore, the number of sub-points still needed for every new assigned data point is $\binom{K}{i}-\binom{K-1}{i-1} = \binom{K-1}{i}$.
Moreover, for the worst-case scenario, each worker is assigned completely new data points, and there are $\frac{N}{K}$ new data points for each worker, i.e., $\A^t(k) \cap \A^{t+1}(k) =\phi$. This gives the total number of data sub-points needed by each worker in the worst case to be $\binom{K-1}{i}\frac{N}{K}$.

According to the placement strategy, every data sub-point $D_{\mathcal{W}}$, is stored at least in $i$ different workers.
 Now, if we pick any set $\mathcal{M}\subseteq [1:K]$ of the workers, where $\vert \mathcal{M} \vert =i+1$, then for each worker $w_k$, where $k \in \mathcal{M}$, and for each point $D$ newly assigned to $w_k$ in the next shuffle, i.e., ${D\not\in \A^t(k)}$, and ${D\in \A^{t+1}(k)}$, there is at least one sub-point needed by $k$ from the remaining workers in the set, labeled as $D_{\mathcal{M}\setminus k}$.
Therefore, we can send $\frac{N}{K}$ order $i+1$ coded symbols in the form $\underset{k\in\mathcal{M}}{\oplus} A^{t+1}_{\mathcal{M}\setminus k}(k)$, of size ${d}/{\binom{K}{i}}$ each, useful for all the $i+1$ workers in $\mathcal{M}$ in the same time.

Considering all the possible sets $\mathcal{M}$ of size $i+1$, this process is repeated $\binom{K}{i+1}$ number of times, which gives $\binom{K}{i+1}\frac{N}{K}$ coded symbols given by
\begin{align}
X_{\pi_t,\pi_{t+1}}= \left\{\underset{\substack{\mathcal{M}\subseteq [1:K]\\ \vert\mathcal{M}\vert =i+1}}{\cup}\underset{k\in\mathcal{M}}{\oplus} A^{t+1}_{\mathcal{M}\setminus k}(k)\right\}.\label{eq:scheme-trans}
\end{align}
The corresponding total worst-case number of bits sent over the shared link is given by
\begin{align}
R_{\text{worst-case}}d =\binom{K}{i+1} \times \frac{N}{K} \times \frac{d}{\binom{K}{i}} = \frac{N(K-i)}{K(i+1)}d.
\end{align}
It is important to notice that the total number of times $w_k$ becomes a member of the set $\mathcal{M}$ is $\binom{K-1}{i}$.
Therefore, by sending the coded symbols in \eqref{eq:scheme-trans}, every worker gets $\binom{K-1}{i}\frac{N}{K}$ sub-points in total, which are enough to recover $\frac{N}{K}$ data points in the worst-case scenario as previously discussed.

Using the memory sharing concept in Claim~\ref{cl:1}, we can achieve the lower convex envelope of the following $K+1$ points:
\begin{align}
\left(S=\left(1+i\frac{K-1}{K}\right)\frac{N}{K},\:R_{\text{worst-case}}^{\text{upper}}= \frac{N(K-i)}{K(i+1)}\right), \quad \forall i\in[0:K],
\end{align}
which completes the proof of Theorem~\ref{thm1}.

\subsection{Storage Update Procedure}

In order to maintain the structure of the storage after the next shuffle at time $t+1$, the storage update procedure takes place at worker $w_k$ for every point $D\in \A$ according to the following cases:

\noindent$\bullet$\hspace{5pt} \underline{${D\in \A^t(k)}$, and ${D\in \A^{t+1}(k)}$:} In this case $D$ remains stored completely in $Z^{t+1}_k$.

\noindent$\bullet$\hspace{5pt} \underline{${D\not\in \A_k^t}$, and ${D\in \A_k^{t+1}}$:} After the data delivery, worker $w_k$ stores $D$ completely in $Z^{t+1}_k$.

\noindent$\bullet$\hspace{5pt} \underline{${D\in \A^t(k)}$, and ${D\not\in \A^{t+1}(k)}$:} Out of the point $D$  previously stored completely in $Z^{t}_k$, worker $w_k$ chooses to stores in $Z^{t+1}_k$ the sub-points $D_{\mathcal{W}}$ where $k \in {\mathcal{W}}$.

\noindent$\bullet$\hspace{5pt} \underline{${D\not\in \A^t(k)}$, and ${D\not\in \A^{t+1}(k)}$:} Nothing changes about the storage of $D$ in the excess storage of $Z^{t+1}_k$, and $w_k$ keeps the same sub-points of $D$ previously stored in $Z^{t}_k$, i.e., $D_{\mathcal{W}}$ where $k\in {\mathcal{W}}$.

\section{Lower Bound on $R_{\text{worst-case}}^*$ (Proof of Theorem~\ref{thm2})}
\label{sec:lower-bound}

In this section, we present an information theoretic lower bound on the worst-case communication rate.
Following Remark~\ref{re:basic-idea-converse}, we start by considering the following shuffle $(\pi_t,\pi_{t+1})$ at time $t+1$: for a permutation of the worker indexes $\sigma: (1,2,\ldots,K) \rightarrow (\sigma_1,\sigma_2,\ldots,\sigma_K)$, the worker $w_{\sigma_{k}}$ at time $t+1$ is assigned the data batch that was assigned to the worker $w_{\sigma_{k-1}}$ at time $t$, i.e., $\A^{t+1}({\sigma_{k}}) =\A^{t}({\sigma_{k-1}})$, which also gives the following condition using (\ref{eq:decoding-const}):
\begin{align}
\label{eq:decoding_const2}
H(\A^{t+1}({\sigma_{k}})| Z^{t}_{\sigma_{k}},X_{\pi_t,\pi_{t+1}}) = H(\A^{t}({\sigma_{k-1}})| Z^{t}_{\sigma_{k}},X_{\pi_t,\pi_{t+1}})=0.
\end{align}
Next, we prove that $H(\A|\Z^t_{\sigma_{[2:K]}},X_{\pi_t,\pi_{t+1}}) =0$ using (\ref{eq:cache-min-desired}), and (\ref{eq:decoding_const2}) as follows:
\begin{align}
H(\A|\Z^t_{\sigma_{[2:K]}},X_{\pi_t,\pi_{t+1}}) &= H(\A^t({[1:K]})|\Z^t_{\sigma_{[2:K]}},X_{\pi_t,\pi_{t+1}})\leq \sum_{j=1}^K H(\A^t({\sigma_j})|\Z^t_{\sigma_{[2:K]}},X_{\pi_t,\pi_{t+1}})\nonumber\\
& \leq \sum_{j=2}^K H(\A^t({\sigma_j})|Z^t_{\sigma_j}) +H(\A^t({\sigma_1})|Z^t_{\sigma_2},X_{\pi_t,\pi_{t+1}}) =0.\label{eq:decoding_const3}
\end{align}
Using (\ref{eq:decoding_const2}), and (\ref{eq:decoding_const3}), we obtain the following bound:
\begin{align}
Nd &=H(\A) = I\left(\A;\Z^t_{\sigma_{[2:K]}},X_{\pi_t,\pi_{t+1}}\right) +H\left(\A|\Z^t_{\sigma_{[2:K]}},X_{\pi_t,\pi_{t+1}}\right)\nonumber\\
& \overset{(a)}{\leq}H\left(\Z^t_{\sigma_{[2:K]}},X_{\pi_t,\pi_{t+1}}\right)-H\left(\Z^t_{\sigma_{[2:K]}},X_{\pi_t,\pi_{t+1}}|\A\right) \nonumber\\
&\overset{(b)}{=} H\left(X_{\pi_t,\pi_{t+1}},Z^t_{\sigma_K}\right) + H\left(\Z^t_{\sigma_{[2:K-1]}]}|X_{\pi_t,\pi_{t+1}},Z^t_{\sigma_K}\right)\nonumber\\
&\leq H\left(X_{\pi_t,\pi_{t+1}}\right)+H\left(Z^t_{\sigma_K}\right) + \sum_{i=2}^{K-1}H\left(Z^t_{\sigma_i}|\Z^t_{\sigma_{[i+1:K]}},X_{\pi_t,\pi_{t+1}}\right)\nonumber\\
& \overset{(c)}{\leq} R_{\pi_t,\pi_{t+1}}^*d +H\left(\A^t({\sigma_K})\right)+H\left(\A^t({\sigma_{[1:K-1]}},\sigma_K)\right)\nonumber\\
&\hspace{40pt}+ \sum_{i=2}^{K-1}H\left(Z^t_{\sigma_i}|\Z^t_{\sigma_{[i+1:K]}},X_{\pi_t,\pi_{t+1}},\A^t({\sigma_{[i:K]}})\right)\nonumber
\end{align}
\begin{align}
& \overset{(d)}{=} R_{\pi_t,\pi_{t+1}}^*d + \frac{N}{K}d+ H\left(\A^t({\sigma_{[1:K-1]}},\sigma_K)\right) \nonumber\\
&\hspace{40pt}+ \sum_{i=2}^{K-1}H\left(\A^t({\sigma_{[1:i-1]}},\sigma_i)|\Z^t_{\sigma_{[i+1:K]}},X_{\pi_t,\pi_{t+1}},\A^t({\sigma_{[i:K]}})\right)\nonumber\\
& \overset{(e)}{\leq} R_{\pi_t,\pi_{t+1}}^*d + \frac{N}{K}d+H\left(\A^t({\sigma_{[1:K-1]}},\sigma_K)\right)+ \sum_{i=2}^{K-1}H\left(\A^t({\sigma_{[1:i-1]}},\sigma_i)|\Z^t_{\sigma_{[i+1:K]}}\right)
\nonumber\\
& = R_{\pi_t,\pi_{t+1}}^*d + \frac{N}{K}d+\sum_{i=2}^{K}\sum_{j=1}^{i-1}H\left(\A^t(\sigma_j,\sigma_i)|\Z^t_{[\sigma_{i+1}:\sigma_K]}\right)\nonumber\\
& \overset{(f)}{=} R_{\pi_t,\pi_{t+1}}^*d + \frac{N}{K}d+  \sum_{i=2}^{K}\sum_{j=1}^{i-1}H\left(\A^t({\sigma_j},\sigma_i)|\A^t({\sigma_j},\sigma_{[i+1:K]})\right)\nonumber\\
& = R_{\pi_t,\pi_{t+1}}^*d + \frac{N}{K}d + \sum_{j=1}^{K-1}\sum_{i=j+1}^{K}H\left(\A^t({\sigma_j},\sigma_i)|\A^t({\sigma_j},\sigma_{[i+1:K]})\right)\nonumber\\
& \overset{(g)}{=} R_{\pi_t,\pi_{t+1}}^*d + \frac{N}{K}d + \sum_{j=1}^{K-1}H\left(\A^t({\sigma_j},\sigma_{[j+1:K]})\right),\label{eq:bound1}
\end{align}
where $(a)$ follows from \eqref{eq:decoding_const3}, $(b)$ follows from \eqref{eq:cache-content}, and \eqref{eq:transmit-load}, where $\Z^t_{\sigma_{[2:K]}}$, and $X_{\pi_t,\pi_{t+1}}$ are deterministic functions of the data-set  $\A$, $(c)$ follows from (\ref{eq:cache-min-desired}), \eqref{eq:decoding_const2}, and the constraint on storage in \eqref{eq:cache-content2}, $(d)$ also follows  from \eqref{eq:cache-content2} where after knowing $\A^t({\sigma_{[i:K]}})$, the only parts left in $Z_{\sigma_i}^t$ are $\A^t({\sigma_{[1:i-1]}},\sigma_i)$, $(e)$ because conditioning reduces entropy, $(f)$ follows since $\A^t({\sigma_j},\sigma_i)$ only depends on the parts of the batch $\A^t({\sigma_j})$ stored at $\Z^t_{[\sigma_{i+1}:\sigma_K]}$, and finally $(g)$ follows from the chain rule of entropy.
 From the definition in \eqref{eq:def_A_k_W}, we can write $\A^t({\sigma_j},\sigma_{[j+1:K]})$ as
\begin{align}
\A^t({\sigma_j},\sigma_{[j+1:K]}) = \underset{\substack{\mathcal{S}\subseteq \sigma_{[j+1:K]}: \: \mathcal{S}\neq \phi}}{\cup} \hspace{3pt}
\underset{\mathcal{W} \subseteq [1:K]\setminus \sigma_j: \: \mathcal{S} \in \mathcal{W}}{\cup} \A^t_{\mathcal{W}}(\sigma_j)
\end{align}
Therefore, we can upper bound the entropy $H\left(\A^t({\sigma_j},\sigma_{[j+1:K]})\right)$ as
\begin{align}
H\left(\A^t({\sigma_j},\sigma_{[j+1:K]})\right) &\leq \sum_{\substack{\mathcal{S}\subseteq \sigma_{[j+1:K]}: \: \mathcal{S}\neq \phi}} \hspace{3pt}\sum_{\mathcal{W} \subseteq [1:K]\setminus\sigma_j: \: \mathcal{S} \in \mathcal{W}} \vert \A^t_{\mathcal{W}}(\sigma_j)\vert \hspace{2pt} d\nonumber\\
&=\sum_{\mathcal{W} \subseteq [1:K]\setminus \sigma_j} \vert \A^t_{\mathcal{W}}(\sigma_j)\vert \hspace{2pt} d \hspace{5pt}- \sum_{\mathcal{W} \subseteq \sigma_{[1:j-1]}} \vert \A^t_{\mathcal{W}}(\sigma_j)\vert \hspace{2pt} d,\label{eq:bound2}
\end{align}
where $\vert \A^t_{\mathcal{W}}(j)\vert$ is the size of the sub-batch $\A^t_{\mathcal{W}}(j)$ normalized by $d$.
Therefore, by applying (\ref{eq:bound2}) in (\ref{eq:bound1}), we get a bound over $R_{\pi_t,\pi_{t+1}}^*$, which is also a lower bound over $R_{\text{worst-case}}^*$ following Remark~\ref{re:basic-idea-converse}, as follows:
\begin{align}
R^*_{\text{worst-case}}\geq R_{\pi_t,\pi_{t+1}}&\geq N-\frac{N}{K} -\sum_{j=1}^{K-1}\hspace{3pt}\left[ \sum_{\mathcal{W} \subseteq [1:K]\setminus \sigma_j } \vert \A^t_{\mathcal{W}}(\sigma_j)\vert \hspace{2pt} \hspace{5pt}- \sum_{\mathcal{W} \subseteq \sigma_{[1:j-1]}} \vert \A^t_{\mathcal{W}}(\sigma_j)\vert \right] \nonumber\\ 
&=N-\frac{N}{K} -\sum_{\ell=0}^{K-1}\sum_{j=1}^{K-1}\hspace{3pt}\left[ \sum_{\substack{\mathcal{W} \subseteq [1:K]\setminus\sigma_j\\ \vert\mathcal{W}\vert =\ell}} \vert \A^t_{\mathcal{W}}(\sigma_j)\vert \hspace{5pt}- \sum_{\substack{\mathcal{W} \subseteq \sigma_{[1:j-1]}\\ \vert\mathcal{W}\vert =\ell}} \vert \A^t_{\mathcal{W}}(\sigma_j)\vert \right].\label{eq:bound3}
\end{align}
%Now by assuming the data point dimension $d\rightarrow \infty$, then $\epsilon_d/d \rightarrow 0$, which gives the following bound over $R^*_{\text{worst-case}}$,
%\begin{align}
%R^*_{\text{worst-case}}\geq N-\frac{N}{K} -\sum_{\ell=0}^{K-1}\sum_{j=1}^{K-1}\hspace{3pt}\left[ \sum_{\substack{\mathcal{W} \subseteq [1:K]\setminus\sigma_j\\ \vert\mathcal{W}\vert =\ell}} \vert \A^t_{\mathcal{W}}(\sigma_j)\vert \hspace{5pt}- \sum_{\substack{\mathcal{W} \subseteq \sigma_{[1:j-1]}\\ \vert\mathcal{W}\vert =\ell}} \vert \A^t_{\mathcal{W}}(\sigma_j)\vert \right].
%\label{eq:bound3}
%\end{align}

For $K!$ possible permutations $\sigma$ of the ordered set $(1,2,\ldots,K)$, we get $K!$ different bounds over $R^*_{\text{worst-case}}$ from (\ref{eq:bound3}). Summing up over all the possible $K!$ permutations $\sigma$, we get
\begin{align}
R^*_{\text{worst-case}}\geq N-\frac{N}{K} -\frac{1}{K!}
\sum_{\ell=0}^{K-1}\sum_{j=1}^{K-1} \sum_{\sigma\in [K!]}\hspace{3pt}\left[  \sum_{\substack{\mathcal{W} \subseteq [1:K]\setminus\sigma_j\\ \vert\mathcal{W}\vert =\ell}} \vert \A^t_{\mathcal{W}}(\sigma_j)\vert \hspace{5pt}- \sum_{\substack{\mathcal{W} \subseteq \sigma_{[1:j-1]}\\ \vert\mathcal{W}\vert =\ell}} \vert \A^t_{\mathcal{W}}(\sigma_j)\vert \right],
\label{eq:bound4}
\end{align}
where $[K!]$ is defined as the set of all possible permutations of the ordered set $(1,2,\ldots,K)$, which contains $K!$ permutations. Due to symmetry, for each value of a $(\ell,j)$ pair in the outer summation in \eqref{eq:bound4}, where $\ell\in [0:K-1]$ and $j\in [1:K-1]$, the coefficients of each $\vert \A^t_{\mathcal{W}}(k) \vert$ in the inner summation for $k\in [1:K]$  and $\vert\mathcal{W}\vert =\ell$ are equal.
Therefore, we can write the inner summation in (\ref{eq:bound4}) in the following form:
\begin{align}
\sum_{\sigma\in [K!]}\left[  \sum_{\substack{\mathcal{W} \subseteq [1:K]\setminus \sigma_j\\ \vert\mathcal{W}\vert =\ell}} \vert \A^t_{\mathcal{W}}(\sigma_j)\vert - \sum_{\substack{\mathcal{W} \subseteq \sigma_{[1:j-1]}\\ \vert\mathcal{W}\vert =\ell}} \vert \A^t_{\mathcal{W}}(\sigma_j)\vert \right]&= \left(c_1^{j,\ell} -c_2^{j,\ell}\right)\sum_{k=1}^K \sum_{\mathcal{W}\subseteq [1:K]:\: \vert \mathcal{W}\vert = \ell} \vert \A^t_{\mathcal{W}}(k)\vert\nonumber\\
&= \left(c_1^{j,\ell} -c_2^{j,\ell}\right)x_{\ell}, \label{eq:app1}
\end{align}
where $c^{j,\ell}_1$, and $c^{j,\ell}_2$ are the two coefficients of $x_{\ell}$ coming from the two inner summations in the LHS of \eqref{eq:app1}.
From (\ref{eq:app1}), finding $c^{j,\ell}_1$, and $c^{j,\ell}_2$ is the same as finding the coefficients of one realization of $k$, and $\mathcal{W}$ on the right side of the equation, and we consider for instance $\A^t_{[2:\ell+1]}(1)$. In the first sum, we get $c^{j,\ell}_1$ by counting the number of permutations where $\sigma_j =1$, which is given by
\begin{align}
c_1^{j,\ell} = (K-1)!.
\end{align}
In the second sum, we get $c^{j,\ell}_2$ by counting the number of permutations such that $\sigma_j =1$, and $\sigma_{j+1},\ldots, \sigma_K \in[\ell+2:K]$, which is given by
\begin{align}
c^{j,\ell}_2 = \frac{(K-\ell-1)!}{(j-\ell-1)!} (j-1)! = \frac{\binom{j-1}{\ell}}{\binom{K-1}{\ell}} (K-1)!.
\end{align}
Therefore, we can write the summation  in \eqref{eq:app1} in the following form:
\begin{align}
\sum_{\sigma\in [K!]}\left[  \sum_{\substack{\mathcal{W} \subseteq [1:K]\setminus \sigma_j\\ \vert\mathcal{W}\vert =\ell}} \vert \A^t_{\mathcal{W}}(\sigma_j)\vert - \sum_{\substack{\mathcal{W} \subseteq \sigma_{[1:j-1]}\\ \vert\mathcal{W}\vert =\ell}} \vert \A^t_{\mathcal{W}}(\sigma_j)\vert \right]= \left((K-1)! -\frac{\binom{j-1}{\ell}}{\binom{K-1}{\ell}} (K-1)!\right)x_{\ell}. \label{eq:app2}
\end{align}

Now, we use \eqref{eq:app2} in \eqref{eq:bound4} to obtain the following bound:
\begin{align}
R^*_{\text{worst-case}}&\geq  N-\frac{N}{K} -\frac{1}{K!}
\sum_{\ell=0}^{K-1}\sum_{j=1}^{K-1} \left[ (K-1)! - 
\frac{\binom{j-1}{\ell}}{\binom{K-1}{\ell}}(K-1)!
 \right] x_{\ell}\nonumber\\
& = N-\frac{N}{K}-\frac{1}{K!} \sum_{\ell=0}^{K-1} \left[(K-1) (K-1)! - \frac{\binom{K-1}{\ell+1}}{\binom{K-1}{\ell}}(K-1)! \right]x_{\ell}\nonumber\\
& =  N-\frac{N}{K} -\frac{1}{K} \sum_{\ell=0}^{K-1} \left[(K-1)- \frac{K-\ell-1}{t+1} \right]x_{\ell}\nonumber\\
& \overset{(a)}{=}  \sum_{\ell=0}^{K-1} x_{\ell}-\frac{N}{K} - \sum_{\ell=0}^{K-1}  \frac{\ell}{\ell+1} x_{\ell} =\sum_{\ell=0}^{K-1}  \frac{1}{\ell+1} x_{\ell} -\frac{N}{K},\label{eq:objective}
\end{align}
where $(a)$ follows from the data size constraint in \eqref{eq:size-const}.
Next, we  obtain $K-1$ different lower bounds on the optimal worst-case transmission rate $R^*_{\text{wc}}$, by eliminating the pairs $(x_{j-1},x_{j})$, for each ${j \in \left[1:K-1\right]}$, in the equation \eqref{eq:objective} using the equations \eqref{eq:size-const} and \eqref{eq:constraint_cache}. 
We use \eqref{eq:size-const} to write $x_{j-1}$ as follows:
\begin{align}
\label{eq:x_j-1}
&x_{j-1}=N-\sum_{\ell\in\left[0:K\right]\setminus j-1}x_{\ell}.
\end{align}
We first apply \eqref{eq:x_j-1} in \eqref{eq:objective} to obtain
\begin{align}
R^*_{\text{worst-case}}&\geq \sum_{\ell\in[0:K-1]\setminus j-1} \frac{1}{\ell+1} x_{\ell} +\frac{1}{j} \left( N- \sum_{\ell\in[0:K-1]\setminus j-1}x_{\ell}\right) -\frac{N}{K}\nonumber\\
& = \frac{N(K-j)}{Kj}-\sum_{\ell\in[0:K-1]\setminus j-1}\frac{\ell-j+1}{j(\ell+1)}x_{\ell}.
\label{eq:rate_1}
\end{align}
We next apply \eqref{eq:x_j-1} in the excess storage constraint of \eqref{eq:constraint_cache} to obtain
\begin{align}
&\sum_{\ell\in\left[0:K-1\right]\setminus j-1}\ell x_{\ell}+(j-1)\left(N-\sum_{\ell\in[0:K]\setminus j-1}x_{\ell}\right)\leq K\left(S-\frac{N}{K}\right), \nonumber \\
&\sum_{\ell\in\left[0:K-1\right]\setminus j-1}\left(\ell-j+1\right)x_
{\ell} \leq K\left(S-j\frac{N}{K}\right). \label{eq:cache1} 
\end{align}
Now, we need to eliminate $x_{j}$ from \eqref{eq:rate_1}. We use \eqref{eq:cache1} to bound $x_{j}$ as
\begin{align}
x_{j}\leq K\left(S-j\frac{N}{K}\right)-\sum_{\ell\in\left[0:K-1\right]\setminus \{j-1,j\}}\left(\ell-j+1\right)x_{\ell}. \label{eq:cache2}
\end{align}
Then, we use this bound in \eqref{eq:rate_1} as follows:
\begin{align}
&R^*_{\text{worst-case}}\nonumber\\
&\geq \frac{N(K-j)}{Kj}-\sum_{\ell\in[0:K-1]\setminus \{j-1,j\}}\frac{\ell-j+1}{j(\ell+1)}x_{\ell} -\frac{1}{j(j+1)} x_{j} \nonumber \\ 
&\overset{(a)}{\geq}
\frac{N(K-j)}{Kj}-\sum_{\ell\in[0:K-1]\setminus \{j-1,j\}}\frac{\ell-j+1}{j(\ell+1)}x_{\ell} -\frac{K\left(S-j\frac{N}{K}\right)}{j(j+1)} 
+\sum_{\ell\in\left[0:K-1\right]\setminus \{j-1,j\}}
\frac{\left(\ell-j+1\right)}{j(j+1)} x_{\ell}
\nonumber\\
&=\frac{N(K-j)}{Kj} -\frac{K\left(S-j\frac{N}{K}\right)}{j(j+1)}   +\sum_{\ell\in\left[0:K-1\right]\setminus \{j-1,j\}}\lambda_{\ell} x_{\ell}\nonumber\\
&\overset{(b)}{\geq} \frac{N(K-j)}{Kj}-\frac{K\left(S-j\frac{N}{K}\right)}{j(j+1)}, \label{eq:lower_in1}
\end{align}
where $(a)$ follows from \eqref{eq:cache2} where the coefficient of $x_{j}$ in the above equation is negative for all ${j\in\left[1:K-1\right]}$, and $(b)$ since the coefficients, $\lambda_{\ell}$, of $x_{\ell}>0$ are positive for ${\ell\in\left[0:K-1\right]\setminus \{j-1,j\}}$, which can be shown in the following:
\begin{align}
\lambda_{\ell}&=\frac{\ell-j+1}{j(j+1)}-\frac{\ell-j+1}{j(\ell+1)} = \frac{(\ell-j)(\ell-j+1)}{j(j+1)(\ell+1)},
\end{align}
where ${j,j+1,\ell+1>0}$ for $\ell,j\geq 0$, then we only need to show that $(\ell-j)(\ell-j+1)>0$ for ${\ell\in\left[0:K-1\right]\setminus \{j-1,j\}}$. This can be easily checked by assuming $y=\ell-j$, then $y(y+1)$ is only negative in the range $-1<y<0$, or $j-1<\ell<j$, which is not in the range of $\ell$ in the above summation.

The lower bound in \eqref{eq:lower_in1} is a linear function of $S$ for a fixed value of $j\in[1:K-1]$ passing through the points $\left(S_1=j\frac{N}{K},R_1= \frac{N(K-j)}{Kj}\right)$, and $\left(S_2=(j+1)\frac{N}{K},R_2= \frac{N(K-j-1)}{K(j+1)}\right)$. We obtain $K-1$ such lower bounds for every $j\in[1:K-1]$, which eventually give the lower bound over $R^*_{\text{worst-case}}$ as the lower convex envelope of the following $K$ points:
\begin{align}
\left(S=m\frac{N}{K},\:R_{\text{worst-case}}^{\text{lower}}= \frac{N(K-m)}{Km}\right), \quad \forall m\in[1:K],
\end{align}
which completes the proof of Theorem~\ref{thm2}.

\section{Maximum Gap Analysis (Proof of Theorem~\ref{thm3})}
\label{sec:gap-analysis}
To characterize the maximum gap between the obtained bounds over $R_{\text{worst-case}}^*$, we first express the storage $S$ as multiples of $\frac{N}{K}$, i.e., $S = m \frac{N}{K}$, for $1\leq m\leq K$.
From Theorem~\ref{thm1} for $\left(1+i\frac{K-1}{K}\right)\leq m\leq \left(1+(i+1)\frac{K-1}{K}\right)$, and $i\in[0:K-1]$, we can achieve the line joining the two points $\big(m=\left(1+i\frac{K-1}{K}\right)$, $R= \frac{N(K-i)}{K(i+1)}\big)$, and $\big(m=\left(1+(i+1)\frac{K-1}{K}\right), R= \frac{N(K-i-1)}{K(i+2)}\big)$, which gives the following upper bounds over $R_{\text{worst-case}}^*$ as
\begin{align}
&\frac{R_{\text{worst-case}}^{\text{upper}}- \frac{N(K-i)}{K(i+1)}}{m-\left(1+i\frac{K-1}{K}\right)} = \frac{\frac{N(K-i-1)}{K(i+2)}- \frac{N(K-i)}{K(i+1)}}{\left(1+(i+1)\frac{K-1}{K}\right)-\left(1+i\frac{K-1}{K}\right)} = -\frac{N(K+1)}{(K-1)(i+1)(i+2)},\nonumber\\
&R_{\text{worst-case}}^{\text{upper}} = \frac{N(K-i)}{K(i+1)} -\frac{N(K+1)}{(K-1)(i+1)(i+2)}\left( m-1-i\frac{K-1}{K}\right),\label{eq:upper-bound-gap}
\end{align}
for $\left(1+i\frac{K-1}{K}\right) \leq m\leq \left(1+(i+1)\frac{K-1}{K}\right)$, and $i\in[0:K-1]$.
Also, from \eqref{eq:lower_in1} we have the lower bounds over $R_{\text{worst-case}}^* $ as
\begin{align}
R_{\text{worst-case}}^{\text{lower}} = \frac{N(K-j)}{Kj}-\frac{N\left(m-j\right)}{j(j+1)},\label{eq:lower-bound-gap}
\end{align}
for $j \leq m \leq j+1$, and $j\in[1:K-1]$.

Due to the properties of the piece-wise linear functions, we obtain the maximum gap at one of the following $2K-1$ values of $m$: $m=j$ , for $j \in[1:K-1]$, or $m=1+i\frac{K-1}{K} $, for $i\in[1:K]$.

\subsection{Gap Analysis for $m =1+i\frac{K-1}{K}$, and $i\in[1:K]$}
We first notice that when $i\in[1:K]$, then $i \leq m\leq i+1$. Therefore, the lower bound  $R_{\text{worst-case}}^{\text{lower}}$ at $m=1+i\frac{K-1}{K}$ follows from \eqref{eq:lower-bound-gap} where $j=i$:
\begin{align}
R^{\text{lower}}_{\text{worst-case}}\left(m=1+i\frac{K-1}{K}\right)& = \frac{N(K-i)}{Ki}-\frac{N\left(1+i\frac{K-1}{K}-i\right)}{i(i+1)}\nonumber\\
&=\frac{N(K-i)}{Ki}-\frac{N\left(K-i\right)}{Ki(i+1)} = \frac{N(K-i)}{K(i+1)},\label{eq:lower-bound-gap2}
\end{align}
which matches  the upper bound in \eqref{eq:upper-bound-gap}, when $m =1+i\frac{K-1}{K}$. Therefore, the proposed achievable scheme is optimal for $m =1+i\frac{K-1}{K}$, where $i \in [1:K]$.

\subsection{Gap Analysis for $m =j$, and $j\in[1:K-1]$}
We first notice that when $m=j$, then $\left(1+(j-1)\frac{K-1}{K}\right) \leq m\leq\left(1+j\frac{K-1}{K}\right)$ for $j \in [1:K-1]$. Therefore, the upper bound  $R_{\text{worst-case}}^{\text{upper}}$ at $m=j$ follows from \eqref{eq:upper-bound-gap} where $i=j-1$:
\begin{align}
R_{\text{worst-case}}^{\text{upper}}(m=j) &= \frac{N(K-j+1)}{Kj} -\frac{N(K+1)}{j(K-1)(j+1)}\left( j-1-(j-1)\frac{K-1}{K} \right)\nonumber\\
&= \frac{N(K-j+1)}{Kj} -\frac{N(K+1)(j-1)}{jK(K-1)(j+1)}\nonumber\\
&=\frac{N(K-j)}{Kj} +\frac{N}{Kj}\left(1- \frac{(K+1)(j-1)}{(K-1)(j+1)}\right) ,\label{eq:upper-bound-m1}
\end{align}
whereas the lower bound on $R_{\text{worst-case}}^*(m=j)$ follows from \eqref{eq:lower-bound-gap} directly as follows:
\begin{align}
R_{\text{worst-case}}^{\text{lower}}(m=j)& = \frac{N(K-j)}{Kj}.\label{eq:lower-bound-m1}
\end{align}
Hence, the ratio between the bounds follows by dividing \eqref{eq:upper-bound-m1} by \eqref{eq:lower-bound-m1} as
\begin{align}
\label{eq:gap-ratio}
\frac{R_{\text{worst-case}}^{\text{upper}}}{R_{\text{worst-case}}^{\text{lower}}}=1 + \frac{1}{K-j}\left(1 -\frac{(K+1)(j-1)}{(K-1)(j+1)}\right) =
1+ \frac{2}{(K-1)(j+1)}, \quad j\in[1:K-1]. 
\end{align}
We notice that ratio in \eqref{eq:gap-ratio} is a decreasing function in $j$. Therefore, we obtain the maximum gap with the smallest value of $j$, i.e., $j=1$, which is the no excess storage case $S =\frac{N}{K}$.
Applying $j=1$ in \eqref{eq:gap-ratio}, we obtain the maximum gap ratio as follows:
\begin{align}
\frac{R_{\text{worst-case}}^{\text{upper}}}{R_{\text{worst-case}}^{\text{lower}}}= 1+ \frac{1}{(K-1)} = \frac{K}{K-1},
\end{align}
 which completes the proof of Theorem~\ref{thm3}.

\section{Closing the Gap (Proof of Theorem~\ref{thm4})}
\label{sec:close-gap}

%\subsection{Variation of the Structural Invariant Placement/Update}
Based on Example~\ref{ex3}, we introduce the general achievability to close the gap for some storage values. In particular, we consider the storage values $S=m\frac{N}{K}$, for $m\in\{1,K-2,K-1\}$, any number of workers $K$, and any number of data points $N$. We also consider \textit{a variation of the structural invariant storage placement}. Every data point $D_i$ for $i\in[1:N]$ is now partitioned into $\binom{K-1}{m-1}$ non-overlapped sub-points. As suggested in the Example~\ref{ex3}, the labeling for the data sub-points is changing over time as follows: At the time epoch $t$, the data sub-points of the data point $D_i$ are labeled by unique subsets $\mathcal{W}_t\subseteq[1:K]\setminus\delta_t(i)$, where $\delta_t(i)$ is the index of the worker assigned to the data point $D_i$ at time $t$. Every worker stores the assigned data points as well as the data sub-points having the worker's index in their labels.
Therefore, any partition of a data point is stored at total number of $m$ workers; $m-1$ workers are storing it as excess storage, and $1$ worker is assigned the whole corresponding data point for processing.

 For invariant structure placement, the change in the labels at time $t+1$ is required only for the data sub-points $D_{i,\mathcal{W}_t}$, where $\delta_{t+1}(i) \in \mathcal{W}_t$, by replacing $\delta_{t+1}(i)$ with $\delta_{t}(i)$ in the label $\mathcal{W}_t$ to obtain the newly labeled sub-points $D_{i,\mathcal{W}_{t+1}}$ where  $\delta_{t}(i)\in \mathcal{W}_{t+1}$.
Therefore, these newly labeled sub-points are required now to be stored  in the excess storage of the worker $w_{\delta_{t}(i)}$, which already has the data point $D_{i}$ fully available at its cache at time $t$. Then, there is no need to deliver these sub-points, and the storage structure can be preserved.

The number of data sub-points of the point $D_i$ needed to be stored at worker $w_k$ at time $t$, where $\delta_{t}(i)\neq k$ ($D_i\not\in \A^t(k)$) is $\binom{K-2}{m-2}$ of size $d/\binom{K-1}{m-1}$ bits each. In total, we have $(K-1)\frac{N}{K}$ such data points where $\delta_{t}(i)\neq k$ for the worker $w_k$. Therefore, the worker $w_k$ needs to store in the excess storage data of total size 
\begin{align}
(K-1)\frac{N}{K} \times \binom{K-2}{m-2} \times \frac{d}{\binom{K-1}{m-1}} = (m-1)\frac{N}{K}d = \left(S-\frac{N}{K}\right)d,
\end{align}
which satisfies the memory constraint.

Before we proceed to the delivery mechanism we define $\A^{t,t+1}(i;j)= \A^t(i) \cap \A^{t+1}(j)$ as the part of data assigned to $w_j$ at time $t+1$ which was also assigned to $w_i$ at time $t$. 
Furthermore, we define $S_{i,j}^{t,t+1} = \vert \A^{t,t+1}(i;j) \vert $ as the number of such data points. 
Therefore, the data batches $\A^{t}(i)$ and $\A^{t+1}(i)$ can then be written as
\begin{align}
\A^{t}(i) = \cup_{j=1}^K \A^{t,t+1}(i;j), \qquad \A^{t+1}(i) = \cup_{j=1}^K \A^{t,t+1}(j;i).
\end{align}
Since we have the size of the data batches is fixed as $\vert \A^{t}(i)\vert = \vert \A^{t+1}(i)\vert = \frac{N}{K}$, we obtain the following property:
\begin{align}
\label{eq:data-conservation}
 \sum_{j=1}^K S^{t,t+1}_{i,j}= \sum_{j=1}^K S^{t,t+1}_{j,i} = \frac{N}{K}.
\end{align}

\begin{remark}[Data-flow Conservation Property] \normalfont
We next state an important property satisfied by any shuffle, namely the data-flow conservation property:
\begin{equation}
\label{eq:dataflow-conservation}
\underset{j\in [1:K]\setminus i}{\sum}S^{t,t+1}_{i,j}= \underset{j\in [1:K]\setminus i}{\sum}S^{t,t+1}_{j,i}.
\end{equation}
The proof of this property follows directly by subtracting $S^{t,t+1}_{i,i}$ from the two sides of \eqref{eq:data-conservation}, and has the following interesting interpretation: the total number of new data points that need to be delivered to worker $w_i$ (and are present elsewhere), i.e., the RHS of \eqref{eq:dataflow-conservation}, is exactly equal to the total number of data points that worker $w_i$ has that are desired by the other workers, which is the LHS of \eqref{eq:dataflow-conservation}. 
\end{remark}

The rate $R_{\pi_t,\pi_{t+1}}$ is characterized by $S_{i,j}^{t,t+1}$ for $i,j\in[1:K]$. These shuffling parameters can be held in the matrix $S^{t,t+1} =[S_{i,j}^{t,t+1}]_{i,j}$, which can be named as the \textit{shuffling matrix}.
Moreover, according to the property in \eqref{eq:data-conservation} the shuffling matrix $S^{t,t+1}$ is a $K\times K$ square matrix with the row sum equals the column sum equals $\frac{N}{K}$. In the following discussion, we drop the superscript $t,t+1$ from $\A^{t,t+1}(i;j)$, and $S^{t,t+1}_{i,j}$ for short notation.

\begin{lemma}
\label{lem:worst-case}
The rate achieved when the diagonal entries of the shuffling matrix are greater than zero, i.e., when $S_{i,i}>0$ for $i\in[1:K]$,  is no larger than the worst-case rate.
\end{lemma}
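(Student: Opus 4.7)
The inequality $R_{(\pi_t,\pi_{t+1})}\leq R_{\text{worst-case}}^{*}$ is immediate from the definition $R_{\text{worst-case}}^{*}=\max_{(\pi_t,\pi_{t+1})}R^{*}_{(\pi_t,\pi_{t+1})}$, so the lemma holds for \emph{any} specific shuffle, in particular one with $S_{i,i}>0$ for all $i$. The useful content, which I plan to make explicit, is the stronger fact that the worst case is actually \emph{attained} by a shuffle with $S_{i,i}=0$ for every $i$, allowing the subsequent analysis to restrict to such shuffles. My plan is to prove this strengthening via a swap-reduction: given any $\pi$ with positive diagonal sum, produce a new shuffle $\pi'$ with strictly smaller diagonal sum and optimal rate no smaller, then iterate until the diagonal vanishes.

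To construct the swap, I fix any $i$ with $S_{i,i}>0$. The data-flow conservation property~\eqref{eq:dataflow-conservation}, together with $S_{i,i}<N/K$ (the nontrivial case), guarantees indices $j\neq i$ with $S_{j,i}>0$ and $k\neq i$ with $S_{i,k}>0$. For $K\geq 3$ a short counting argument — using that rows and columns outside index $i$ must still sum to $N/K$ while placing positive mass into and out of column/row $i$ — produces such $j,k$ with $S_{j,k}>0$, possibly after one auxiliary swap through an intermediate worker; the $K=2$ case is immediate. Picking $D_1\in\mathcal{A}(i;i)$ and $D_2\in\mathcal{A}(j;k)$ and exchanging their time-$(t+1)$ destinations yields a shuffle $\pi'$ with $S'_{i,i}=S_{i,i}-1$, $S'_{j,k}=S_{j,k}-1$, $S'_{i,k}=S_{i,k}+1$, and $S'_{j,i}=S_{j,i}+1$, leaving row and column sums unchanged and the storage $Z^{t}_k$ for every $k$ untouched.

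The core step is to show $R^{*}_{\pi'}\geq R^{*}_{\pi}$. I plan to argue this by converting any optimal scheme $X_{\pi'}$ for $\pi'$ into a valid scheme for $\pi$ at no greater rate. Worker $w_i$'s demand under $\pi$ equals $(\mathcal{A}^{t+1}_{\pi'}(i)\setminus\{D_2\})\cup\{D_1\}$: the first part is decodable from $X_{\pi'}$ and $Z^{t}_i$ by the optimality for $\pi'$, while $D_1\in\mathcal{A}^{t}(i)\subseteq Z^{t}_i$ is already locally stored, so decodability at $w_i$ holds. For $w_k$ I would invoke the structural-invariant placement of Appendix~\ref{sec:upper-bound}: every data point carries the same combinatorial partition $\{D_{i,\mathcal{W}}\}$, and each worker stores an identical sub-point pattern, so a relabeling $D_1\leftrightarrow D_2$ preserves the storage structure exactly, and any scheme delivering $D_1$ to $w_k$ reinterprets as one delivering $D_2$ after this relabeling. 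Iterating the swap terminates in a zero-diagonal shuffle $\pi^{\star}$ with $R^{*}_{\pi}\leq R^{*}_{\pi^\star}\leq R_{\text{worst-case}}^{*}$.

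The main obstacle is the rate-preservation step at $w_k$, since the $\pi'$-scheme literally lets $w_k$ decode $D_1$ rather than $D_2$, and the placement-symmetry argument requires careful bookkeeping of which specific sub-points of $D_1$ and $D_2$ sit at $w_k$ before vs.\ after the swap. As a fallback I would augment $X_{\pi'}$ by the single coded symbol $D_1\oplus D_2$ and show its rate contribution is absorbed because the stripped sub-point of $D_1$ at $w_k$ is never demanded under $\pi$; a cut-set entropy bound analogous to the chain-rule derivation leading to~\eqref{eq:bound1} would give $H(X_{\pi'})\geq H(X_\pi)$ directly as an alternative.
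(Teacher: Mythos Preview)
Your first sentence already proves the lemma as stated, and that is essentially the paper's proof too. The key point you miss is that the lemma concerns the rate of the \emph{specific achievable scheme} being constructed in Appendix~\ref{sec:close-gap}, not the optimal per-shuffle rate $R^{*}_{\pi}$. The paper's argument is one line: when $S_{i,i}>0$, those $S_{i,i}$ data points already lie in $Z^{t}_{i}$ and remain assigned to $w_i$, so the scheme transmits nothing on their behalf; fewer points enter the delivery, and the scheme's rate on this shuffle cannot exceed its rate on a zero-diagonal shuffle. That is all Corollary~\ref{col:WC} needs.

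Your swap-reduction instead targets a stronger and different statement --- monotonicity of the \emph{optimal} rate $R^{*}_{\pi}$ under diagonal reduction --- and, as you yourself flag, has a real gap at worker $w_k$. The $\pi'$-scheme delivers $D_1$ to $w_k$, not $D_2$, and the proposed relabeling $D_1\leftrightarrow D_2$ need not preserve storage: in the placement of Appendix~\ref{sec:close-gap} the sub-point labels of a point $D$ are subsets of $[1{:}K]\setminus \delta_t(D)$, so $D_1$ (owned by $w_i$ at time $t$) and $D_2$ (owned by $w_j$) carry labels drawn from different ground sets whenever $i\neq j$, and are not interchangeable at $w_k$. The fallback of appending $D_1\oplus D_2$ adds a full symbol without an offsetting saving. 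None of this machinery is required; the paper's direct observation about the achievable scheme suffices.
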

\begin{proof}
The proof is straight forward, where $S_{i,i}$ is the number of data points that are needed by worker $w_i$ at times $t$ and $t+1$. Therefore, they remain in the storage of the worker $w_i$ and do not participate in the communication process. If $S_{i,i}>0$, then less number of data points are needed by worker $w_i$ and the rate is no larger than the worst-case rate, which completes the proof of the lemma.
\end{proof}
\begin{corollary}
\label{col:WC}
For the worst-case rate analysis, we can assume that  every worker is assigned only new data points, i.e., $S_{i,i} = 0$. Hence, the data conservation property in \eqref{eq:dataflow-conservation} can be written as
\begin{align}
\label{eq:data-conservation-WC}
 \sum_{j\in[1:K]\setminus j} S_{i,j}= \sum_{j\in[1:K]\setminus j} S_{j,i} = \frac{N}{K}.
\end{align}
\end{corollary}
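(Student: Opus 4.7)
The plan is to derive Corollary \ref{col:WC} as a direct reformulation of Lemma \ref{lem:worst-case}. The first step is to note that the optimal worst-case rate $R^*_{\text{worst-case}}$ is defined as a maximum over all shuffle pairs $(\pi_t, \pi_{t+1})$, and Lemma \ref{lem:worst-case} already asserts that any shuffle whose shuffling matrix has a positive diagonal entry cannot strictly exceed this maximum. Consequently, without loss of generality, one may restrict attention in the worst-case analysis to shuffles whose shuffling matrices satisfy $S_{i,i} = 0$ for every $i \in [1:K]$, since a diagonal entry corresponds to a data point already residing at its destination worker and can only reduce the delivery burden relative to a shuffle that moves that point elsewhere.

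The second step is to specialize the conservation property \eqref{eq:data-conservation} to this subclass. The general identity reads $\sum_{j=1}^K S_{i,j} = \sum_{j=1}^K S_{j,i} = N/K$. Setting $S_{i,i}=0$ makes the $j=i$ term vanish from both the row sum and the column sum, so dropping that term does not change the value. This produces exactly the restricted equality \eqref{eq:data-conservation-WC}, completing the proof.

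The only conceptual subtlety, and thus the main obstacle, is justifying the ``without loss of generality'' reduction itself: one must rule out the possibility that the worst-case maximum is approached only along shuffles with some positive diagonal entry. This is immediate because, for any fixed $\pi_t$, the set of compatible $\pi_{t+1}$ is finite and its subset of derangements of $\pi_t$ (i.e., permutations with no fixed point, such as the cyclic shift used in Example \ref{ex2}) is non-empty. Combined with Lemma \ref{lem:worst-case}, this guarantees that the maximum defining $R^*_{\text{worst-case}}$ is genuinely attained at an all-zero-diagonal shuffle, and no further calculation is needed beyond the substitution in the previous paragraph.
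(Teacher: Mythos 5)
Your proposal is correct and follows essentially the same route as the paper: the corollary is intended as an immediate consequence of Lemma~\ref{lem:worst-case} (restrict to zero-diagonal shuffling matrices, which exist since derangements exist) together with the row/column-sum property in \eqref{eq:data-conservation}, from which dropping the vanishing $j=i$ term yields \eqref{eq:data-conservation-WC}. Your extra remark about the maximum being attained on a derangement just makes explicit what the paper leaves implicit, so nothing further is needed.
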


\subsection{Closing the Gap for $m=1$}
We consider the storage value $m=1$ ($S=\frac{N}{K}$), which is the no-excess storage case considered in our previous work \cite{Shuffling/wired/master2} for any arbitrary shuffle. One can easily show that the pair $(S=\frac{N}{K},R_{\text{worst-case}}=(K-1)\frac{N}{K})$ is achievable by sending $K-1$ linear independent combinations of the $K$ data batches at time $t$, i.e., $\A^t(1),\ldots,\A^t(K)$, to satisfy any data assignment at time $t+1$. Since every worker $w_k$ has already the data batch $\A^t(k)$ already stored in its cache, it can solve for the remaining $K-1$ batches and obtain the whole data-set to store the new data assignment.

\subsection{Closing the Gap for $m=K-1$}
According to the adopted placement strategy, whenever a new data point is needed at any worker, it already has $\binom{K-2}{m-2}$ out of the total $\binom{K-1}{m-1}$ partitions, that is for the storage value $m=K-1$ ($S=(K-1)\frac{N}{K}$), only $1$ out of $K-1$ sub-points is needed.  
Furthermore, this needed data sub-point is already available at the remaining $m=K-1$ workers.
Therefore, for the $S_{i,j}$ data points assigned to worker $w_j$ and available at $w_i$, i.e., $\A(i;j)$, the data sub-batch $\A_{[1:K]\setminus \{i,j\}}(i;j)$ is the only part needed to be transmitted to $w_j$, which is available at all the workers except $w_j$.
For the worst-case scenario according to Corollary~\ref{col:WC}, we assume every worker is assigned completely new data batch, i.e., $S_{i,i}=0$ for all $i\in[1:K]$. Therefore, we can write the total part needed to be transmitted to $w_j$ as
$\cup_{i\in[1:K]\setminus j} \A_{[1:K]\setminus \{i,j\}}(i;j)$, which consists of $\frac{N}{K}$ data sub-points each of size
 $\frac{d}{K-1}$ each, and the size of $\cup_{i\in[1:K]\setminus j} \A_{[1:K]\setminus \{i,j\}}(i;j)$ (normalized by $d$) is 
\begin{align}
\label{worst-case_K-1_GC}
\vert \cup_{i\in[1:K]\setminus j} \A_{[1:K]\setminus \{i,j\}}(i;j)\vert = \frac{N}{K(K-1)}.
\end{align}
In the delivery phase, we can send the following coded data batch:
\begin{align}
\label{eq:coded-tx-K-1}
\bigoplus_{j\in[1:K]} \cup_{i\in[1:K]\setminus j} \A_{[1:K]\setminus \{i,j\}}(i;j),
\end{align}
which is useful for the $K$ workers in the same time as follows: $w_{k}$ has $\bigoplus_{j\in[1:K]\setminus k} \cup_{i\in[1:K]\setminus j} \A_{[1:K]\setminus \{i,j\}}(i;j)$ which it can subtract to recover the needed part $\cup_{i\in[1:K]\setminus k} \A_{[1:K]\setminus \{i,k\}}(i;k)$. Moreover, the size of the coded transmission in \eqref{eq:coded-tx-K-1} is the same as the size of the uncoded elements given in \eqref{worst-case_K-1_GC}  as $\frac{N}{K(K-1)}$,
which achieves the pair
$(S=(K-1)\frac{N}{K},R_{\text{worst-case}}=\frac{N}{K(K-1)})$.

\subsection{Closing the Gap for $m=K-2$}
For the storage point $m=K-2$ ($S=(K-2)\frac{N}{K}$), whenever a data point is newly assigned to a worker, it already has $\binom{K-2}{K-4}=\frac{(K-2)(K-1)}{2}$ out of $\binom{K-1}{K-3}=\frac{(K-1)(K-2)}{2}$ parts, and hence only $K-2$ parts are needed of size $\frac{2d}{(K-1)(K-2)}$ bits each.
%Each of these needed parts are available at $m=K-2$ workers.
We also assume the worst-case scenario, where according to Corollary~\ref{col:WC} every worker is assigned completely new data batch, i.e., $S_{i,i}=0$ and worker $w_i$ needs $\frac{N}{K}$ new data points for all $i\in[1:K]$.
Therefore, the total number of sub-points needed by every worker is $(K-2)\frac{N}{K}$.

\noindent $\bullet$ 
Consider the data sub-points which are considered interference to $w_k$ (neither available nor needed).
First, $w_k$ does not need nor previously assigned the data points in the batches $\A(i;j)$ where $i\neq j$ and $i,j \in[1:K]\setminus k$ (potential interference). However, not the whole data points in  $\A(i;j)$ are sent to $w_j$, since $w_j$ has already some parts of them, which are given by $\A_{\mathcal{W}}(i;j)$, where $j\in \mathcal{W}$ and $\vert\mathcal{W}\vert =K-3$.
Moreover, $w_k$ also has some parts available in its cache of $\A(i;j)$ given by $\A_{\mathcal{W}}(i;j)$, where $k\in \mathcal{W}$ (do not cause interference).
As a summary, the part of $\A(i;j)$, where $i\neq j$ and $i,j \in[1:K]\setminus k$, that is considered interference to $w_k$ is given by $\A_{[1:K]\setminus\{i,j,k\}}(i;j)$, and hence the total interference faced by $w_k$ is
\begin{align}
\mathcal{I}(k)=\underset{\substack{i,j\in[1:K]\setminus k\\ i\neq j}}{\cup} \A_{[1:K]\setminus\{i,j,k\}}(i;j).
\end{align}

\noindent $\bullet$ Next, we organize these interference sub-batches according to the workers that need them as in Figure~\ref{fig:interference-align}a. Worker $w_j$, where $j\in[1:K]\setminus k$, needs the following sub-batches causing interference to $w_k$:
\begin{align}
\mathcal{I}(j;k)=\underset{i\in[1:K]\setminus \{k,j\}}{\cup} \A_{[1:K]\setminus\{i,j,k\}}(i;j),
 \end{align}
  which consists of data sub-points of size $\frac{2d}{(K-1)(K-2)}$ each and total number given by
\begin{align}
\label{eq:I(j;k)}
I_{j;k}=
%\vert \mathcal{I}(j;k)\vert=
\sum_{i\in[1:K]\setminus\{k,j\}}S_{i,j} = N/K - S_{k,j}= \sum_{i\in[1:K]\setminus\{k,j\}}S_{k,i}.
\end{align}
Note that  $\mathcal{I}(j;k)$ serves as: a) interference to $w_k$, b) useful for $w_j$; and c) available at all the remaining workers.
Also, the total interference faced by $w_k$ can be written as $\mathcal{I}(k)= \cup_{j\in[1:K]\setminus k}\mathcal{I}(j;k)$ which consists of data sub-points of size $\frac{2d}{(K-1)(K-2)}$ each and total number given by
\begin{align}
I_k = \sum_{j\in[1:K]\setminus k}I_{j;k} = \sum_{j\in[1:K]\setminus k} (N/K - S_{k,j}) = (K-2)\frac{N}{K}.
\end{align}

\begin{figure}[t]
  \begin{center}
  \includegraphics[width=0.8\columnwidth]{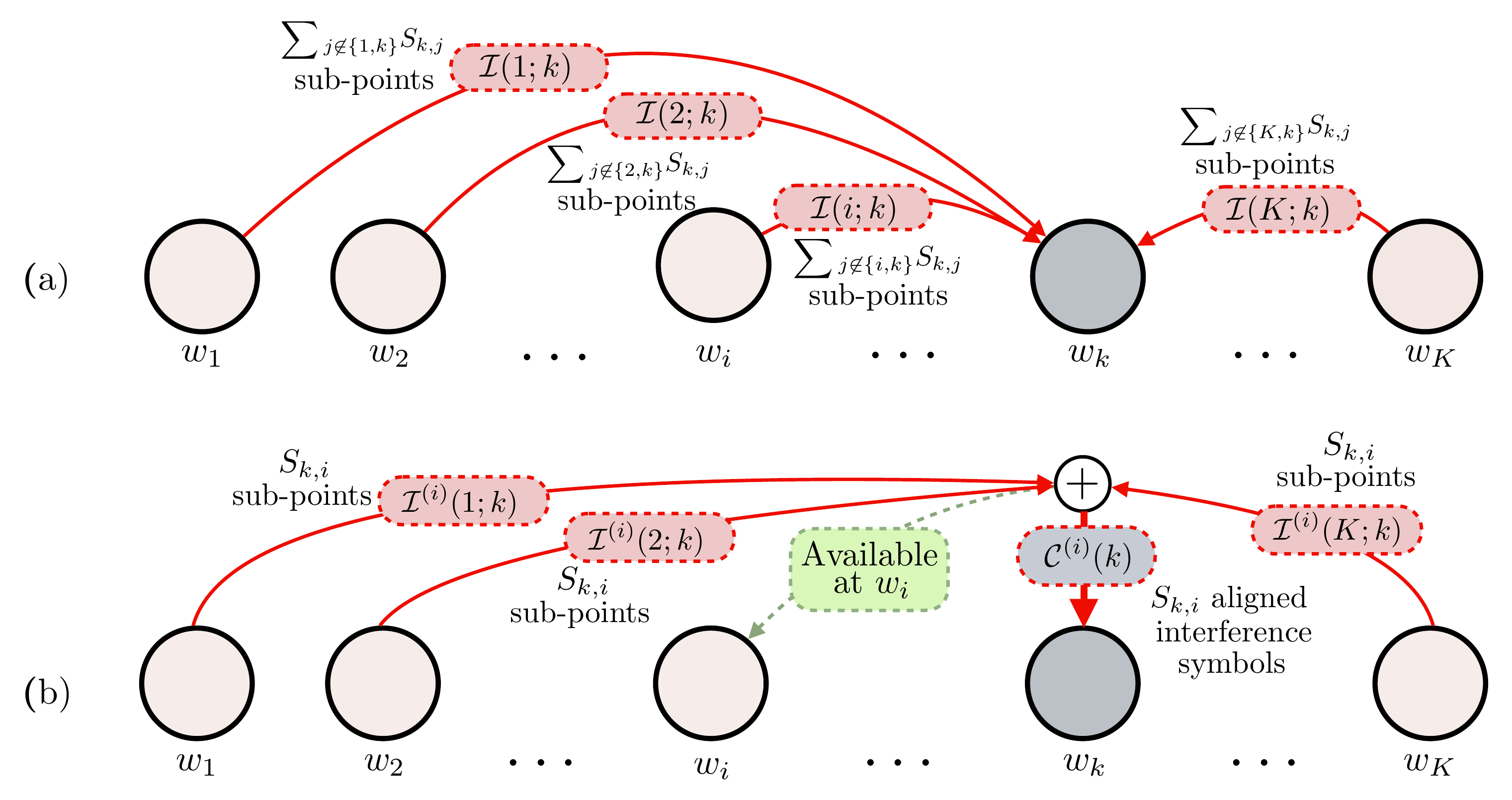}
\caption{In $(a)$, the  interference sub-batches seen by $w_k$ are organized according to the workers that need them, such that $\mathcal{I}(i;k)$ for $i\neq k$ is the data needed to be delivered to $w_i$ while causing interference to $w_k$. In $(b)$, we pick $S_{k,i}$ sub-points from each $\mathcal{I}(j;k)$ (labeled as $\mathcal{I}^{(i)}(j;k)$) where $j\not\in\{i,k\}$, and align them into $S_{k,i}$ coded symbols labeled by the set $\mathcal{C}^{(i)}(k)$, which serve as aligned interference for $w_k$, available at $w_i$, and useful for the remaining workers.
\label{fig:interference-align}}
  \end{center}
\end{figure}

\noindent $\bullet$ Following Example~\ref{ex3}, we apply a similar interference alignment argument.  We first break $\mathcal{I}(j;k)$ for every $j\in[1:K]\setminus k$ into $K-2$ partitions labeled as $\mathcal{I}^{(i)}(j;k)$ for $i\in[1:K]\setminus \{j,k\}$. The number of sub-points in $\mathcal{I}^{(i)}(j;k)$ is $S_{k,i}$ which satisfies the total size of $\mathcal{I}(j;k)$ given in \eqref{eq:I(j;k)}. As shown in Figure~\ref{fig:interference-align}b, we generate $S_{k,i}$ coded sub-points for every $i\in[1:K]\setminus k$ as follows:
\begin{align}
S_{k,i} \text{ coded sub-points}: \quad \mathcal{C}^{(i)}(k) = \bigoplus_{j\in [1:K]\setminus \{k,i\}}\mathcal{I}^{(i)}(j;k), \quad \forall i\in[1:K]\setminus k.
\end{align}
Note that $\mathcal{C}^{(i)}(k)$ is a coded sub-batch serves as: a) aligned interference to $w_k$, b) available at $w_i$ as $j\neq i$ in the above summation; and c) useful for all the remaining workers as follows: worker $w_{\ell}$ for $\ell \not\in \{i, k\}$ has $\bigoplus_{j\in [1:K]\setminus \{k,i,\ell\}}\mathcal{I}^{(i)}(j;k)$ so it can subtract from $\mathcal{C}^{(i)}(k)$ to get the needed part $\mathcal{I}^{(i)}(\ell;k)$.

\noindent $\bullet$ The total size of $\cup_{i\in[1:K]\setminus k}\mathcal{C}^{(i)}(k)$  is  $\sum_{i\in[1:K]\setminus k} S_{k,i} =\frac{N}{K}$ coded sub-points, which aligns the $I_k=(K-2)\frac{N}{K}$ total interference sub-points seen by $w_k$, i.e., $\mathcal{I}(k)$ into $\frac{N}{K}$ coded sub-points.
In the same time, these $\frac{N}{K}$ coded sub-points serve, for each remaining worker $w_j$ for $j\neq k$, as $\sum_{i\in[1:K]\setminus \{j,k\}}S_{k,i}= \frac{N}{K} - S_{k,j}$ useful sub-points given by $\cup_{i\in[1:K]\setminus \{k,j\}}\mathcal{C}^{(i)}(k)$, while the remaining $S_{k,j}$ sub-points, given by $\mathcal{C}^{(j)}(k)$, are available at $w_j$'s cache.

\noindent $\bullet$ By aligning all the interference seen by all the workers, i.e., generating the coded batches $\cup_{i\in[1:K]\setminus k}\mathcal{C}^{(i)}(k)$ for all $k\in[1:K]$, we get a total number of $N$ coded sub-points seen as follows by every worker $w_j$: a) $\frac{N}{K}$ aligned interference coded sub-points, b) $\sum_{k\in[1:K]\setminus j} S_{k,j} =\frac{N}{K}$ available sub-points; and c) $\sum_{k\in[1:K]\setminus j}\left(\frac{N}{K} - S_{k,j}\right) = (K-2)\frac{N}{K}$ useful sub-points, which satisfies the total number of sub-points needed in the worst case as discussed in the beginning. Since out of all the $N$ coded sub-points every worker already has $\frac{N}{K}$ of them, then the $N$ coded sub-points can be sent in only $(K-1)\frac{N}{K}$ linear independent combinations of size $\frac{2d}{(K-1)(K-2)}$ each, where the interference sub-points occupy $\frac{N}{K}$ dimensions, while the useful sub-points occupy $(K-2)\frac{N}{K}$ dimensions. As a result, the total rate achieved is $\frac{2N}{K(K-2)}d$ bits, which achieves the pair
$(S=(K-2)\frac{N}{K},R_{\text{worst-case}}=\frac{2N}{K(K-2)})$.

Now that we have closed the gap between the bounds in Theorems~\ref{thm1} and \ref{thm2} for $S=m\frac{N}{K}$, where $m\in\{1,K-2,K-1\}$, which covers all the storage values for $K<5$, while for $K\geq 5$ we can do the same analysis as in Section~\ref{sec:gap-analysis} to obtain the gap ratio similar to \eqref{eq:gap-ratio} as follows:
\begin{align}
\label{eq:gap-ratio2}
\frac{R_{\text{worst-case}}^{\text{upper}}}{R_{\text{worst-case}}^{\text{lower}}}=1 + \frac{1}{K-j}\left(1 -\frac{(K+1)(j-1)}{(K-1)(j+1)}\right) =
1+ \frac{2}{(K-1)(j+1)}, \quad j\in[2:K-1],
\end{align}
which is maximized for $j=2$ to obtain the maximum gap ratio as $1+ \frac{2}{(K-1)(3)}= \frac{K-\frac{1}{3}}{K-1}$ for $K\geq 5$ which completes the proof of Theorem~\ref{thm4}.

\end{appendices}

\end{document}